\providecommand{\U}[1]{\protect\rule{.1in}{.1in}}
\newtheorem{theorem}{Theorem}
\newtheorem{acknowledgement}[theorem]{Acknowledgement}
\newtheorem{lemma}[theorem]{Lemma}
\newtheorem{proposition}[theorem]{Proposition}
\newenvironment{proof}[1][Proof]{\noindent\textbf{#1.} }{\ \rule{0.5em}{0.5em}}
\begin{document}

\title{\textbf{Gravitational redshift/blueshift of light emitted by geodesic test particles,  frame-dragging and pericentre-shift effects, in the Kerr-Newman-de Sitter and Kerr-Newman black hole geometries}}
\author{G. V. Kraniotis \footnote{email: gkraniotis@uoi.gr}\\
University of Ioannina, Physics Department \\ Section of
Theoretical Physics, GR- 451 10, Greece \\
}

 \maketitle

\begin{abstract}
We investigate the redshift and blueshift of light emitted by timelike geodesic particles in orbits around a Kerr-Newman-(anti) de Sitter (KN(a)dS) black hole.
Specifically we compute the redshift and blueshift of photons that are emitted by geodesic massive particles and travel along null geodesics towards a distant observer-located at a finite distance from the KN(a)dS black hole. For this purpose we use the Killing-vector formalism and the associated first integrals-constants of motion. We consider in detail stable timelike equatorial circular orbits of stars and express their corresponding redshift/blueshift in terms of the metric physical black hole parameters (angular momentum per unit mass, mass, electric charge and the cosmological constant) and the orbital radii of both the emitter star and the distant observer. These radii are linked through the constants of motion along the null geodesics followed by the photons since their emission until their detection and as a result we get closed form analytic expressions for the orbital radius of the observer in terms of the emitter radius, and the black hole parameters.
In addition, we compute exact analytic expressions for the frame dragging of timelike spherical orbits in the KN(a)dS spacetime in terms of multivariable generalised hypergeometric functions of Lauricella and Appell.  We apply our exact solutions of timelike non-spherical polar KN geodesics for the computation of frame-dragging, pericentre-shift, orbital period for the orbits of S2 and S14 stars within the  $1^{\prime\prime}$ of SgrA*.
We solve the conditions for timelike spherical orbits in KN(a)dS and KN spacetimes. We present new, elegant compact forms for the parameters of these orbits.
Last but not least we derive a very elegant and novel exact formula for the periapsis advance for a test particle in a non-spherical polar orbit in KNdS black hole spacetime in terms of Jacobi's elliptic function sn and Lauricella's hypergeometric function $F_D$.

\end{abstract}

\section{Introduction}

General
relativity (GR) \cite{Einstein} has triumphed all experimental tests so far which cover a wide range of field strengths and physical scales that include: those in large scale cosmology \cite{Supern},\cite{GVKSWB}, the prediction of
solar system effects like the perihelion precession of Mercury with a very high
precision \cite{Einstein},\cite{GKSWMERCURY}, the recent discovery of gravitational waves in Nature \cite{GW150914},\cite{GW151226}, as well as the observation of the shadow of the M87 black hole \cite{EHT}, see also  \cite{ClifWill}.

The orbits of short period stars in the central arcsecond  (S-stars)
of the Milky Way Galaxy provide the best current evidence
for the existence of supermassive black holes, in particular for the galactic centre SgrA* supermassive black hole \cite{GhezA},\cite{Eisenhauer},\cite{GenzelETAL}.

One of the most fundamental exact non-vacuum solutions of the
gravitational field equations of general relativity is the
Kerr-Newman black hole \cite{Newman}. The Kerr-Newman (KN) exact
solution describes the curved spacetime geometry surrounding a
charged, rotating black hole and it solves the coupled system of
differential equations for the gravitational and electromagnetic
fields \cite{Newman} (see also \cite{HansOhanian}).

 The KN exact solution generalised the Kerr
solution \cite{KerrR}, which describes the curved spacetime
geometry around a rotating black hole, to include a net electric
charge carried by the black hole.

A wide variety of astronomical and cosmological observations in the last two decades, including high-redshift type Ia supernovae, cosmic microwave background radiation and large scale structure indicate convincingly an accelerating expansion of the Universe \cite{Supern},\cite{Jones},\cite{Aubourg}. Such observational data  can be explained by a positive cosmological constant $\Lambda$ ($\Lambda>0$) with a magnitude $\Lambda\sim 10^{-56}{\rm cm}^{-2}$ \cite{GVKSWB}.
On the other hand, we note the significance of a negative cosmological constant in the anti-de Sitter/conformal field theories correspondence \cite{Maldacena}-\cite{Horowitz}.
Thus, solutions of the Einstein-Maxwell equations with a cosmological constant deserve attention.

A more realistic description of the spacetime geometry surrounding a black hole should include the cosmological
constant \cite{GKSWMERCURY},\cite{GVKraniotisMPI},
\cite{GRGKRANIOTIS},\cite{CQGKraniotis}
\cite{Claus},\cite{Zdenek},\cite{Sultana},\cite{Saheb}, \cite{Xu}.

Taking into account the contribution from the cosmological
constant $\Lambda,$ the generalisation of the Kerr-Newman solution
is described by the Kerr-Newman de Sitter $($KNdS$)$ metric
element which in Boyer-Lindquist (BL) coordinates is given by
\cite{BCAR},\cite{Stuchlik1},\cite{GrifPod},\cite{ZdeStu} (in units where $G=1$ and $c=1$):
\begin{align}
\mathrm{d}s^{2}  & =\frac{\Delta_{r}^{KN}}{\Xi^{2}\rho^{2}}(\mathrm{d}%
t-a\sin^{2}\theta\mathrm{d}\phi)^{2}-\frac{\rho^{2}}{\Delta_{r}^{KN}%
}\mathrm{d}r^{2}-\frac{\rho^{2}}{\Delta_{\theta}}\mathrm{d}\theta
^{2}\nonumber \\ &-\frac{\Delta_{\theta}\sin^{2}\theta}{\Xi^{2}\rho^{2}}(a\mathrm{d}%
t-(r^{2}+a^{2})\mathrm{d}\phi)^{2}%
\label{KNADSelement}
\end{align}%
\begin{equation}
\Delta_{\theta}:=1+\frac{a^{2}\Lambda}{3}\cos^{2}\theta,
\;\Xi:=1+\frac {a^{2}\Lambda}{3},
\end{equation}

\begin{equation}
\Delta_{r}^{KN}:=\left(  1-\frac{\Lambda}{3}r^{2}\right)  \left(  r^{2}
+a^{2}\right)  -2Mr+e^{2},
\label{DiscrimiL}
\end{equation}

\begin{equation}
\rho^{2}=r^{2}+a^{2}\cos^{2}\theta,
\end{equation}
where $a,M,e,$ denote the Kerr parameter, mass and electric charge
of the black hole, respectively.
The KN(a)dS metric is the most general exact stationary black hole solution of the Einstein-Maxwell system of differential equations.
This
is accompanied by a non-zero electromagnetic field
$F=\mathrm{d}A,$ where the vector potential is
\cite{GrifPod},\cite{ZST}:
\begin{equation}
A=-\frac{er}{\Xi(r^{2}+a^{2}\cos^{2}\theta)}(\mathrm{d}t-a\sin^{2}\theta
\mathrm{d}\phi).
\end{equation}

Properties of the Kerr-Newman spacetimes with a non-zero cosmological constant are appropriately described by their geodesic structure which determines the motion of test particles and photons.
The KN(a)dS dynamical system of geodesics is a
completely integrable system \footnote{This is proven by solving the
relativistic Hamilton-Jacobi equation by the method of separation of variables.}
as was shown in
\cite{BCAR},\cite{Stuchlik1},\cite{ZST},\cite{GVKraniotisMPI} \
and the geodesic differential equations take the form:
\begin{align}
\int\frac{\mathrm{d}r}{\sqrt{R^{\prime}}}  &
=\int\frac{\mathrm{d}\theta
}{\sqrt{\Theta^{\prime}}},\label{lensKN1}\\
\rho^{2}\frac{\mathrm{d}\phi}{\mathrm{d}\lambda}  & =-\frac{\Xi^{2}}%
{\Delta_{\theta}\sin^{2}\theta}\left(  aE\sin^{2}\theta-L\right)  +\frac
{a\Xi^{2}}{\Delta_{r}^{KN}}\left[  (r^{2}+a^{2})E-aL\right]  ,\label{azimueq}%
\\
c\rho^{2}\frac{\mathrm{d}t}{\mathrm{d}\lambda}  & =\frac{\Xi^{2}(r^{2}%
+a^{2})[(r^{2}+a^{2})E-aL]}{\Delta_{r}^{KN}}-\frac{a\Xi^{2}(aE\sin^{2}%
\theta-L)}{\Delta_{\theta}},\\
\rho^{2}\frac{\mathrm{d}r}{\mathrm{d}\lambda}  & =\pm\sqrt{R^{\prime}},\\
\rho^{2}\frac{\mathrm{d}\theta}{\mathrm{d}\lambda}  & =\pm\sqrt%
{\Theta^{\prime}},\label{polareq}%
\end{align}
where%
\begin{align}
R^{\prime}  & :=\Xi^{2}\left[  (r^{2}+a^{2})E-aL\right]  ^{2}-\Delta_{r}%
^{KN}(\mu^{2}r^{2}+Q+\Xi^{2}(L-aE)^{2})\label{sextic},\\
\Theta^{\prime}  & :=\left[  Q+(L-aE)^{2}\Xi^{2}-\mu^{2}a^{2}\cos^{2}%
\theta\right]  \Delta_{\theta}-\Xi^{2}\frac{\left(  aE\sin^{2}\theta-L\right)
^{2}}{\sin^{2}\theta}.
\end{align}
Null geodesics are derived by setting $\mu=0$. The proper time $\tau$ and the affine parameter $\lambda$ are connected by the relation $\tau=\mu\lambda$.
In the following we use geometrised units, $G=c=1$, unless it is stipulated otherwise. The first integrals of motion $E$ and $L$ are related to the isometries of the KNdS metric while $Q$ (Carter's constant) is the hidden integral of motion that results from the separation of variables of the Hamilton-Jacobi equation.

Although it is not the purpose of this paper to discuss how a net electric
charge is accumulated inside the horizon of the black hole, we briefly mention
recent attempts which address the issue of the formation of charged black
holes. Indeed, we note at this point, that the authors in \cite{Ray}, have
studied the effect of electric charge in compact stars assuming that the
charge distribution is proportional to the mass density. They found solutions
with a large positive net electric charge. From the local effect of the forces
experienced on a single charged particle, they concluded that each individual
charged particle is quickly ejected from the star. This is in turn produces
 a huge force imbalance,  in which the gravitational force overwhelms the
repulsive Coulomb and fluid pressure forces. In such a scenario the star
collapses to form a charged black hole before all the charge leaves the system
\cite{Ray}.  A mechanism for generating charge asymmetry that may be linked
to the formation of a charged black hole has been suggested in \cite{Cuesta}.
Besides these theoretical considerations for the formation of charged black holes, recent observations of structures near SgrA*  by the GRAVITY experiment, indicate possible presence of a small electric charge of central supermassive black hole \cite{Zajacek},\cite{Britzen}. Accretion disk physics around magnetised Kerr black holes under the influence of cosmic repulsion  is extensively discussed in the review \cite{universemdpi} \footnote{We also mention that supermassive black holes as possible sources of ultahigh-energy cosmic rays have been suggested in \cite{waldcharge}, where it has been shown that large values of the Lorentz $\gamma$ factor of an escaping ultrahigh-energy particle from the inner regions of the black hole accretion disk may occur only in the presence of the induced charge of the black hole.}.
Therefore, it is quite interesting to study the combined effect of the cosmological constant and electromagnetic fields on the black hole astrophysics.

Most of previous studies of the black hole backgrounds with $\Lambda>0$ were mainly concentrated on uncharged Schwarzschild-de Sitter and Kerr-de Sitter spacetimes  \cite{GKSWMERCURY},\cite{GVKraniotisMPI},\cite{GVKperiapsisadvLTprecession},\cite{KraniotisShadow},\cite{CQGKraniotis},\cite{Claus},\cite{Opava1},\cite{opav2}.
Particle motion and shadow of rotating spacetimes have been studied for the Kerr case in \cite{JMBardeenNY} for the braneworld Kerr-Newman case with a tidal charge in \cite{BraneworldKN}, \cite{RotatingBHbraneworld}, and for the Kerr-de Sitter case in \cite{LightescapeConeskds}.
In a series of papers we solved exactly timelike and null geodesics in Kerr and Kerr-(anti) de Sitter black hole  spacetimes \cite{GVKperiapsisadvLTprecession}, \cite{CQGKraniotis},\cite{GVKraniotisMPI}, and null geodesics and the gravitational lens equations in electrically charged rotating black holes in \cite{GRGKRANIOTIS}. We also computed in \cite{GVKperiapsisadvLTprecession} elegant closed form analytic solutions for  the general relativistic effects of periapsis advance, Lense-Thirring precession, orbital and Lense-Thirring periods and applied our solutions for calculating these GR-effects for the observed orbits of S-stars.
The shadow of the Kerr and charged Kerr black holes were computed in \cite{KraniotisShadow},\cite{CQGKraniotis} and \cite{GRGKRANIOTIS} respectively.

It is the main purpose of this work to derive new exact analytic solutions for timelike geodesic equations in the Kerr-Newman and Kerr-Newman-de Sitter spacetimes as well as to develop a theory of gravitational redshift/blueshift of photons emitted by geodesic particles in timelike orbits around a Kerr-Newman-(anti) de Sitter black hole. We derive  closed form analytic solutions for both spherical  orbits as well as for non-spherical orbits. A special class of orbits around a charged rotating black hole that we investigate in this paper are spherical orbits i.e. orbits with constant radii that are not necessarily confined to the equatorial plane.
We solve for the \textit{first time} the conditions for spherical timelike orbits in Kerr-Newman and Kerr-Newman-(anti) de Sitter spacetimes. As a result of this procedure we derive
elegant, compact forms for the parameters of these orbits  which are associated with the Killing vectors of these spacetimes.
For such spherical orbits we derive closed form analytic expressions in terms of special functions (multivariable hypergeometric functions of Appell-Lauricella) for frame dragging precession. We perform an analysis of the parameter space for such spherical orbits around a Kerr-Newman black hole and present many examples of such orbits.
We also derive the exact orbital solution for non-spherical timelike geodesics around a Kerr-Newman black hole, that cross the symmetry axis of the black hole, in terms of elliptic functions. Moreover, for such orbits we derive closed form analytic expressions for relativistic observables that include the pericentre shift, frame-dragging precession and the orbital period. We extend our analytic computations by deriving a very elegant analytic solution for the periapsis advance for a test particle in a non-spherical polar orbit around a Kerr-Newman-de Sitter black hole.

As we mentioned a very important motivation for our present work is to investigate frequency redshift/blueshift of light emitted by timelike geodesic particles  in orbits around a Kerr-Newman-(anti) de Sitter black hole using the Killing vector formalism.

Indeed, one of the targets of observational astronomers of the galactic centre is to measure the gravitational redshift predicted by the theory of general relativity \cite{zucker}.
In the Schwarzschild spacetime geometry the ratio of the frequencies measured by two stationary clocks at the radial positions $r_1$ and $r_2$ is given by \cite{HansOhanian}:
\begin{equation}
\frac{\nu_1}{\nu_2}=\frac{\sqrt{1-2GM/r_2}}{\sqrt{1-2GM/r_1}},
\end{equation}
where $G$ is the gravitational constant and $M$ is the mass of the black hole.
Recently, the redshift/blueshift of photons emitted by test particles in timelike circular equatorial orbits in Kerr spacetime were investigated in \cite{HERRERA}.

The analytic computation we perform in this work for the \textit{first time}, for the redshift and blueshift of light emitted by timelike geodesic particles in orbits around the Kerr-Newman-(anti) de Sitter (KN(a)dS) black hole,  extend our previous results on relativistic observables. The theory we develop enriches our arsenal for  studying the combined effect of the cosmological constant and electromagnetic fields on the black hole astrophysics and can serve as a method to infer important information about black hole observables from photon frequency shifts. In addition, we derive
new exact analytic expressions for the pericentre-shift and frame-dragging for non-spherical non-equatorial (polar) timelike KNdS and KN black hole orbits.
Moreover, we derive novel exact expressions for the frame dragging effect for particles in  spherical, non-equatorial orbits in KNdS and KN black hole geometries.
These results could be of interest to the observational astronomers of the Galactic centre \cite{GhezA},\cite{GenzelETAL} whose aim is to measure experimentally, the relativistic effects predicted by the theory of General Relativity \cite{GeorgeVKraniotis} for the  observed orbits of
short-period stars-the so called S-stars in our Galactic centre.
During 2018, the close proximity of the star S2 (S02) to the supermassive Galactic centre black hole allowed the first measurements of the relativistic redshift observable by the GRAVITY collaboration \cite{gravityCol} and the UCLA Galactic centre group whose astrometric measurements were obtained at the W.M. Keck Observatory \cite{Do}\footnote{Observational work is ongoing towards the detection of the periastron shift of the star S2 and the discovery of putative closer stars-in the central milliarcsecond of SgrA* supermassive black hole \cite{Waisberg}, which could allow an astrometric measurement of the black hole spin as envisaged e.g. in \cite{GVKperiapsisadvLTprecession}.}.

The material of the paper is organised as follows: In Sec. \ref{KillingFormKNdS} we consider the Killing vector formalism and the corresponding conserved quantities in Kerr-Newman-(anti) de Sitter spacetime. In Sec.\ref{FiIntegralsKNdSequatocirc} we consider  equatorial circular geodesics in KN(a)dS spacetime and derive novel expressions for the specific energy and specific angular momentum for test particles moving in such orbits, see equations (\ref{marvelOne}) and (\ref{marvelTwo}). Typical behaviour of these functions is displayed in Fig.\ref{EPlusLPlusCosmoCo}-Fig.\ref{Lneg3}.  Furthermore, we investigate the stability of such timelike circular equatorial geodesics in Kerr-Newman-de Sitter spacetime and derive a new condition that restricts their radii, namely the inequality (\ref{equilibriumKNdS}).  In Sec. \ref{grredblue}  we provide general expressions for the redshift/blueshift that emitted photons by massive particles experience while travelling along null geodesics towards an observer located far away from their source by making use of the Killing vector formalism. In Sec.\ref{erithrimplemetatopisi} we derive novel exact analytic expressions for the redshift/blueshift of photons for circular and equatorial emitter/detector orbits around the Kerr-Newman-(anti) de Sitter black hole-see equations (\ref{shiftred}) and (\ref{shiftblue}) respectively. In the procedure we take into account the bending of light due to the field of the Kerr-Newman-(anti)de Sitter black hole at the moment of detection by the observer.  We derive the corresponding frequency shifts for circular equatorial orbits in Kerr-de Sitter spacetime  in  Sec. \ref{kdsredblue}. We also examine the particular case when the detector is located far away from the source.
In Sec.\ref{nonequatorial} we study non-equatorial orbits in rotating charged black hole spacetimes. Specifically, we compute in closed analytic form the frame-dragging for test particles in  timelike spherical orbits in the  Kerr-Newman and Kerr-Newman-de Sitter black hole spacetimes-equations (\ref{GVKframePrecessionKN}),theorem \ref{LenseThirFDRAG} and (\ref{LTlambdaKNdS}) respectively. The former equation (KN case) involves the ordinary Gau$\ss$  hypergeometric function and Appell's $F_1$ two-variable hypergeometric function, while the latter (KNdS case) is expressed in terms of Lauricella's $F_D$ and Appell's $F_1$ generalised multivariate hypergeometric functions \cite{Appell}. We solve the conditions for spherical orbits in Kerr Newman spacetime and derive \textit{new} elegant compact forms for the particle's energy and angular momentum about the $\phi$-axis, Theorem \ref{EndiamesoORO}, and relations (\ref{NESPenergyp}),(\ref{AngulMomenNESPher}). We investigated in some detail the ranges of $r$ and Carter's constant $Q$ for which the solutions (\ref{NESPenergyp}),(\ref{AngulMomenNESPher}) are valid. Having solved analytically the geodesic equations and armed with Theorems \ref{LenseThirFDRAG} and \ref{EndiamesoORO},  we present several examples of timelike spherical orbits around a Kerr-Newman black hole in Tables \ref{ProgradeKNnp} and \ref{retrogradeKNnpOr}.
In subsection \ref{asteraspole} we investigate an important subset of spherical orbits: polar spherical orbits i.e. timelike geodesics with constant coordinate radii crossing the symmetry axis of the Kerr-Newman spacetime. We perform an effective potential analysis and determine simplified forms for the physical parameters of such orbits: relations (\ref{energiapolars}) and (\ref{consCartpols}). We solve analytically the geodesic equations and derive closed analytic form expression for the Lense-Thirring precession, eqn.(\ref{ltfdpolarsphekn}). Moreover we present several examples of such orbits in Table \ref{TomiAxonaSymmetrias}. Remarkably, In Theorem \ref{OroTheoLambda} we have solved the conditions for timelike spherical orbits for the constants of motion $E,L$ that are associated with the two Killing vectors of the Kerr-Newman-(anti) de Sitter black hole spacetime. The original elegant relations we derive, equations (\ref{GENERALESPHE}),(\ref{LGENERALSPHERE}), represent the most general forms for these parameters of timelike spherical geodesics around the fundamental Kerr-Newman(anti)de Sitter black hole: they have as limits eqns (\ref{marvelOne}),(\ref{marvelTwo}) and eqns (\ref{NESPenergyp}),((\ref{AngulMomenNESPher}). In Theorem \ref{LambdaUniverseDomin} of section \ref{morecosmos} we derive a closed form analytic solution for frame dragging for a timelike polar spherical orbit in Kerr-Newman de Sitter spacetime, equation (\ref{CosmoLenseThirringMeta}). The solution is written in terms of Lauricella's hypergeometric function $F_D$ and Appell's $F_1$.
In Sec. \ref{nonspherepLTprece} and \ref{troxiakiPeriodPol} we derive new exact solutions for the Lense-Thirring precession and the orbital period for a test particle in a polar non-spherical orbit around a Kerr-Newman black hole, Eqns(\ref{LTprecessionKNpolarNonsphere}) and (\ref{PolarPeriodOrbitKN}) respectively. In Sec.\ref{periapsisKN} $\&$ \ref{periapsisLambdaKNpolar} we derive new closed form analytic expressions for the periapsis advance for test particles in non-spherical polar orbits in KN and KNdS spacetimes respectively. In the latter case we derive a novel,  very elegant, exact formula in terms of Jacobi's elliptic function $\rm {sn}$ and Lauricella's hypergeometric function $F_D$ of three variables-see equation (\ref{remarkablePeriapsisKNdS}). The exact results of sections \ref{nonspherepLTprece},\ref{troxiakiPeriodPol},\ref{periapsisKN}, are applied for the computation of the relativistic effects for frame-dragging, pericentre shift and the orbital periods for the observed orbits of stars S2 and S14 in the central arcsecond of Milky way, assuming that the galactic centre supermassive black hole is a Kerr-Newman black hole.

\section{Particle orbits and Killing vectors formalism in Kerr-Newman-(anti)de Sitter spacetime}\label{KillingFormKNdS}

From the condition for the invariance of the metric tensor:
\begin{equation}
0=\xi^{\alpha}\frac{\partial g_{\mu\nu}}{\partial x^{\alpha}}+
\frac{\partial \xi^{\alpha}}{\partial x^{\mu}}g_{\alpha\nu}+
\frac{\partial \xi^{\beta}}{\partial x^{\nu}}g_{\mu\beta},
\end{equation}
under the infinitesimal transformation:
\begin{equation}
x^{\prime\alpha}=x^{\alpha}+\epsilon\xi^{\alpha}(x),\;\;\text{with}\;\;\epsilon\rightarrow 0
\end{equation}
it follows that whenever the metric is independent of some coordinate a constant vector in the direction of that coordinate is a Killing vector . Thus the generic metric:
\begin{equation}
{\rm d}s^2=g_{tt}{\rm d}t^2+2g_{t\phi}{\rm d}t{\rm d}\phi+
g_{\phi\phi}{\rm d}\phi^2+g_{rr}{\rm d}r^2+g_{\theta\theta}{\rm d}\theta^2,
\end{equation}
possesses two commuting Killing vector fields:
\begin{align}
\xi^{\mu}&=(1,0,0,0) \;\;\text{timelike Killing vector},\label{killing1}\\
\psi^{\mu}&=(0,0,0,1)\label{KILLING2}\;\;\text{rotational Killing vector.}
\end{align}

According to Noether's theorem to every continuous symmetry of a physical system corresponds a conservation law. In a general curved spacetime, we can formulate the conservation laws for the motion of a particle on the basis of Killing vectors. We can prove that if $\xi_{\nu}$ is a Killing vector, then for a particle moving along a geodesic, the scalar product of this Killing vector and the momentum $P^{\nu}=\mu\frac{{\rm d}x^{\nu}}{{\rm d}\tau}$ of the particle is a constant \cite{HansOhanian}:
\begin{equation}
\xi_{\nu}P^{\nu}=\text{\rm constant}
\label{ConservationLaw}
\end{equation}

Due to the existence of these Killing vector fields (\ref{killing1}),(\ref{KILLING2}) there are two conserved quantities the total energy and the angular momentum per unit mass at rest of the test particle \footnote{As we mentioned already in the introduction, the charged Kerr solution possesses another hidden constant, Carter's constant $Q$. Alternatively, the complete integrability of the geodesic equations in KN(a)dS spacetime can be understood as follows: The Kerr-Newman family of spacetimes possesses in addition to the two Killing vectors a Killing tensor field $K_{\alpha\beta}$. This tensor can be expressed in terms of null tetrads (e.g. see eqn (7) in \cite{KraniotisDirac}) which implies the existence of a constant of motion $K=K_{\mu\nu}U^{\mu}U^{\nu}$. This is related to Carter's constant by: $K\equiv Q+(L-aE)^2$. See also \cite{WalkPen},\cite{Chandrasekhar}.}:
\begin{align}
E&=\frac{\tilde{E}}{\mu}=g_{\mu\nu}\xi^{\mu}U^{\nu}
=g_{tt}U^t+g_{t\phi}U^{\phi},\\
L&=\frac{\tilde{L}}{\mu}=-g_{\mu\nu}\psi^{\mu}U^{\nu}
=-g_{\phi t}U^t-g_{\phi\phi}U^{\phi}.
\end{align}

Thus, the photon's emitter is a probe massive test particle which geodesically moves around a rotating electrically charged cosmological black hole in the spacetime with a four-velocity:

\begin{equation}
U^{\mu}_e=(U^t,U^r,U^{\theta},U^{\phi})_e.
\end{equation}

The conservation law (\ref{ConservationLaw}) also applies to photon moving in the curved spacetime. Thus, if the spacetime geometry is time independent, the photon energy $P_0$ is constant. In section \ref{erithrimplemetatopisi} we will extract the redshift/blueshift of photons from this conservation law.

\section{Equatorial circular orbits in Kerr-Newman spacetime with a cosmological constant}\label{FiIntegralsKNdSequatocirc}
It is convenient to introduce a dimensionless cosmological parameter:
\begin{equation}
\Lambda^{\prime}=\frac{1}{3}\Lambda M^2,
\end{equation}
and set $M=1$. For equatorial orbits  Carter's constant $Q$ vanishes. For the following discussion, it is useful to introduce new constants of motion, the specific energy and specific angular momentum:
\begin{align}
\hat{E}&\equiv \frac{\Xi E}{\mu},\\
\hat{L}&\equiv\frac{\Xi L}{\mu}.
\end{align}
This is equivalent to setting $\mu=1$. Thus for reasons of notational simplicity we omit the caret for the specific energy and specific angular momentum in what follows.

Equatorial circular orbits correspond to local extrema of the effective potential. Equivalently, these orbits are given by the conditions $R^{\prime}(r)=0,{\rm d} R^{\prime}/{\rm d} r=0$, which have to be solved simultaneously. Following this procedure, we obtain the following novel equations for the specific energy and the specific angular momentum of test particles moving along equatorial circular orbits in KN(a)dS spacetime:
\begin{equation}
E_{\pm}(r;\Lambda^{\prime},a,e)=
\frac{e^2+r(r-2)-r^2(r^2+a^2)\Lambda^{\prime}\pm a
\sqrt{r^4\left(\frac{1}{r^3}-\Lambda^{\prime}\right)-e^2}}{r
\sqrt{2e^2+r(r-3)-a^2 r^2 \Lambda^{\prime}\pm 2 a\sqrt{r^4\left(\frac{1}{r^3}-\Lambda^{\prime}\right)-e^2}}},
\label{marvelOne}
\end{equation}
\begin{equation}
L_{\pm}(r;\Lambda^{\prime},a,e)=
\frac{\pm (r^2+a^2)\sqrt{r^4\left(\frac{1}{r^3}-\Lambda^{\prime}\right)-e^2}-2ar-
ar^2\Lambda^{\prime}(r^2+a^2)+ae^2}{r
\sqrt{2e^2+r(r-3)-a^2 r^2 \Lambda^{\prime}\pm 2 a\sqrt{r^4\left(\frac{1}{r^3}-\Lambda^{\prime}\right)-e^2}}}
\label{marvelTwo}
\end{equation}
The upper sign in (\ref{marvelTwo})-(\ref{marvelOne}) corresponds to the parallel orientation of particle's angular momentum $L$ and black hole spin $a$ (corotation-prograde motion), the lower to the antiparallel one (counter-rotation or retrograde motion).
The reality conditions connected with equations (\ref{marvelOne}) and (\ref{marvelTwo}) are given by the inequalities:
\begin{align}
&2e^2+r(r-3)-a^2r^2\Lambda^{\prime}\pm 2a\sqrt{r^4\left(\frac{1}{r^3}-\Lambda^{\prime}\right)-e^2}\geq0\label{reality1}\\
&\Leftrightarrow \frac{2e^2}{r^2}+\frac{r-3}{r}-a^2\Lambda^{\prime}
\pm 2a\sqrt{\frac{1}{r^3}-\Lambda^{\prime}-\frac{e^2}{r^4}}\geq0\label{reality2},
\end{align}
and \footnote{We note that for zero electric charge and $\Lambda>0$, inequality (\ref{staticradius}) defines the concept of static radius $r_{s}$ at which, the gravitational attraction caused by the central mass is just compensated by the cosmic repulsion due to cosmological constant. It has been shown that the static radius gives an upper limit on an extension of disk-like structures around black holes, coinciding with dimensions of large galaxies with central supermassive black holes \cite{STATICRADIUSzs},\cite{RADIUSBEFOREL}. The static radius is relevant also for the polytropic structures that could model dark matter halos, because it represents the upper limit on their extension, as shown in \cite{polytropicZS}. Moreover, it is relevant also for motion around galaxies, as discussed e.g. in \cite{MagellanosZSSCN},\cite{Boonserm},\cite{Faraoni}. On the other hand, see \cite{CarreraG} for arguments about the more likely detection of $\Lambda$ in a galaxy which is in the Hubble flow. }
\begin{equation}
1-\Lambda^{\prime}r^3\geq e^2/r.
\label{staticradius}
\end{equation}

For zero electric charge $e=0$ equations (\ref{marvelOne}),(\ref{marvelTwo}) reduce correctly to those in Kerr-anti de Sitter (KadS) spacetimes \cite{Opava1}. For zero electric charge and zero cosmological constant ($\Lambda=e=0$) equations (\ref{marvelOne}),(\ref{marvelTwo}) reduce correctly to the corresponding ones in Kerr spacetime \cite{Bardeen}. Relations (\ref{marvelOne}),(\ref{marvelTwo}) have also been discovered independently in \cite{plaglogo}.

In the figures \ref{EPlusLPlusCosmoCo}-\ref{Lneg3}, for concrete values of the electric charge and the cosmological parameter we present the radial dependence of the specific energy and specific angular momentum for various values of the black hole's spin. For the cosmological parameter $\Lambda^{\prime}$ we choose the values $\Lambda^{\prime}=10^{-5},10^{-4},10^{-3}$ as well as their negative counterparts. For stellar mass black holes, and positive cosmological constant this corresponds to $\Lambda\sim 10^{-15}{\rm cm}^{-2}-10^{-13}{\rm cm}^{-2}$.
For supermassive black holes such as at the centre of Galaxy M87 with mass $M^{M87}_{\rm BH}=6.7\times 10^9$ solar masses \cite{EHT} the value of $\Lambda^{\prime}=10^{-5}$ corresponds to the value for the cosmological constant: $\Lambda=3.06\times 10^{-35}{\rm cm}^{-2}$. The physical relevance of the graphs in figures \ref{EPlusLPlusCosmoCo}-\ref{Lneg3} lies in the following fact: the local extrema of the radial profiles of the specific energy and specific angular momentum of particles on equatorial circular orbits given by the relations (\ref{marvelOne})and (\ref{marvelTwo}) correspond to the marginally stable orbits that we shall discuss later.
The apparent singularity that appears in some graphs for negative cosmological constant e.g.
in figures \ref{PlusEneg},\ref{PlusEnegAM} is due to fact that the quantity inside the larger root in the denominator of the radial profiles in relations (\ref{marvelOne}),(\ref{marvelTwo}), and for a range of radii values, becomes negative and therefore the constants of motion become complex numbers.
In figures \ref{DEPlusKNadScompKERRex},\ref{DEPlusKNadScompKERR},\ref{ProgradeLKerr},
\ref{DLPlusKNadScompKERR} we plot the radial profiles for the direct constants of motion for the KN(a)dS black hole together with the profiles that correspond to the Kerr case ($e=\Lambda=0$) in order appreciate the electric charge and cosmological constant contributions.

Horizons of the KN(a)dS geometries are given by the condition $\Delta_r^{KN}=0$, which determines pseudosingularities of the spacetime interval (\ref{KNADSelement}). This condition yields the relations:
\begin{equation}
\Lambda^{\prime}=\Lambda^{\prime}_h(r,a,e)\equiv\frac{r^2+a^2+e^2-2r}{r^2(r^2+a^2)}
\end{equation}
Figures \ref{HorizonsposLambda},\ref{HorizonsnegLambda} exhibit typical behaviour of these functions for positive and negative cosmological constant respectively. For the KNdS black hole spacetime there are three horizons (event, apparent or Cauchy, cosmological) while for the KNadS spacetime there are two horizons (event,apparent).

\begin{figure}
[ptbh]
\psfrag{kinderleoforouausgung13}{$e=0.11,\;\Lambda^{\prime}=0.001$}
\psfrag{ra}{$r$} \psfrag{EpLp1}{$E_{+}$}
\psfrag{a9939}{$a=0.9939$}
\psfrag{a79}{$a=0.7939$}
\psfrag{a6}{$a=0.6$}
\psfrag{a52}{$a=0.52$}
\psfrag{a2}{$a=0.2$}
\begin{center}
\includegraphics[height=2.4526in, width=3.3797in ]{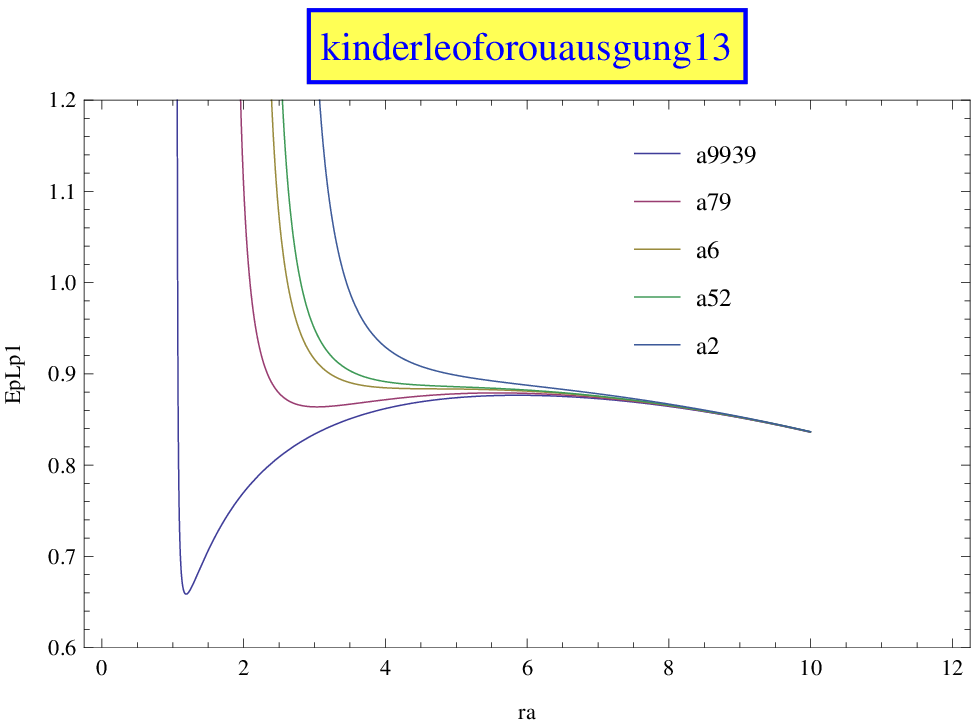}
 \caption{Specific energy $E_{+}$ for $e=0.11, \Lambda^{\prime}=0.001$ for different values for the Kerr parameter.}%
\label{EPlusLPlusCosmoCo}%
\end{center}
\end{figure}

\begin{figure}
[ptbh]
\psfrag{Enverdeleoforosausgung13}{$e=0.11,\;\Lambda^{\prime}=10^{-4}$}
\psfrag{ra}{$r$} \psfrag{EpLp}{$E_{+}$}
\psfrag{a9939}{$a=0.9939$}
\psfrag{a79}{$a=0.7939$}
\psfrag{a6}{$a=0.6$}
\psfrag{a52}{$a=0.52$}
\psfrag{a2}{$a=0.2$}
\begin{center}
\includegraphics[height=2.4526in, width=3.3797in ]{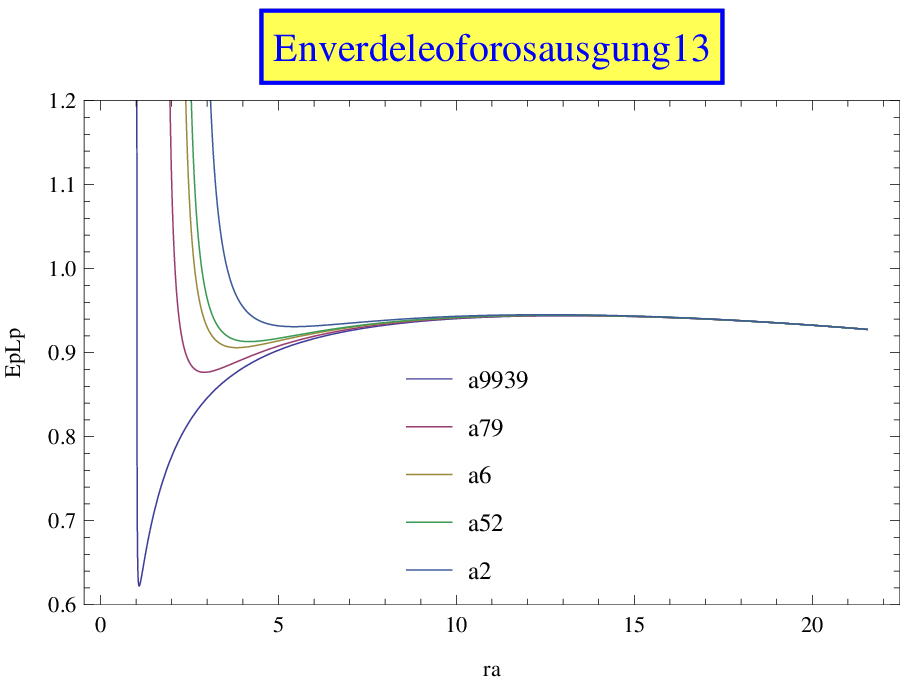}
 \caption{Specific energy $E_{+}$ for $e=0.11, \Lambda^{\prime}=0.0001$ for different values for the Kerr parameter.}%
\label{EPlusLPlusLambda}%
\end{center}
\end{figure}

\begin{figure}
[ptbh]
\psfrag{trifilileoforouausgung13}{$e=0.11,\;\;\Lambda^{\prime}=10^{-5}$}
\psfrag{ra}{$r$} \psfrag{EpLp3}{$E_{+}$}
\psfrag{a9939}{$a=0.9939$}
\psfrag{a79}{$a=0.7939$}
\psfrag{a6}{$a=0.6$}
\psfrag{a52}{$a=0.52$}
\psfrag{a2}{$a=0.2$}
\begin{center}
\includegraphics[height=2.4526in, width=3.3797in ]{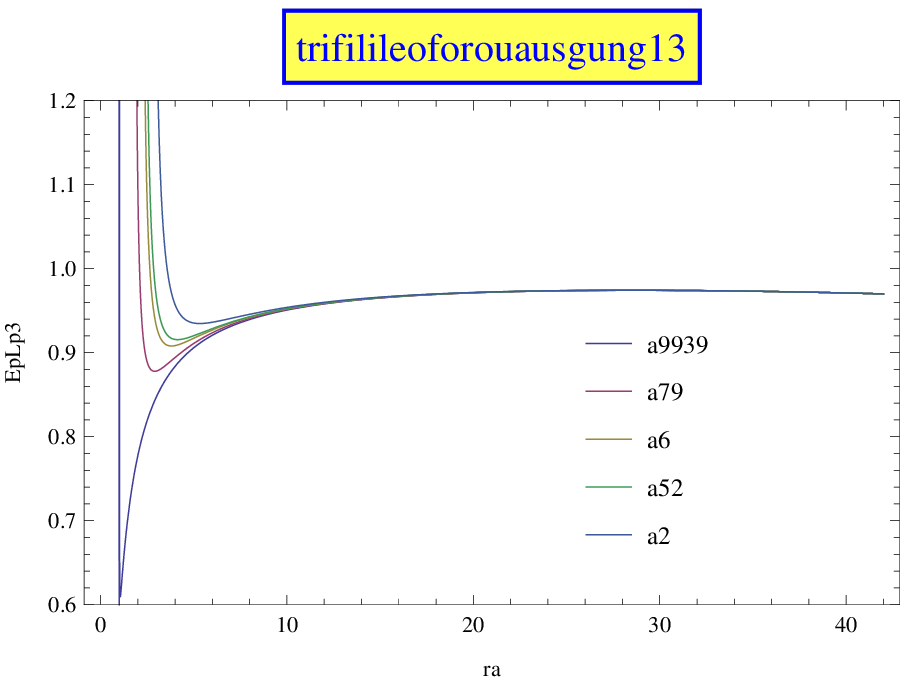}
 \caption{Specific energy $E_{+}$ for $e=0.11, \Lambda^{\prime}=0.00001$ for different values for the Kerr parameter.}%
\label{EPlusLPlusLambda}%
\end{center}
\end{figure}

\begin{figure}
[ptbh]

\psfrag{r}{$r$} \psfrag{Eplus}{$E_{+}$}
\psfrag{a052Lpm3e0}{$a=0.52,\Lambda^{\prime}=10^{-3},e=0$}
\psfrag{a052kerronly}{$a=0.52,\Lambda^{\prime}=0,e=0$}
\psfrag{a052L0e085}{$a=0.52,\Lambda^{\prime}=0,e=0.85$}
\psfrag{a052Lnm3e0}{$a=0.52,\Lambda^{\prime}=-10^{-3},e=0$}

\begin{center}
\includegraphics[height=2.4526in, width=3.3797in ]{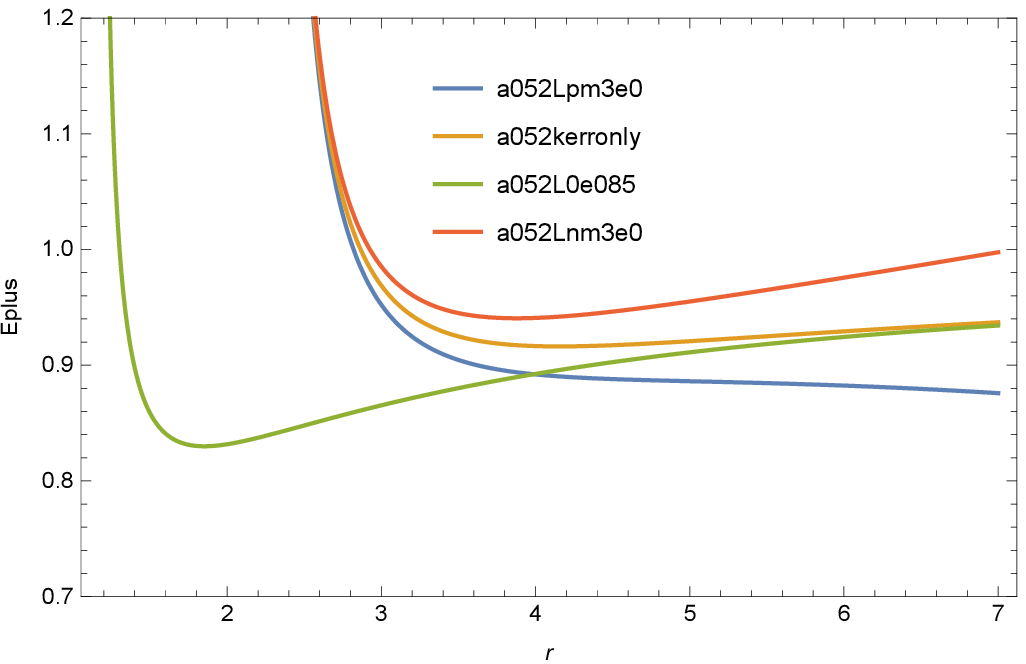}
 \caption{Radial profile for specific energy $E_{+}$ of test particles moving on equatorial circular orbits in KN(a)dS black hole with Kerr parameter $a=0.52$, and various values for the cosmological constant and the black hole's electric charge. The case of the Kerr black hole $e=\Lambda^{\prime}=0$ is displayed.}%
\label{DEPlusKNadScompKERR}%
\end{center}
\end{figure}

\begin{figure}
[ptbh]

\psfrag{r}{$r$} \psfrag{Eplus}{$E_{+}$}
\psfrag{a09939Lpm3e0}{$a=0.9939,\Lambda^{\prime}=10^{-3},e=0$}
\psfrag{a09939kerronly}{$a=0.9939,\Lambda^{\prime}=0,e=0$}
\psfrag{a09930L0e011}{$a=0.9939,\Lambda^{\prime}=0,e=0.11$}
\psfrag{a09939Lnm3e0}{$a=0.9939,\Lambda^{\prime}=-10^{-3},e=0$}

\begin{center}
\includegraphics[height=2.4526in, width=3.3797in ]{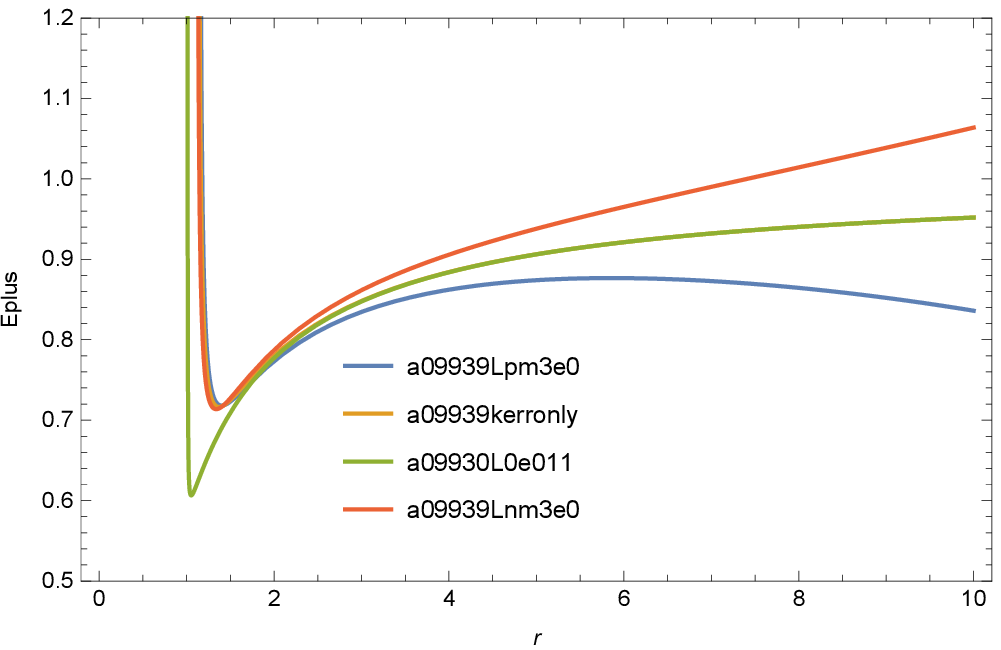}
 \caption{Radial profile for specific energy $E_{+}$ of test particles moving on equatorial circular orbits in KN(a)dS black hole with Kerr parameter $a=0.9939$, and various values for the cosmological constant and the black hole's electric charge. The case of the Kerr black hole $e=\Lambda^{\prime}=0$ is displayed.}%
\label{DEPlusKNadScompKERRex}%
\end{center}
\end{figure}

\begin{figure}
[ptbh]
\psfrag{kinderlalexandrasausgung13}{$e=0.11,\;\;\Lambda^{\prime}=10^{-3}$}
\psfrag{ra}{$r$} \psfrag{EnLp1}{$E_{-}$}
\psfrag{a9939}{$a=0.9939$}
\psfrag{a79}{$a=0.7939$}
\psfrag{a6}{$a=0.6$}
\psfrag{a52}{$a=0.52$}
\psfrag{a2}{$a=0.2$}
\begin{center}
\includegraphics[height=2.4526in, width=3.3797in ]{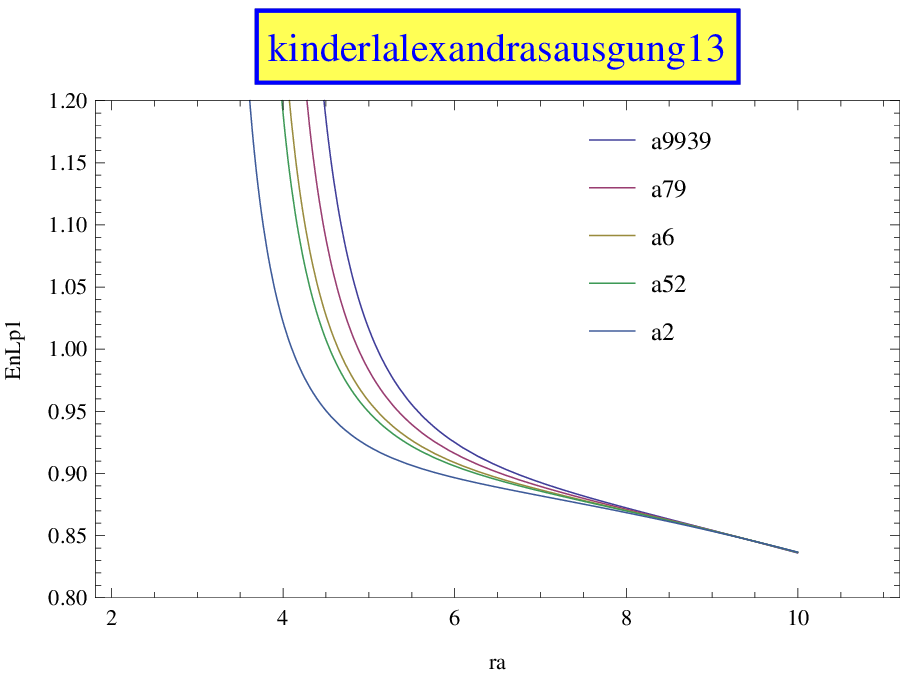}
 \caption{Specific energy $E_{-}$ for $e=0.11, \Lambda^{\prime}=0.001$ for different values for the Kerr parameter.}%
\label{EnegLPlusLambda}%
\end{center}
\end{figure}

\begin{figure}
[ptbh]
\psfrag{leofalexandrasausgung13}{$e=0.11,\;\;\Lambda^{\prime}=10^{-4}$}
\psfrag{ra}{$r$} \psfrag{EnLp2}{$E_{-}$}
\psfrag{a9939}{$a=0.9939$}
\psfrag{a79}{$a=0.7939$}
\psfrag{a6}{$a=0.6$}
\psfrag{a52}{$a=0.52$}
\psfrag{a2}{$a=0.2$}
\begin{center}
\includegraphics[height=2.4526in, width=3.3797in ]{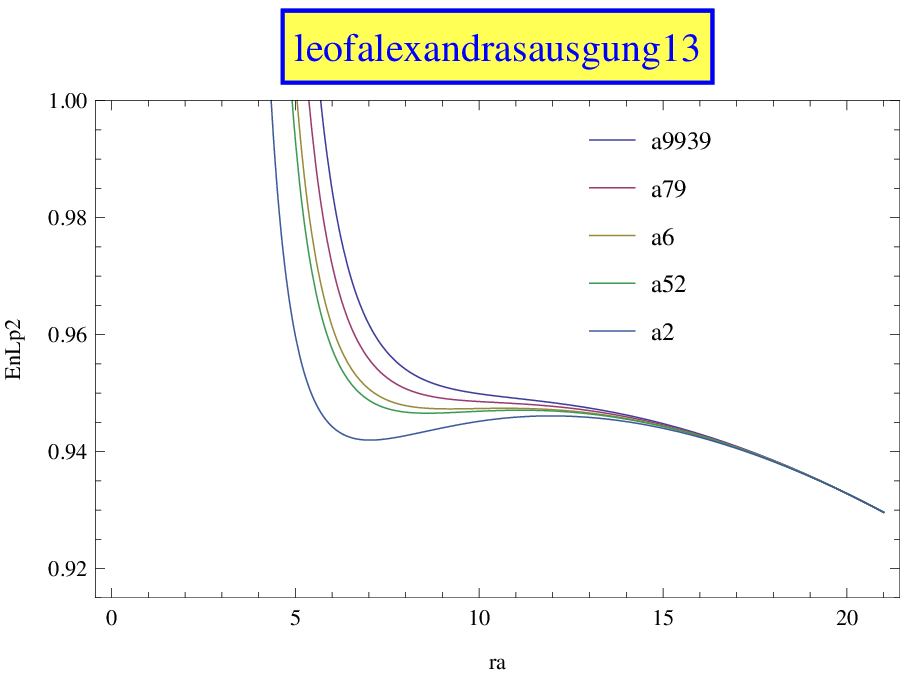}
 \caption{Specific energy $E_{-}$ for $e=0.11, \Lambda^{\prime}=0.0001$ for different values for the Kerr parameter.}%
\label{EnegPlusLambda}%
\end{center}
\end{figure}

\begin{figure}
[ptbh]
\psfrag{leofalexandrasgate13}{$e=0.11,\;\;\Lambda^{\prime}=10^{-5}$}
\psfrag{ra}{$r$} \psfrag{EnLp3}{$E_{-}$}
\psfrag{a9939}{$a=0.9939$}
\psfrag{a79}{$a=0.7939$}
\psfrag{a6}{$a=0.6$}
\psfrag{a52}{$a=0.52$}
\psfrag{a2}{$a=0.2$}
\begin{center}
\includegraphics[height=2.4526in, width=3.3797in ]{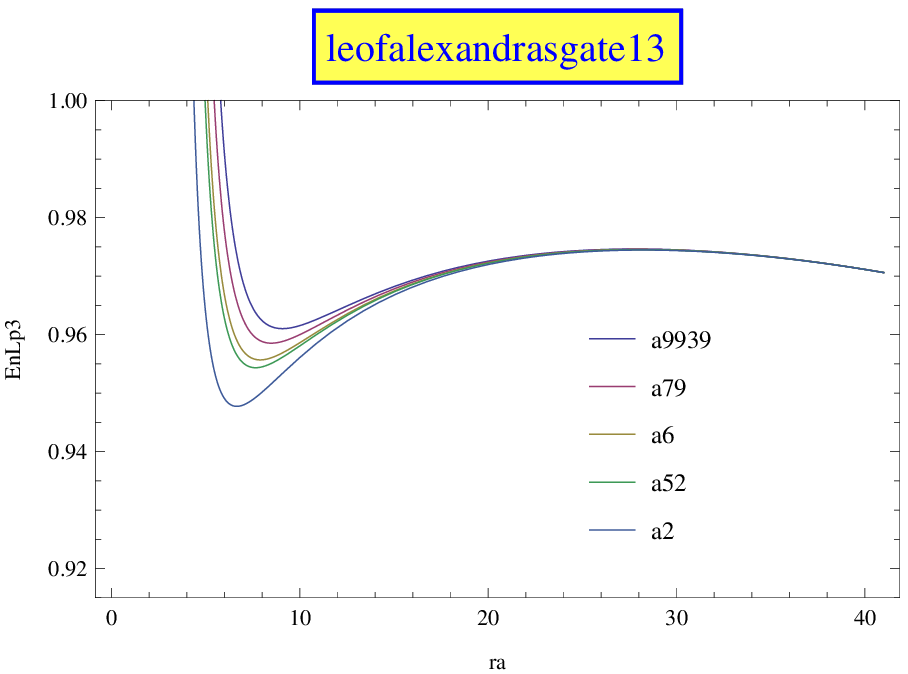}
 \caption{Specific energy $E_{-}$ for $e=0.11, \Lambda^{\prime}=0.00001$ for different values for the Kerr parameter.}%
\label{EnegNEGpLambda3}%
\end{center}
\end{figure}

\begin{figure}
[ptbh]
\psfrag{verdefunspaoausgung13}{$e=0.11,\Lambda^{\prime}=0.001$}
\psfrag{ra}{$r$} \psfrag{Lp}{$L_{+}$}
\psfrag{a9939}{$a=0.9939$}
\psfrag{a79}{$a=0.7939$}
\psfrag{a6}{$a=0.6$}
\psfrag{a52}{$a=0.52$}
\psfrag{a26}{$a=0.2$}

\begin{center}
\includegraphics[height=2.4526in, width=3.3797in ]{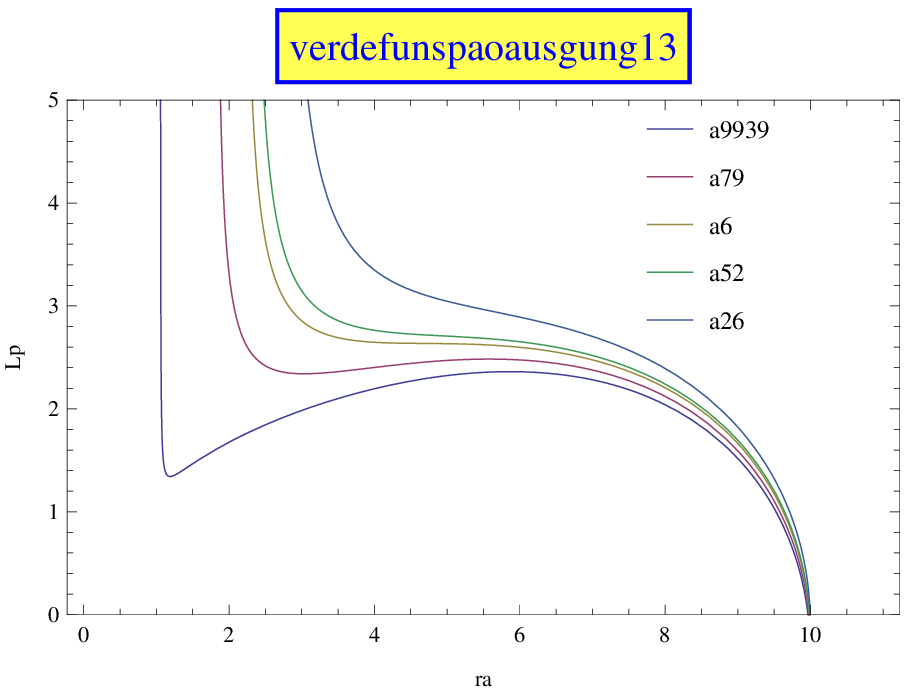}
 \caption{Specific angular momentum $L_{+}$ for $e=0.11, \Lambda^{\prime}=0.001$ for different values for the Kerr parameter.}%
\label{PlusLPlusLammbda}%
\end{center}
\end{figure}

\begin{figure}
[ptbh]
\psfrag{verdeleoforosausgung13}{$e=0.11,\;\Lambda^{\prime}=0.0001$}
\psfrag{ra}{$r$} \psfrag{Lp}{$L_{+}$}
\psfrag{a9939}{$a=0.9939$}
\psfrag{a79}{$a=0.7939$}
\psfrag{a6}{$a=0.6$}
\psfrag{a52}{$a=0.52$}
\psfrag{a2}{$a=0.2$}

\begin{center}
\includegraphics[height=2.4526in, width=3.3797in ]{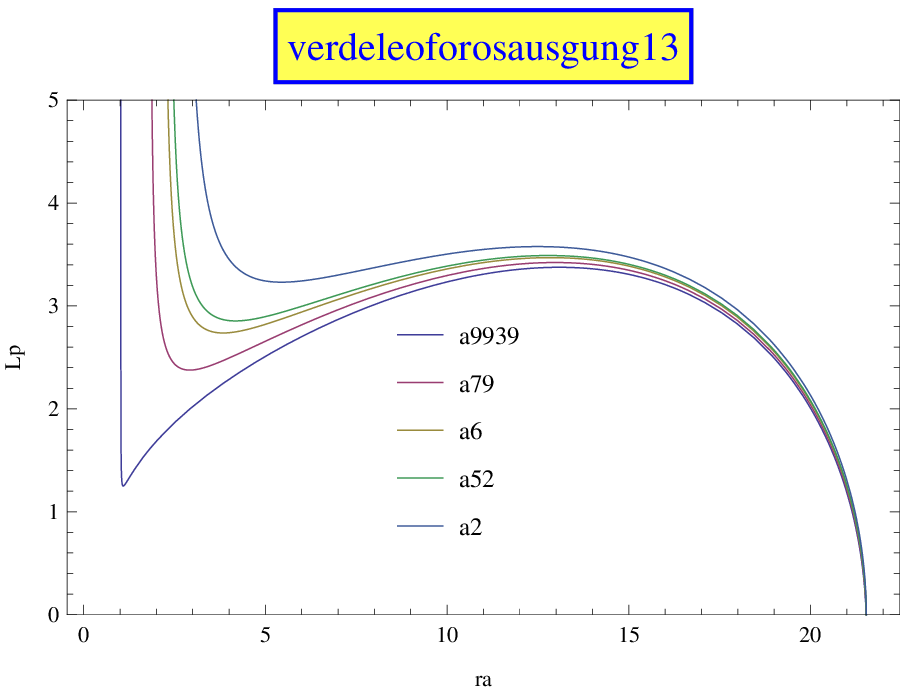}
 \caption{Specific angular momentum $L_{+}$ for $e=0.11, \Lambda^{\prime}=0.0001$ for different values for the Kerr parameter.}%
\label{PlusLPlusLambda}%
\end{center}
\end{figure}

\begin{figure}
[ptbh]
\psfrag{grueneleoforosausgung13}{$e=0.11,\;\;\Lambda^{\prime}=10^{-5}.$}
\psfrag{ra}{$r$} \psfrag{Lp}{$L_{+}$}
\psfrag{a9939}{$a=0.9939$}
\psfrag{a79}{$a=0.7939$}
\psfrag{a6}{$a=0.6$}
\psfrag{a52}{$a=0.52$}
\psfrag{a2}{$a=0.2$}

\begin{center}
\includegraphics[height=2.4526in, width=3.3797in ]{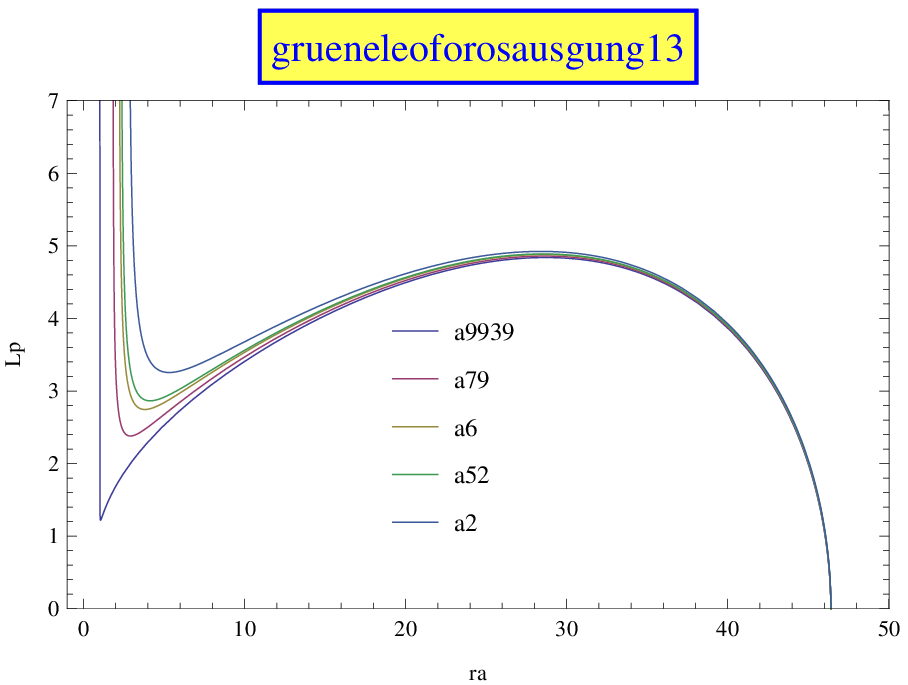}
 \caption{Specific angular momentum $L_{+}$ for $e=0.11, \Lambda^{\prime}=10^{-5}$ for different values for the Kerr parameter.}%
\label{PLPlusLambda}%
\end{center}
\end{figure}

\begin{figure}
[ptbh]

\psfrag{r}{$r$} \psfrag{Lplus}{$L_{+}$}
\psfrag{a52Lpm4e0}{$a=0.52,\Lambda^{\prime}=10^{-4},e=0$}
\psfrag{a52kerronly}{$a=0.52,\Lambda^{\prime}=0,e=0$}
\psfrag{a52L0e085}{$a=0.52,\Lambda^{\prime}=0,e=0.85$}
\psfrag{a52Lnm4e0}{$a=0.52,\Lambda^{\prime}=-10^{-4},e=0$}

\begin{center}
\includegraphics[height=2.4526in, width=3.3797in ]{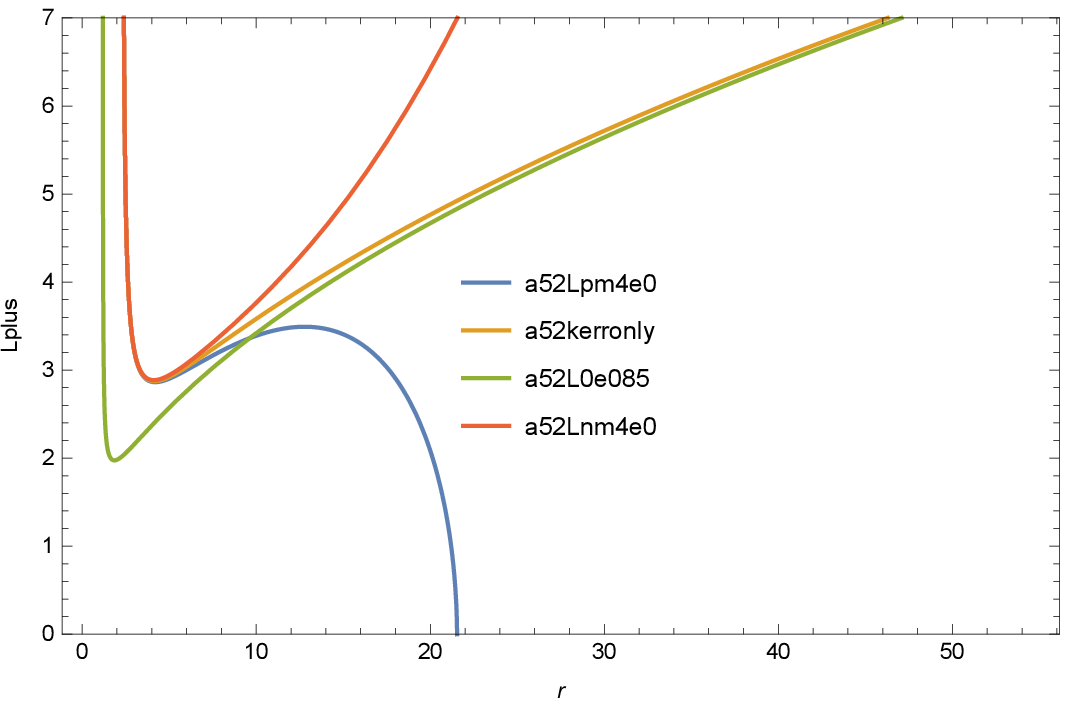}
 \caption{Radial profile for specific angular momentum $L_{+}$ of test particles moving on equatorial circular orbits in KN(a)dS black hole with Kerr parameter $a=0.52$, and various values for the cosmological constant and the black hole's electric charge. The case of the Kerr black hole $e=\Lambda^{\prime}=0$ is displayed.}%
\label{DLPlusKNadScompKERR}%
\end{center}
\end{figure}

\begin{figure}
[ptbh]

\psfrag{r}{$r$} \psfrag{Lplus}{$L_{+}$}
\psfrag{a9939Lpm4e0}{$a=0.9939,\Lambda^{\prime}=10^{-4},e=0$}
\psfrag{a9939kerronly}{$a=0.9939,\Lambda^{\prime}=0,e=0$}
\psfrag{a9939L0e085}{$a=0.9939,\Lambda^{\prime}=0,e=0.11$}
\psfrag{a9939Lnm4e0}{$a=0.9939,\Lambda^{\prime}=-10^{-4},e=0$}

\begin{center}
\includegraphics[height=2.4526in, width=3.3797in ]{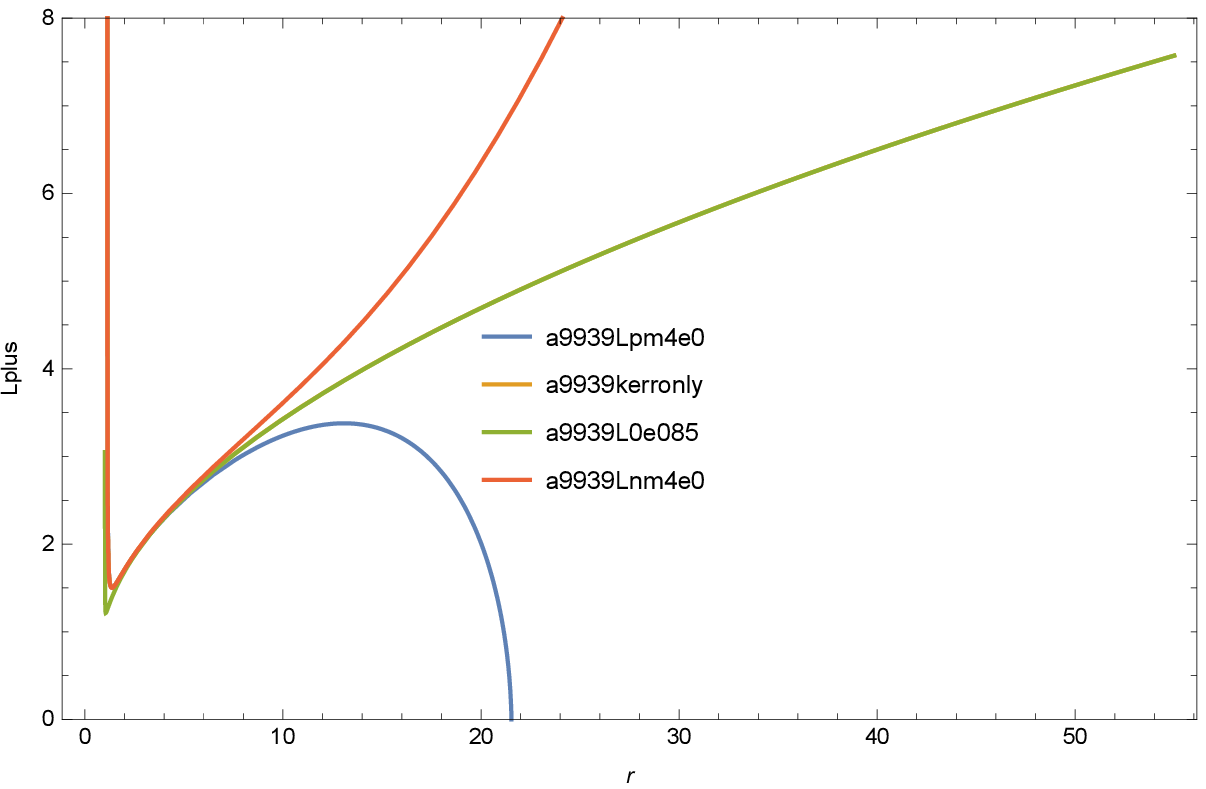}
 \caption{Radial profile for specific angular momentum $L_{+}$ of test particles moving on equatorial circular orbits in KN(a)dS black hole with Kerr parameter $a=0.9939$, and various values for the cosmological constant and the black hole's electric charge. The case of the Kerr black hole $e=\Lambda^{\prime}=0$ is displayed.}%
\label{ProgradeLKerr}%
\end{center}
\end{figure}

\begin{figure}
[ptbh]
\psfrag{Bulevardpaoausgung13a}{$e=0.11,\;\;\Lambda^{\prime}=0.001.$}
\psfrag{ra}{$r$} \psfrag{LnpC}{$L_{-}$}
\psfrag{a9939}{$a=0.9939$}
\psfrag{a79}{$a=0.7939$}
\psfrag{a6}{$a=0.6$}
\psfrag{a52}{$a=0.52$}
\psfrag{a2}{$a=0.2$}

\begin{center}
\includegraphics[height=2.4526in, width=3.3797in ]{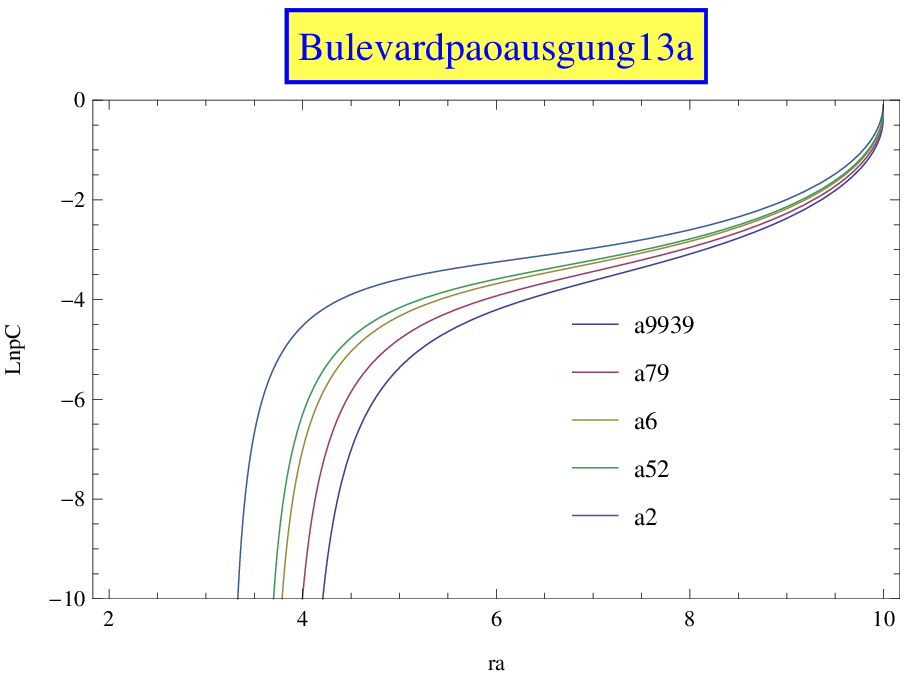}
 \caption{Specific angular momentum $L_{-}$ for $e=0.11, \Lambda^{\prime}=10^{-3}$ for different values for the Kerr parameter.}%
\label{negLposL1}%
\end{center}
\end{figure}

\begin{figure}
[ptbh]
\psfrag{Alleeleoforosausgung13}{$e=0.11,\;\;\Lambda^{\prime}=0.0001.$}
\psfrag{ra}{$r$} \psfrag{LnpC1}{$L_{-}$}
\psfrag{a9939}{$a=0.9939$}
\psfrag{a79}{$a=0.7939$}
\psfrag{a6}{$a=0.6$}
\psfrag{a52}{$a=0.52$}
\psfrag{a2}{$a=0.2$}

\begin{center}
\includegraphics[height=2.4526in, width=3.3797in ]{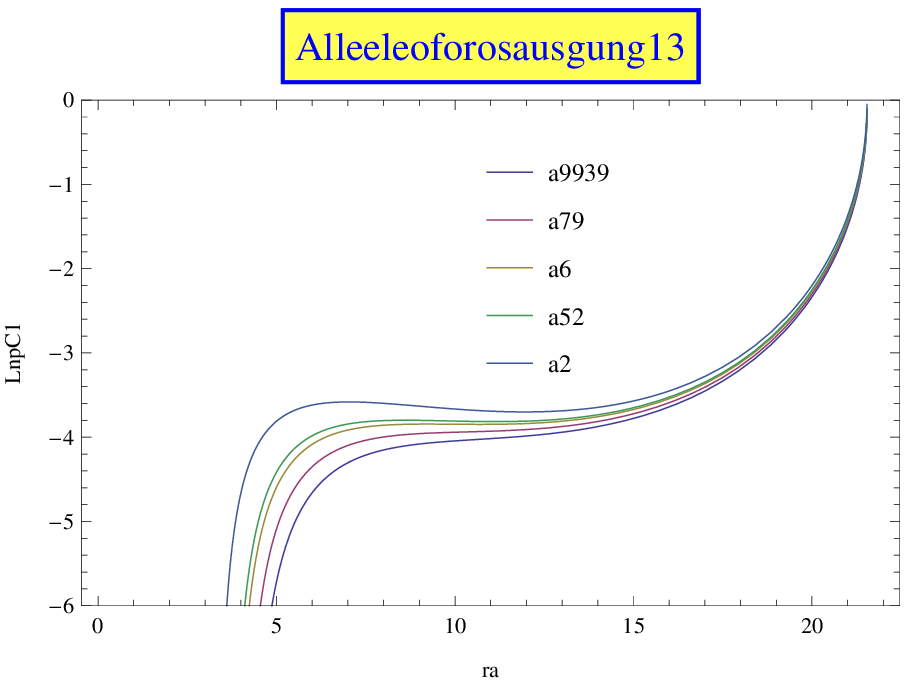}
 \caption{Specific angular momentum $L_{-}$ for $e=0.11, \Lambda^{\prime}=10^{-4}$ for different values for the Kerr parameter.}%
\label{negLposL2}%
\end{center}
\end{figure}

\begin{figure}
[ptbh]
\psfrag{Alleealexandrasausgung13}{$e=0.11,\;\;\Lambda^{\prime}=10^{-5}.$}
\psfrag{ra}{$r$} \psfrag{LnpC2}{$L_{-}$}
\psfrag{a9939}{$a=0.9939$}
\psfrag{a79}{$a=0.7939$}
\psfrag{a6}{$a=0.6$}
\psfrag{a52}{$a=0.52$}
\psfrag{a2}{$a=0.2$}

\begin{center}
\includegraphics[height=2.4526in, width=3.3797in ]{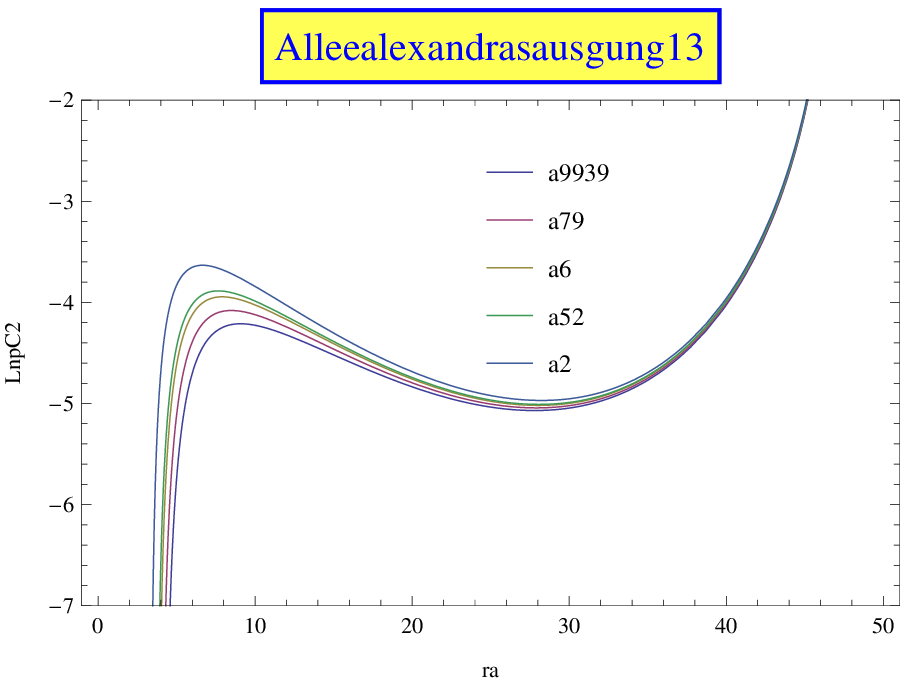}
 \caption{Specific angular momentum $L_{-}$ for $e=0.11, \Lambda^{\prime}=10^{-5}$ for different values for the Kerr parameter.}%
\label{negLposL3}%
\end{center}
\end{figure}

\begin{figure}
[ptbh]
\psfrag{locoparapanathinaikosg13}{$e=0.11,\Lambda^{\prime}=-0.001$}
\psfrag{ra}{$r$} \psfrag{Em}{$E_{+}$}
\psfrag{a79}{$a=0.7939$}
\psfrag{a6}{$a=0.6$}
\psfrag{a52}{$a=0.52$}
\psfrag{a26}{$a=0.26$}
\psfrag{a2}{$a=0.2$}
\begin{center}
\includegraphics[height=2.4526in, width=3.3797in ]{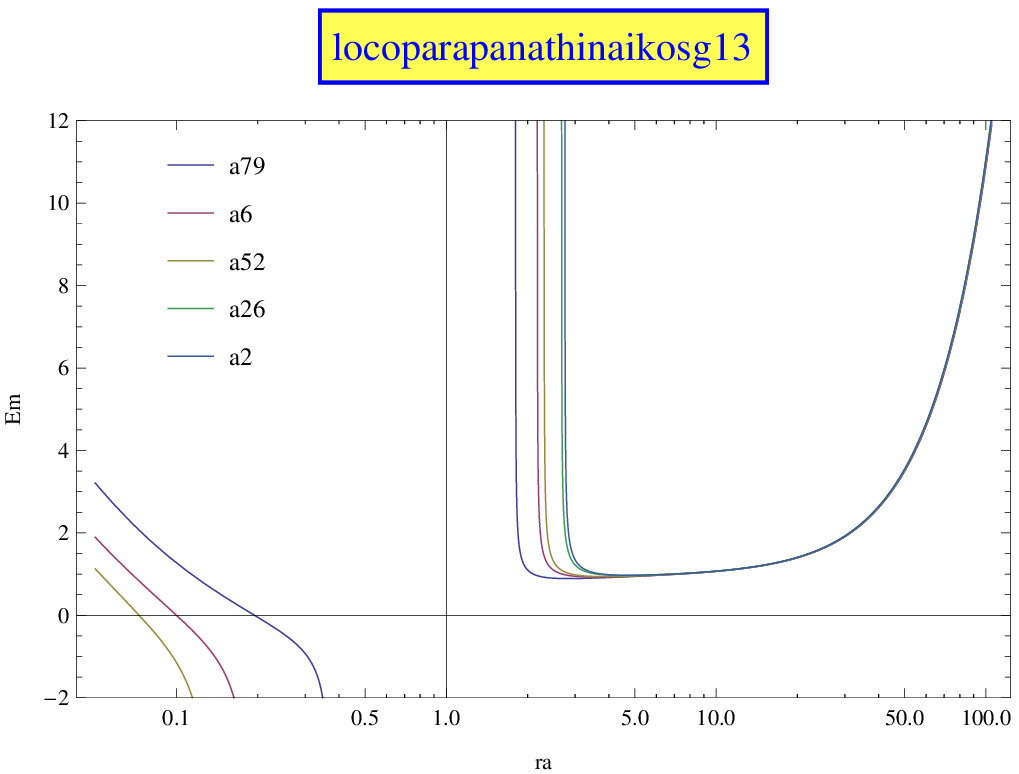}
 \caption{Specific energy $E_{+}$ for $e=0.11, \Lambda^{\prime}=-0.001$ for different values for the Kerr parameter. The radii of the event horizons for the specific spins of the black hole $(r_+,a)$ are:$(1.59128,0.7939),(1.78524,0.6),(1.83972,0.52),(1.9516,0.26),(1.96582,0.2)$. }%
\label{PlusEneg}%
\end{center}
\end{figure}

\begin{figure}
[ptbh]
\psfrag{greendvazelverdeausgung13}{$e=0.11,\Lambda^{\prime}=-0.001$}
\psfrag{ra}{$r$} \psfrag{Lp}{$L_{+}$}
\psfrag{a79}{$a=0.7939$}
\psfrag{a6}{$a=0.6$}
\psfrag{a52}{$a=0.52$}
\psfrag{a26}{$a=0.26$}
\psfrag{a2}{$a=0.2$}
\begin{center}
\includegraphics[height=2.4526in, width=3.3797in ]{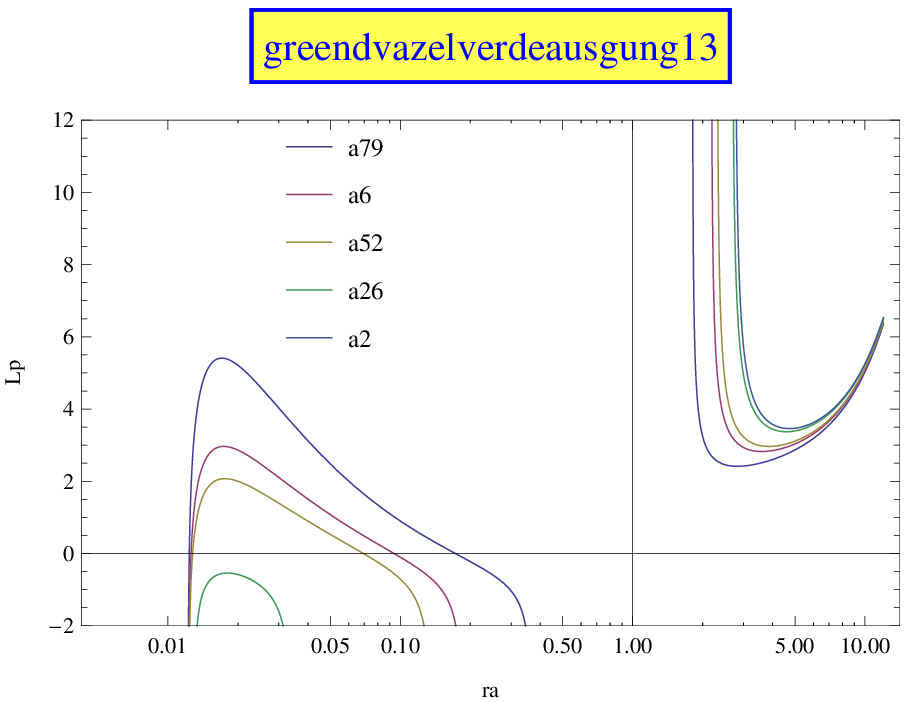}
 \caption{Specific momentum $L_{+}$ for $e=0.11, \Lambda^{\prime}=-0.001$ for different values for the Kerr parameter. The radii of the event horizons for the specific spins of the black hole $(r_+,a)$ are:$(1.59128,0.7939),(1.78524,0.6),(1.83972,0.52),(1.9516,0.26),(1.96582,0.2)$.}%
\label{PlusEnegAM}%
\end{center}
\end{figure}

\begin{figure}
[ptbh]
\psfrag{fitagate13paoreligion}{$e=0.6,\Lambda^{\prime}=-0.01$}
\psfrag{ra}{$r$} \psfrag{Lm}{$L_{-}$}
\psfrag{a52}{$a=0.52$}
\psfrag{a42}{$a=0.42$}
\psfrag{a026}{$a=0.26$}
\begin{center}
\includegraphics[height=2.4526in, width=3.3797in ]{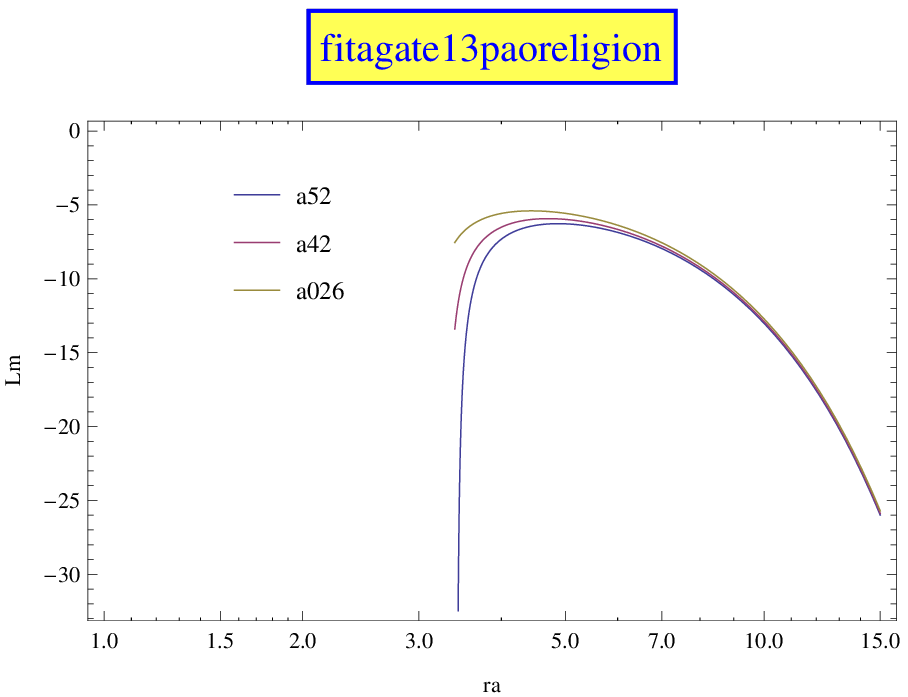}
 \caption{Specific angular momentum $L_{-}$ for $e=0.6, \Lambda^{\prime}=-0.01$ for different values for the Kerr parameter.}%
\label{Lneg}%
\end{center}
\end{figure}

\begin{figure}
[ptbh]
\psfrag{parasiempregatedreizehn}{$e=0.11,\Lambda^{\prime}=-0.001$}
\psfrag{ra}{$r$} \psfrag{Em}{$E_{-}$}
\psfrag{a6}{$a=0.6$}
\psfrag{a52}{$a=0.52$}
\psfrag{a26}{$a=0.26$}
\psfrag{a1}{$a=0.1$}
\begin{center}
\includegraphics[height=2.4526in, width=3.3797in ]{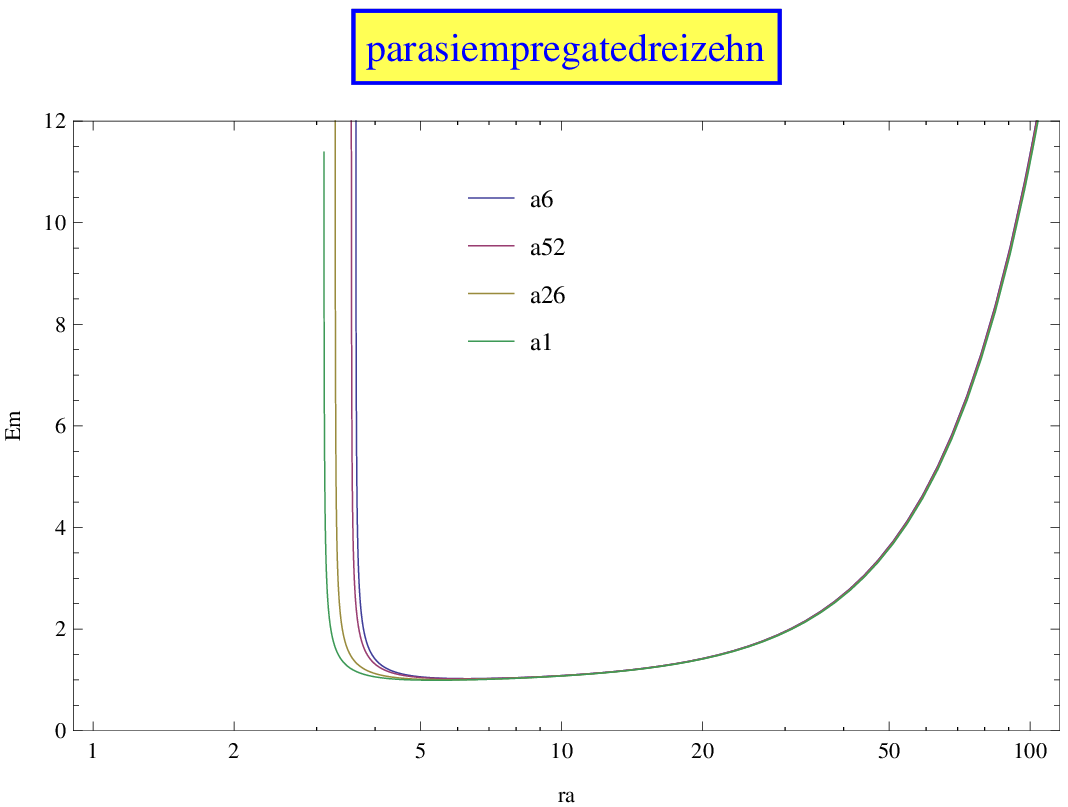}
 \caption{Specific energy $E_{-}$ for $e=0.11, \Lambda^{\prime}=-0.001$ for different values for the Kerr parameter.}%
\label{Eneg}%
\end{center}
\end{figure}

\begin{figure}
[ptbh]
\psfrag{fitathira13paoreligionps}{$e=0.11,\Lambda^{\prime}=-0.001$}
\psfrag{ra}{$r$} \psfrag{Lm}{$L_{-}$}
\psfrag{a9939}{$a=0.9939$}
\psfrag{a52}{$a=0.52$}
\psfrag{a26}{$a=0.26$}
\begin{center}
\includegraphics[height=2.4526in, width=3.3797in ]{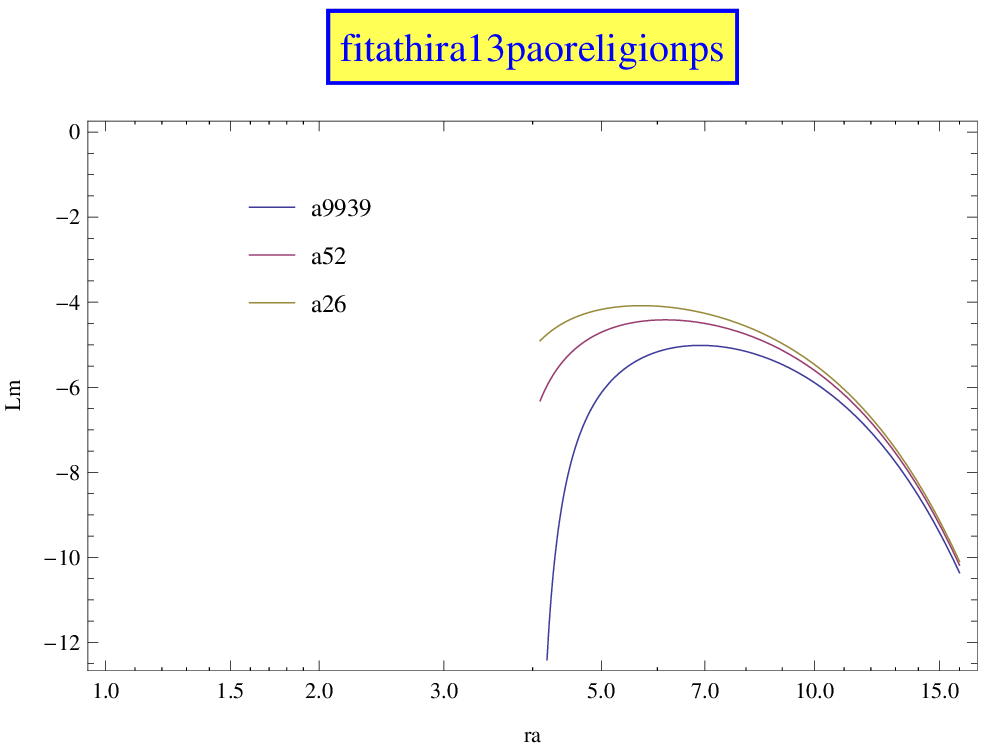}
 \caption{Specific angular momentum $L_{-}$ for $e=0.11, \Lambda^{\prime}=-0.001$ for different values for the Kerr parameter.}%
\label{Lneg3}%
\end{center}
\end{figure}

\begin{figure}
[ptbh]

\psfrag{r}{$r$} \psfrag{LHP}{$\Lambda^{\prime}_h$}
\psfrag{a0.3,e0.26}{$a=0.3,e=0.26$}
\psfrag{a0.7,e0.26}{$a=0.7,e=0.26$}
\psfrag{a0.9939,e0.05}{$a=0.9939,e=0.05$}

\begin{center}
\includegraphics[height=2.4526in, width=3.3797in ]{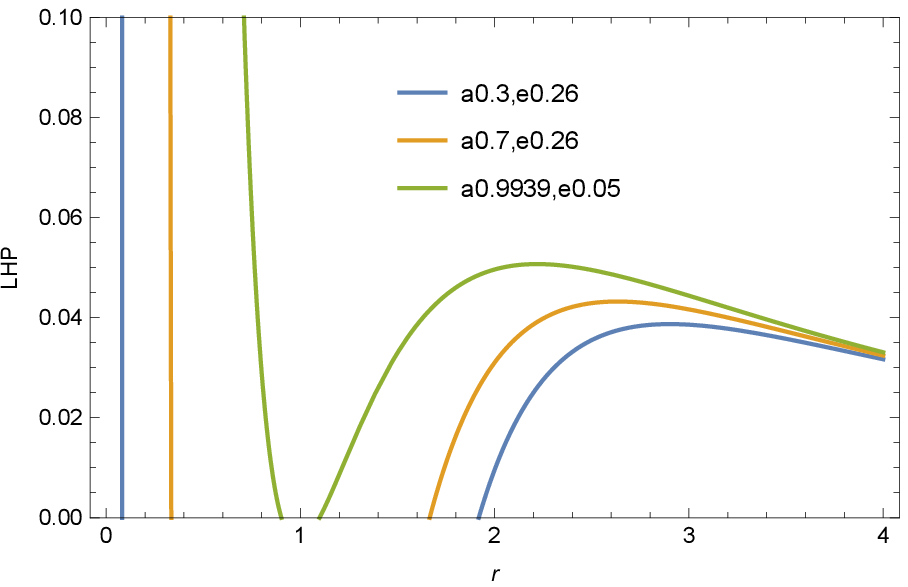}
 \caption{Horizons of the Kerr-Newman-de Sitter spacetimes defined by the function $\Lambda_h^{\prime}(r;a,e)$ illustrated for different values of the rotational (Kerr) parameter  $a$ and the electric charge of the black hole. }%
\label{HorizonsposLambda}%
\end{center}
\end{figure}

\begin{figure}
[ptbh]

\psfrag{r}{$r$} \psfrag{LHP}{$\Lambda^{\prime}_h$}
\psfrag{a0.3,e0.26}{$a=0.3,e=0.26$}
\psfrag{a0.7,e0.26}{$a=0.7,e=0.26$}
\psfrag{a0.9939,e0.05}{$a=0.9939,e=0.05$}
\psfrag{a052e06}{$a=0.52,e=0.6$}
\psfrag{a052e0}{$a=0.52,e=0$}

\begin{center}
\includegraphics[height=2.4526in, width=3.3797in ]{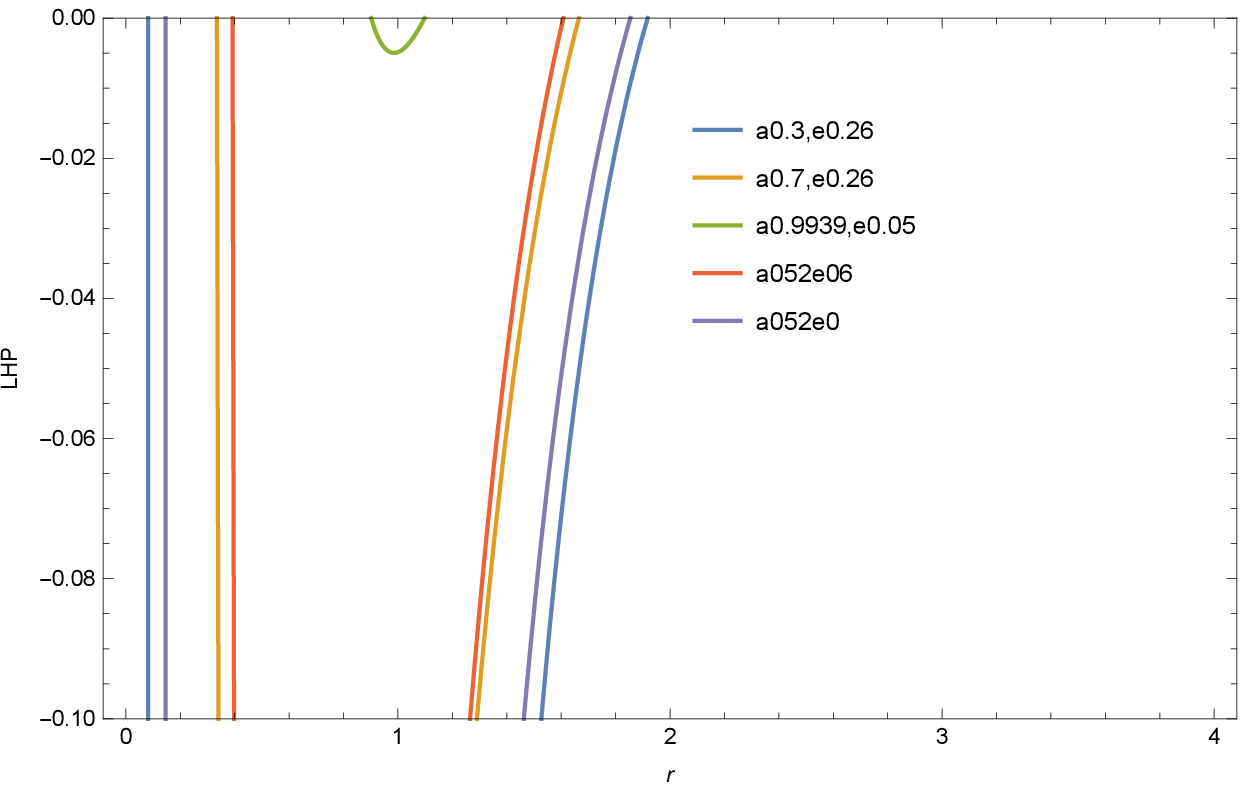}
 \caption{Horizons of the Kerr-Newman-anti de Sitter spacetimes defined by the function $\Lambda_h^{\prime}(r;a,e)$ illustrated for different values of the rotational (Kerr) parameter  $a$ and the electric charge of the black hole. }%
\label{HorizonsnegLambda}%
\end{center}
\end{figure}

\subsection{Stability of circular equatorial orbits in Kerr-Newman-de Sitter spacetime}
The loci of the stable equatorial circular orbits are determined by the inequality condition:
\begin{equation}
\frac{{\rm d}^2R^{\prime}}{{\rm d}r^2}\geq 0,
\end{equation}
which has to be satisfied simultaneously with the conditions $R^{\prime}(r)=0$
and ${\rm d}R^{\prime}/{\rm d}r=0$ determining the specific energy and specific angular momentum of the constant radius orbits. Using relations
(\ref{marvelOne})-(\ref{marvelTwo}) we find that:
\begin{align}
&\frac{{\rm d}^2R^{\prime}}{{\rm d}r^2}\geq 0\Leftrightarrow \nonumber\\
\Biggl\{\Biggl[&\mp 8ar^6(\frac{1}{r^3}-\Lambda^{\prime}-\frac{e^2}{r^4})^{3/2}+r^2(6-r+r^3(-15+4r)\Lambda^{\prime})+
4e^4+3e^2r(-3+4r^3\Lambda^{\prime})\nonumber\\
&+a^2(-4e^2+r[3+r^2\Lambda^{\prime}(1-4r^3\Lambda^{\prime})])\Biggr]2
(2e^2+r(-3+r\pm 2ar\sqrt{\frac{1}{r^3}-\Lambda^{\prime}-\frac{e^2}{r^4}}-
a^2r\Lambda^{\prime}))\Biggr\}/\nonumber\\
&[2e^2+r(-3+r\pm 2ar\sqrt{\frac{1}{r^3}-\Lambda^{\prime}-\frac{e^2}{r^4}}-
a^2r\Lambda^{\prime})]^2\geq 0.
\end{align}
Due to reality conditions (\ref{reality1}) we find that the radii of the stable orbits in Kerr-Newman-de Sitter spacetime are restricted by the inequality:
\begin{align}
&\mp 8ar^6(\frac{1}{r^3}-\Lambda^{\prime}-\frac{e^2}{r^4})^{3/2}+r^2(6-r+r^3(-15+4r)\Lambda^{\prime})+
4e^4+3e^2r(-3+4r^3\Lambda^{\prime})\nonumber\\
&+a^2(-4e^2+r[3+r^2\Lambda^{\prime}(1-4r^3\Lambda^{\prime})])\geq0.
\label{equilibriumKNdS}
\end{align}

The marginally stable orbits can be obtained by solving the quadratic equation (embedded in (\ref{equilibriumKNdS}))  for the rotational parameter a procedure that yields the relation:
\begin{align}
a^2&=a^2_{{\rm ms}(1,2)}\nonumber \\
&\equiv\Biggl\{ 128 r(1-r^3\Lambda^{\prime}-e^2/r)^3r^2\nonumber \\
&-4[-4e^2+r(3+r^2\Lambda^{\prime}(1-4r^3\Lambda^{\prime}))]
\{r^2(6-r+r^3(-15+4r)\Lambda^{\prime})+4e^4+3e^2r(-3+4r^3\Lambda^{\prime})\}\nonumber\\
&\pm 32[r(1-r^3\Lambda^{\prime}-e^2/r)^3]^{1/2}r\sqrt{r^3}\nonumber\\
&\times\sqrt{(1-4\Lambda^{\prime}r^3)}
\sqrt{-2+3r-r^2\Lambda^{\prime}(6+10r-15\Lambda^{\prime}r^3)-4e^4\Lambda^{\prime}
+\frac{3e^2}{r}(1+4r^3\Lambda^{\prime})-4e^2+9\Lambda^{\prime}e^2r-12r^4\Lambda^{\prime}e^2}
\Biggr\}
\nonumber\\
&4^{-1}[-4e^2+r(3+r^2\Lambda^{\prime}(1-4r^3\Lambda^{\prime}))]^{-2}.
\label{marginal}
\end{align}
From (\ref{marginal}) we deduce the following  reality conditions :
\begin{equation}
\Lambda^{\prime}\leq \Lambda^{\prime}_{{\rm ms}}(r)\equiv \frac{1}{4r^3},
\end{equation}
and
\begin{align}
&-2+3r-r^2\Lambda^{\prime}(6+10r-15\Lambda^{\prime}r^3)-4e^4\Lambda^{\prime}
+\frac{3e^2}{r}(1+4r^3\Lambda^{\prime})-4e^2+9\Lambda^{\prime}e^2r-12r^4\Lambda^{\prime}e^2\geq0
\\&\Leftrightarrow
\Lambda^{\prime 2}15r^5+\Lambda^{\prime}[-r^2(6+10r)-4e^4+12e^2r^2+9e^2r-12r^4e^2]-2+3r+\frac{3e^2}{r}-4e^2\geq0.
\label{trionimo}
\end{align}
From (\ref{trionimo}) and the fact that $15r^5>0$ we derive two more conditions:
\begin{equation}
\Lambda^{\prime}\leq \Lambda_{{\rm ms}-}^{\prime}\;\; \text{or}\;\; \Lambda^{\prime}\geq \Lambda_{{\rm ms}+}^{\prime},
\end{equation}
where $\Lambda_{{\rm ms}\pm}^{\prime}$ are the two roots of the quadratic equation (\ref{trionimo}).
A detailed discussion of stability of the geodesic circular equatorial motion in the KNdS spacetimes is presented in \cite{plaglogo}.

\subsubsection{Stability of equatorial circular geodesics in Kerr-de Sitter spacetime}

Our results in inequality (\ref{equilibriumKNdS}), for zero electric charge, $e=0$, give a condition \footnote{We observe that for $e=\Lambda=0$, (\ref{equilibriumKNdS}) reduces correctly to the equation that determines the radii of marginally stable orbits in Kerr spacetime: $r^2-6Mr-3a^2\mp8a\sqrt{Mr}=0$ for $M=1$, \cite{Bardeen}.} that agrees with the results in \cite{opav2} and restricts
 the radii of the stable orbits in   Kerr-de Sitter spacetime:

\begin{equation}
\pm 8ar^2\sqrt{\frac{1}{r^3}-\Lambda^{\prime}}\;(-1+r^3\Lambda^{\prime})
+r(6-r+r^3(-15+4r)\Lambda^{\prime})+a^2(3+r^2\Lambda^{\prime}(1-4r^3\Lambda^{\prime}))\geq0.
\end{equation}
\section{Gravitational redshift-blueshift of emitted photons}\label{grredblue}

In this section we will provide general expressions for the redshift/blueshift that emitted photons by massive particles experience while travelling along null geodesics towards an observed located far away from their source.

In general, the frequency of a photon measured by an observer with proper velocity $U^{\mu}_A$ at the spacetime point $P_A$ reads\cite{HansOhanian},\cite{HERRERA}:
\begin{equation}
\omega_A=k_{\mu}U^{\mu}_A|_{P_A},
\end{equation}
where the index $A$ refers to the emission $(e)$ and/or detection $(d)$ at the corresponding point $P_A$.

The emission frequency is defined as follows:
\begin{align}
\omega_e&=k_{\mu}U^{\mu}\nonumber \\
&=k_tU^t+k_rU^r+k_{\theta}U^{\theta}+k_{\phi}U^{\phi}\nonumber \\
&=(k^tE-k^{\phi}L+g_{rr}k^rU^r+g_{\theta\theta}k^{\theta}U^{\theta})|_e.
\label{emitter}
\end{align}
Likewise the detected frequency is given by the expression:
\begin{align}
\omega_d&=+k_{\mu}U^{\mu}\nonumber \\
&=(Ek^t-Lk^{\phi}+g_{rr}k^rU^r+g_{\theta\theta}k^{\theta}U^{\theta})|_d.
\label{detection}
\end{align}
In producing (\ref{emitter}),(\ref{detection}) we used the expressions for $U^t$ and $U^{\phi}$ in terms of the metric components and the conserved quantities $E,L$:
\begin{align}
U^t&=\frac{-E g_{\phi\phi}-Lg_{t\phi}}{g^2_{t \phi}-g_{tt}g_{\phi\phi}},\\
U^{\phi}&=\frac{g_{tt}L+g_{\phi t}E}{g^2_{t \phi}-g_{tt}g_{\phi\phi}}.
\end{align}

Thus, the frequency shift associated to the emission and detection of photons is given by either of the following relations:
\begin{align}
1+z&=\frac{\omega_e}{\omega_d}\nonumber\\
&=\frac{(k^tE-k^{\phi}L+g_{rr}k^rU^r+g_{\theta\theta}k^{\theta}U^{\theta})|_e}{(Ek^t-Lk^{\phi}+g_{rr}k^rU^r+g_{\theta\theta}k^{\theta}U^{\theta})|_d}\nonumber \\
&=\frac{(E_{\gamma}U^t-L_{\gamma}U^{\phi}+g_{rr}K^rU^r+g_{\theta\theta}K^{\theta}U^{\theta})|_e}{(
E_{\gamma}U^t-L_{\gamma}U^{\phi}+g_{rr}K^rU^r+g_{\theta\theta}K^{\theta}U^{\theta})|_d}
\label{gravrbshiftdoppler}
\end{align}

This is the most general expression for the redshift/blueshift that light signals emitted by massive test particles experience in their path along null geodesics towards a distant observer (ideally located near the cosmological horizon in particular or at spatial infinity assuming a  zero cosmological constant).

\subsection{The redshift/blueshift of photons for circular and equatorial emitter/detector orbits around the Kerr-Newman-(anti) de Sitter black hole}\label{erithrimplemetatopisi}
For equatorial circular orbits $U^r=U^{\theta}=0$ thus
\begin{equation}
1+z=\frac{(E_{\gamma}U^t-L_{\gamma}U^{\phi})|_e}{(E_{\gamma}U^t-L_{\gamma}U^{\phi})|_d}
=\frac{U^t-\Phi U^{\phi}|_e}{U^t-\Phi U^{\phi}|_d}=\frac{U^t_e-\Phi_e U^{\phi}_e}{U^t_d-\Phi_d U^{\phi}_d}=\frac{U^t_e-\Phi U^{\phi}_e}{U^t_d-\Phi U^{\phi}_d},
\label{initialredshift}
\end{equation}
where $\Phi=L_{\gamma}/E_{\gamma}$ \footnote{Since the constants of motion $E_{\gamma}$ and $L_{\gamma}$ are preserved along the null geodesics followed by the photons from emission till detection, we have $\Phi_e=\Phi_d=\Phi$, i.e. this quantity is also constant along the whole photons trajectory.}. For $\Phi=0,1+z_c=\frac{U_e^t}{U_d^t}$. Following the procedure for the Kerr black hole in \cite{HERRERA}, we consider the kinematic redshift of photons either side of the line of sight that links the Kerr-Newman-de Sitter black hole and the observer, and subtract from Eq.(\ref{initialredshift}) the central value $z_c$.  We note that $z_c$ corresponds to a gravitational frequency shift of a photon emitted by a static particle located in a radius equal to the circular orbit radius and on the signal line going from the centre of coordinates to the far detector \cite{HERRERA},\cite{becerril}. Then we obtain:
\begin{align}
z_{{\rm kin}}&\equiv z-z_c=\frac{U^t_e-\Phi_e U^{\phi}_e}{U^t_d-\Phi_d U^{\phi}_d}-\frac{U_e^t}{U_d^t}\nonumber\\
&=\frac{\Phi U_d^{\phi}U_e^t-\Phi U_e^{\phi}U_d^t}{U_d^tU_d^t-\Phi U_d^{\phi}U_d^t}.
\label{kineticredshift}
\end{align}
We further comment on the different gravitational frequency shifts of photons included in (\ref{initialredshift}) and (\ref{kineticredshift}). In Eq.(\ref{initialredshift}), the redshift/blueshift is indeed gravitational, but it includes an equivalent Doppler effect (redshift/blueshift) as the emitter moves towards/away from the observer along the timelike circular orbit and, additionally, two gravitational effects: a gravitational redshift for the photon emitted by a static particle and a redshift/blueshift due to the rotation of the space time (as is the case for the Kerr-Newman-de Sitter black hole spacetime).

Let us now consider photons with 4-momentum vector $k^{\mu}=(k^t,k^r,k^{\theta},k^{\phi})$ which move along null geodesics $k^{\mu}k_{\mu}=0$ outside the event horizon of the Kerr-Newman-de Sitter black hole, which explicitly can be expressed as
\begin{align}
0=&g_{tt}(k^t)^2+2g_{t\phi}(k^{t}k^{\phi})+g_{\phi\phi}(k^{\phi})^2+g_{rr}(k^r)^2\nonumber \\
&+g_{\theta\theta}(k^{\theta})^2.
\end{align}

\begin{align}
k^t&=\frac{\Xi^2\Delta_{\theta}(r^2+a^2)[(r^2+a^2)E_{\gamma}-aL_{\gamma}]-a\Xi^2\Delta_r^{KN}(aE_{\gamma}\sin^2\theta-L_{\gamma})}{\Delta_r^{KN}\Delta_{\theta}\rho^2}\nonumber \\
&=\frac{E_{\gamma}[\Xi^2\Delta_{\theta}(r^2+a^2)^2-a^2\sin^2\theta\Xi^2\Delta_r^{KN}]+L_{\gamma}[-a\Xi^2\Delta_{\theta}(r^2+a^2)+a\Xi^2\Delta_r^{KN}]}{\Delta_r^{KN}\Delta_{\theta}\rho^2},
\\
k^{\phi}&=\frac{-\Xi^2\Delta_r^{KN}(aE_{\gamma}\sin^2\theta-L_{\gamma})+
a\Xi^2\Delta_{\theta}\sin^2\theta[(r^2+a^2)E_{\gamma}-a L_{\gamma}
]}{\Delta_r^{KN}\Delta_{\theta}\rho^2\sin^2\theta}\nonumber \\
&=\frac{[-\Xi^2\Delta_r^{KN}a\sin^2\theta+a\Xi^2\Delta_{\theta}\sin^2\theta(r^2+a^2)]E_{\gamma}
+L_{\gamma}[\Xi^2\Delta_r^{KN}-a^2\Xi^2\Delta_{\theta}\sin^2\theta]}{\Delta_r^{KN}\Delta_{\theta}\rho^2\sin^2\theta},\\
(k^{\theta})^2&=\frac{\mathcal{Q}_{\gamma}\Delta_{\theta}+(L_{\gamma}-aE_{\gamma})\Xi^2\Delta_{\theta}-\frac{\Xi^2(aE_{\gamma}\sin^2\theta-L_{\gamma})^2}{\sin^2\theta}}{\rho^4},\\
(k^{r})^2&=\frac{\Xi^2[(r^2+a^2)E_{\gamma}-aL_{\gamma}]^2-\Delta_r^{KN}(\mathcal{Q}_{\gamma}+\Xi^2(L_{\gamma}-aE_{\gamma})^2)}{\rho^4}.
\end{align}

We must take into account the  bending of light from the rotating and charged  Kerr-Newman-(anti) de Sitter black hole. In other words we need to find the apparent impact parameter $\Phi$ for every orbit, i.e. to find the map, $\Phi=\Phi(r)$ as a function of the radius $r$ of the circular orbit associated with the emitter.  The criteria employed in \cite{HERRERA} to construct this mapping is to choose the maximum value of $z$ at a fixed distance from the observed centre of the source at a fixed $\Phi$ and that the apparent impact parameter must also be maximised.
The apparent impact factor $\Phi_{\gamma}\equiv L_{\gamma}/E_{\gamma}$ can be obtained from the expression $k^{\mu}k_{\mu}=0$ \footnote{Taking into account that $k^r=0$ and $k^{\theta}=0$.} as follows:
\begin{align}
&k^{\mu}k_{\mu}=0\Leftrightarrow k^tk_t+k^{\phi}k_{\phi}=0\nonumber \\
&\Leftrightarrow \left[\frac{E_{\gamma}g_{\phi\phi}+
g_{t\phi}L_{\gamma}}{g^2_{t\phi}-g_{\phi\phi}g_{tt}}\right](-E_{\gamma})+
\left[\frac{-L_{\gamma}g_{tt}-E_{\gamma}g_{\phi t}}{g^2_{t\phi}-g_{\phi\phi}g_{tt}}\right]L_{\gamma}=0\nonumber\\
&\Leftrightarrow g_{\phi\phi}+2g_{t\phi}\Phi_{\gamma}+\Phi^2_{\gamma}g_{tt}=0
\end{align}
Solving the quadratic equation we obtain:
\begin{equation}
\Phi_{\gamma}^{\pm}=\frac{-g_{\phi t}\pm \sqrt{g^2_{t\phi}-g_{\phi\phi}g_{tt}}}{g_{tt}}=\frac{a(\Delta_r^{KN}-(r^2+a^2))\pm r^2\sqrt{\Delta_r^{KN}}}{\Delta_r^{KN}-a^2},
\label{phiparagontas}
\end{equation}
where we got two values, $\Phi_{\gamma}^{+}$ and $\Phi_{\gamma}^{-}$ (either evaluated at the emitter or detector position, since this quantity is preserved along the null geodesic photon orbits, i.e., $\Phi_e=\Phi_d$) that give rise to two different shifts respectively, $z_1$ and $z_2$ of the emitted photons corresponding to a receding and to an approaching object with respect to a far away positioned observer:
\begin{align}
z_1&=\frac{\Phi_d^-U_d^{\phi}U_e^t-\Phi_e^-U_e^{\phi}U_d^t}{U_d^t(U_d^t-
\Phi_d^-U_d^{\phi})}=\frac{\Phi_e^-U_d^{\phi}U_e^t-\Phi_e^-U_e^{\phi}U_d^t}{U_d^t(U_d^t-
\Phi_e^-U_d^{\phi})}\label{red1ein},\\
z_2&=\frac{\Phi_d^+U_d^{\phi}U_e^t-\Phi_e^+U_e^{\phi}U_d^t}{U_d^t(U_d^t-
\Phi_d^+U_d^{\phi})}=\frac{\Phi_e^+U_d^{\phi}U_e^t-\Phi_e^+U_e^{\phi}U_d^t}{U_d^t(U_d^t-
\Phi_e^+U_d^{\phi})}\label{red2zwei}.
\end{align}
In general the two values $z_1$ and $z_2$ differ from each other due to light bending experienced by the emitted photons and the differential rotation experienced by the detector as encoded in $U_d^{\phi}$ and $U_d^t$ components of the four-velocity \footnote{The second term in the denominator in equations (\ref{red1ein})-(\ref{red2zwei}) encodes the contribution of the movement of the detector's frame \cite{HERRERA}. If this quantity is negligible in comparison to the contribution steming from the $U^t_d$ component
($U_d^{\phi}\ll U_d^t$) then the detector can be considered static at spatial infinity for the case of Kerr black hole. There are no static observers in infinity of the Kerr-de Sitter (KdS) spacetimes \cite{LightescapeConeskds}. Indeed, in the KdS black hole spacetimes the free static observers are expected near the static radius representing the outermost region of gravitationally bound systems in Universe with accelerated cosmic expansion-see discussion in \cite{LightescapeConeskds}.}.

In order to get a closed analytic expression for the gravitational redshift/blueshift experienced by the emitted photons we shall express the required quantities in terms of the Kerr-Newman-(anti) de Sitter metric. Thus, the $U^{\phi}$ and $U^t$ components of the four-velocity for circular equatorial orbits read:
\begin{align}
U^t(r,\pi/2)&=\frac{-(\Delta_r^{\rm KN}a^2-(r^2+a^2)^2)E-L(-a (\Delta_r^{\rm KN}-(r^2+a^2)))}{\frac{\Xi^2 r^2 \Delta_r^{\rm KN}}{\Xi^4}},\\
U^{\phi}(r,\theta=\pi/2)&=\frac{\Xi^2(\Delta_r^{KN}-a^2)L+E\Xi^2(-a(\Delta_r^{KN}-(r^2+a^2)))}{r^2\Delta_r^{KN}}.
\end{align}

Substituting the expressions (\ref{marvelOne})-(\ref{marvelTwo}) for $E_{\pm}$ and $L_{\pm}$ into $U^t(r,\pi/2),U^{\phi}(r,\pi/2)$
we finally obtain remarkable novel expressions for these four-velocity components in Kerr-Newman-(anti) de Sitter spacetime:
\begin{align}
U^{t}(r,\pi/2)&=\frac{(r^2\pm a\sqrt{-e^2+r^4\left(\frac{1}{r^3}-\Lambda^{\prime}\right)})\;\Xi^2}{r
\sqrt{2e^2+r(r-3)-r^2a^2\Lambda^{\prime}\pm 2a\sqrt{-e^2+r^4\left(\frac{1}{r^3}-\Lambda^{\prime}\right)}}},\\
U^{\phi}(r,\pi/2)&=\frac{\pm \sqrt{-e^2+r^{4}\left(\frac{1}{r^3}-\Lambda^{\prime}\right)}\;\Xi^2}{
r\sqrt{2e^2+r(r-3)-r^2a^2\Lambda^{\prime}\pm 2 a\sqrt{-e^2+r^4
\left(\frac{1}{r^3}-\Lambda^{\prime}\right)}}}.
\end{align}

We now compute the angular velocity $\Omega$:
\begin{align}
\Omega\equiv\frac{{\rm d}\phi}{{\rm d}t}=\frac{1}{a\pm\frac{r^{3/2}}{\sqrt{1-\Lambda^{\prime}r^3-\frac{e^2}{r}}}}.
\end{align}

In terms of the angular velocities the quantities $z_1,z_2$ read as follows:
\begin{align}
z_1&=\frac{\Phi_d^- \Omega_d U_e^t-\Phi_e^-U_e^{\phi}}{U_d^t-\Phi_d^-U_d^{\phi}}=\frac{U_e^t[\Phi_d^-\Omega_d-\Phi_e^-\Omega_e]}{U_d^t(1-\Phi_d^-\Omega_d)}\nonumber\\
&=\frac{U_e^t\Phi_e^-[\Omega_d-\Omega_e]}{U_d^t(1-\Phi_e^-\Omega_d)},\\
z_2&=\frac{\Phi_d^+\Omega_dU_e^t-\Phi_e^+U_e^{\phi}}{U_d^t-\Phi_d^+U_d^{\phi}}=\frac{\Phi_d^+\Omega_dU_e^t-\Phi_e^+U_e^{\phi}}{U_d^t(1-\Phi_d^+\Omega_d)}\nonumber\\
&=\frac{U_e^t[\Phi_d^+\Omega_d-\Phi_e^+\Omega_e]}{U_d^t(1-\Phi_d^+\Omega_d)}\nonumber \\
&=\frac{U_e^t\Phi_e^+[\Omega_d-\Omega_e]}{U_d^t(1-\Phi_e^+\Omega_d)}.
\end{align}

Thus for the Kerr-Newman-(anti) de Sitter black hole we can write for the redshift and blueshift, respectively:

\begin{align}
z_{\rm red}&=\frac{\Omega_d\Phi_d^{-}-\Phi_e^-\Omega_e}{1-\Phi_d^- \Omega_d}\frac{[r_e^{3/2}\pm a \sqrt{\mathfrak{G}_e}]}{
r_e^{3/4}\sqrt{\frac{2e^2}{r_e^{1/2}}+r_e^{3/2}-3\sqrt{r_e}-r_e^{3/2}a^2\Lambda^{\prime}\pm
 2 a\sqrt{\mathfrak{G}_e}}}\nonumber \\
 &\times \frac{r_d^{3/4}\sqrt{\frac{2e^2}{r_d^{1/2}}+
 r_d^{3/2}-3\sqrt{r_d}-r_d^{3/2}a^2\Lambda^{\prime}\pm 2 a
 \sqrt{\mathfrak{G}_d}}}{
 r_d^{3/2}\pm a \sqrt{\mathfrak{G}_d}},
\end{align}

\begin{align}
z_{\rm blue}&=\frac{\Omega_d\Phi_d^{+}-\Phi_e^+\Omega_e}{1-\Phi_d^+ \Omega_d}\frac{[r_e^{3/2}\pm a \sqrt{\mathfrak{G}_e}]}{
r_e^{3/4}\sqrt{\frac{2e^2}{r_e^{1/2}}+r_e^{3/2}-3\sqrt{r_e}-r_e^{3/2}a^2\Lambda^{\prime}\pm
 2 a\sqrt{\mathfrak{G}_e}}}\nonumber \\
 &\times \frac{r_d^{3/4}\sqrt{\frac{2e^2}{r_d^{1/2}}+
 r_d^{3/2}-3\sqrt{r_d}-r_d^{3/2}a^2\Lambda^{\prime}\pm 2 a
 \sqrt{\mathfrak{G}_d}}}{
 r_d^{3/2}\pm a \sqrt{\mathfrak{G}_d}},
\end{align}
where now $r_e$ and $r_d$ stand for the radius of the emitter's and detector's orbits, respectively.  We also define
\begin{align}
\mathfrak{G}_d &\equiv -e^2/r_d+r_d^3\left(\frac{1}{r_d^3}-\Lambda^{\prime}\right),\\
\mathfrak{G}_e &\equiv  -e^2/r_e+r_e^3\left(\frac{1}{r_e^3}-\Lambda^{\prime}\right)
\end{align}

These elegant and novel expressions can be written in terms of the physical parameters of the Kerr-Newman-(anti) de Sitter black hole and the detector radius, $r_d$, as follows:

\begin{align}
&z_{\rm red}=\frac{r_d^{3/4}\sqrt{\frac{2e^2}{r_d^{1/2}}+
r_d^{3/2}-3\sqrt{r_d}-r_d^{3/2}a^2\Lambda^{\prime}\pm
2a\sqrt{\mathfrak{G}_d}}}{
r_e^{3/4}\sqrt{\frac{2e^2}{r_e^{1/2}}+r_e^{3/2}-3\sqrt{r_e}-
r_e^{3/2}a^2\Lambda^{\prime}\pm 2 a\sqrt{\mathfrak{G}_e}}}\nonumber
\\
&\times\frac{\left(a(-\Lambda^{\prime}r_e^2(r_e^2+a^2)-2r_e+e^2)-r_e^2
\sqrt{\Delta_r^{KN}(r_e)}\right)(\pm [r_e^{3/2}\sqrt{\mathfrak{G}_d}-r_d^{3/2}\sqrt{\mathfrak{G}_e}])}{
(r_d^{3/2}\pm a\sqrt{\mathfrak{G}_d})
[(\Delta_r^{KN}(r_e)-a^2)r_d^{3/2}+(ar_e^2+r_e^2\sqrt{\Delta_r^{KN}(r_e)})(\pm \sqrt{\mathfrak{G}_d})]},
\label{shiftred}
\end{align}

\begin{align}
&z_{\rm blue}=\frac{r_d^{3/4}\sqrt{\frac{2e^2}{r_d^{1/2}}+
r_d^{3/2}-3\sqrt{r_d}-r_d^{3/2}a^2\Lambda^{\prime}\pm
2a\sqrt{\mathfrak{G}_d}}}{
r_e^{3/4}\sqrt{\frac{2e^2}{r_e^{1/2}}+r_e^{3/2}-3\sqrt{r_e}-
r_e^{3/2}a^2\Lambda^{\prime}\pm 2 a\sqrt{\mathfrak{G}_e}}}\nonumber
\\
&\times\frac{\left(a(-\Lambda^{\prime}r_e^2(r_e^2+a^2)-2r_e+e^2)+r_e^2
\sqrt{\Delta_r^{KN}(r_e)}\right)(\pm [r_e^{3/2}\sqrt{\mathfrak{G}_d}-r_d^{3/2}\sqrt{\mathfrak{G}_e}])}{
(r_d^{3/2}\pm a\sqrt{\mathfrak{G}_d})
[(\Delta_r^{KN}(r_e)-a^2)r_d^{3/2}+(ar_e^2-r_e^2\sqrt{\Delta_r^{KN}(r_e)})(\pm \sqrt{\mathfrak{G}_d})]},
\label{shiftblue}
\end{align}
where we define:
\begin{equation}
\Delta_r^{KN}(r_e):=(1-\Lambda^{\prime}r_e^2)(r_e^2+a^2)-2r_e+e^2
\end{equation}
and we have made use of the relation $\Phi_e=\Phi_d$.
The remarkable closed form analytic expressions for the frequency shifts we obtained in eqns.(\ref{shiftred})-(\ref{shiftblue}), constitute a new result in the theory of General Relativity, in which all the physical parameters of the exact theory enter on an equal footing.

In the particular case when the detector is located far away from the source and the detector radius $r_d$ is much larger than the black hole parameters ( its mass,spin and electric charge), the frequency shifts (\ref{shiftred})-(\ref{shiftblue}) become:

\begin{align}
z_{\rm red}=\frac{\sqrt{1-a^2\Lambda^{\prime}}[a(\Lambda^{\prime}r_e^2(r_e^2+a^2)+2r_e-e^2)+r_e^2\sqrt{\Delta_r(r_e)}](\pm\sqrt{1-\Lambda^{\prime}r_e^3-\frac{e^2}{r_e}})}
{r_e^{3/4}\sqrt{\frac{2e^2}{r_e^{1/2}}+r_e^{3/2}-3\sqrt{r_e}-r_e^{3/2}a^2\Lambda^{\prime}\pm 2a\sqrt{-\frac{e^2}{r_e}+r_e^3(\frac{1}{r_e^3}-\Lambda^{\prime})}}[\Delta_r(r_e)-a^2]},
\label{frequencyred}
\end{align}

\begin{align}
z_{\rm blue}=\frac{\sqrt{1-a^2\Lambda^{\prime}}[a(\Lambda^{\prime}r_e^2(r_e^2+a^2)+2r_e-e^2)-r_e^2\sqrt{\Delta_r(r_e)}](\pm\sqrt{1-\Lambda^{\prime}r_e^3-\frac{e^2}{r_e}})}
{r_e^{3/4}\sqrt{\frac{2e^2}{r_e^{1/2}}+r_e^{3/2}-3\sqrt{r_e}-r_e^{3/2}a^2\Lambda^{\prime}\pm 2a\sqrt{-\frac{e^2}{r_e}+r_e^3(\frac{1}{r_e^3}-\Lambda^{\prime})}}[\Delta_r(r_e)-a^2]}.
\label{frequencyblue}
\end{align}

\subsubsection{Redshift/blueshift for circular equatorial orbits in Kerr-de Sitter spacetime}\label{kdsredblue}
For zero electric charge, $e=0$, eqns.(\ref{shiftred})-(\ref{shiftblue}) reduce to:

\begin{align}
&z_{\rm red}=\frac{r_d^{3/4}\sqrt{r_d^{3/2}-3\sqrt{r_d}-r_d^{3/2}a^2\Lambda^{\prime}\pm
2a \sqrt{r_d^3(\frac{1}{r_d^3}-\Lambda^{\prime})}}}{r_e^{3/4}\sqrt{r_e^{3/2}-3\sqrt{r_e}-r_e^{3/2}a^2\Lambda^{\prime}\pm
2a \sqrt{r_e^3(\frac{1}{r_e^3}-\Lambda^{\prime})}}}\nonumber\\
&\times \frac{[a(-\Lambda^{\prime}r_e^2(r_e^2+a^2)-2r_e)-r_e^2\sqrt{\Delta_r(r_e)}](\pm[
r_e^{3/2}\sqrt{1-\Lambda^{\prime}r_d^3}-r_d^{3/2}\sqrt{1-\Lambda^{\prime}r_e^3}])}{
(r_d^{3/2}\pm a\sqrt{1-\Lambda^{\prime}r_d^3})[(\Delta_r(r_e)-a^2)r_d^{3/2}+
(ar_e^2+r_e^2\sqrt{\Delta_r(r_e)})(\pm \sqrt{1-\Lambda^{\prime}r_d^3})]},
\label{kerradsred}
\end{align}

\begin{align}
&z_{\rm blue}=\frac{r_d^{3/4}\sqrt{r_d^{3/2}-3\sqrt{r_d}-r_d^{3/2}a^2\Lambda^{\prime}\pm
2a \sqrt{r_d^3(\frac{1}{r_d^3}-\Lambda^{\prime})}}}{r_e^{3/4}\sqrt{r_e^{3/2}-3\sqrt{r_e}-r_e^{3/2}a^2\Lambda^{\prime}\pm
2a \sqrt{r_e^3(\frac{1}{r_e^3}-\Lambda^{\prime})}}}\nonumber\\
&\times \frac{[a(-\Lambda^{\prime}r_e^2(r_e^2+a^2)-2r_e)+r_e^2\sqrt{\Delta_r(r_e)}](\pm[
r_e^{3/2}\sqrt{1-\Lambda^{\prime}r_d^3}-r_d^{3/2}\sqrt{1-\Lambda^{\prime}r_e^3}])}{
(r_d^{3/2}\pm a\sqrt{1-\Lambda^{\prime}r_d^3})[(\Delta_r(r_e)-a^2)r_d^{3/2}+
(ar_e^2-r_e^2\sqrt{\Delta_r(r_e)})(\pm \sqrt{1-\Lambda^{\prime}r_d^3})]}.
\label{kerradsblue}
\end{align}
In Eqns (\ref{kerradsred}),(\ref{kerradsblue}) we define:
\begin{equation}
\Delta_r(r_e):=(1-\Lambda^{\prime}r_e^2)(r_e^2+a^2)-2r_e.
\end{equation}

In the particular case when the detector is located far away from the source and the detector radius $r_d$ is much larger than the black hole parameters , the frequency shifts (\ref{shiftred})-(\ref{shiftblue}) become:

\begin{align}
z_{\rm red}=\frac{\sqrt{1-a^2\Lambda^{\prime}}[a(\Lambda^{\prime}r_e^2(r_e^2+a^2)+2r_e)+r_e^2\sqrt{\Delta_r(r_e)}](\pm\sqrt{1-\Lambda^{\prime}r_e^3})}
{r_e^{3/4}\sqrt{r_e^{3/2}-3\sqrt{r_e}-r_e^{3/2}a^2\Lambda^{\prime}\pm 2a\sqrt{r_e^3(\frac{1}{r_e^3}-\Lambda^{\prime})}}[\Delta_r(r_e)-a^2]},
\label{frequencyred}
\end{align}

\begin{align}
z_{\rm blue}=\frac{\sqrt{1-a^2\Lambda^{\prime}}[a(\Lambda^{\prime}r_e^2(r_e^2+a^2)+2r_e)-r_e^2\sqrt{\Delta_r(r_e)}](\pm\sqrt{1-\Lambda^{\prime}r_e^3})}
{r_e^{3/4}\sqrt{r_e^{3/2}-3\sqrt{r_e}-r_e^{3/2}a^2\Lambda^{\prime}\pm 2a\sqrt{r_e^3(\frac{1}{r_e^3}-\Lambda^{\prime})}}[\Delta_r(r_e)-a^2]}.
\label{frequencyblue}
\end{align}

The frequency shifts (\ref{frequencyred})-(\ref{frequencyblue}) are plotted in Figs.(\ref{zredblueshifta052Lpos})-(\ref{zredblueshifta09939Lpos}) for different values of the spin of the central black hole and for positive cosmological constant for the corotating case.
As the radius increases, $z_{\rm red}\rightarrow -z_{\rm blue}$.

\begin{figure}
[ptbh]
\psfrag{GruenDeusdieAllee13a}{$a=0.52,\;\Lambda^{\prime}=10^{-33}.$}
\psfrag{re}{$r_e$} \psfrag{Zshift}{$z$}
\psfrag{zred}{$z_{\rm red}$}
\psfrag{zblue}{$z_{\rm blue}$}

\begin{center}
\includegraphics[height=2.4526in, width=3.3797in ]{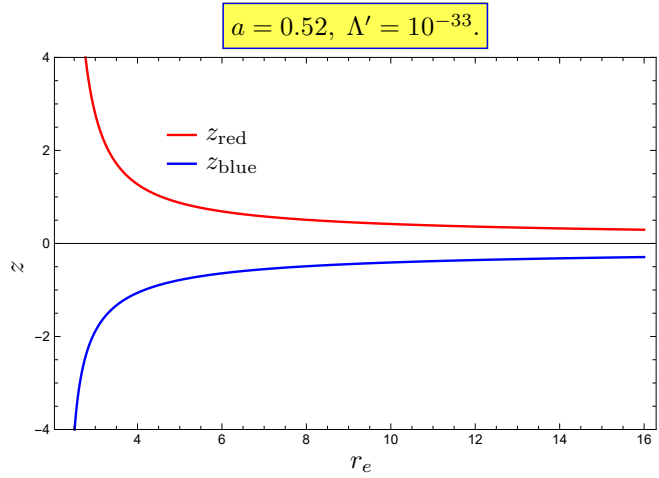}
 \caption{The functions $z_{\rm red},z_{\rm blue}$ as functions of radius $r_e$. The spin of the Kerr-Newman-de Sitter black hole was chosen as $a=0.52$ and the dimensionless cosmological parameter as $\Lambda^{\prime}=10^{-33}$.}%
\label{zredblueshifta052Lpos}%
\end{center}
\end{figure}

\begin{figure}
[ptbh]
\psfrag{GruenDeusdiegrosseAllee13a}{$a=0.9939,\;\Lambda^{\prime}=10^{-33}.$}
\psfrag{re}{$r_e$} \psfrag{Zshift}{$z$}
\psfrag{zred}{$z_{\rm red}$}
\psfrag{zblue}{$z_{\rm blue}$}

\begin{center}
\includegraphics[height=2.4526in, width=3.3797in ]{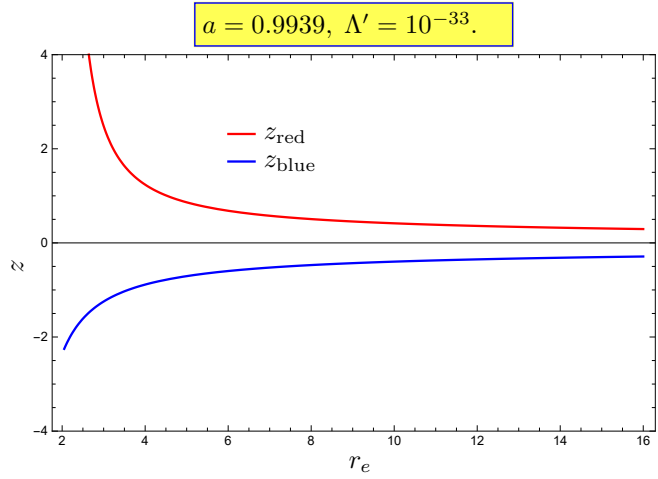}
 \caption{The functions $z_{\rm red},z_{\rm blue}$ as functions of radius $r_e$. The spin of the Kerr-Newman-de Sitter black hole was chosen as $a=0.9939$ and the dimensionless cosmological parameter as $\Lambda^{\prime}=10^{-33}$. }%
\label{zredblueshifta09939Lpos}%
\end{center}
\end{figure}

\section{More general orbits for rotating charged black holes}\label{nonequatorial}

\subsection{Spherical orbits in Kerr-Newman spacetime}

Depending on whether or not the coordinate radius $r$ is constant along
a given timelike geodesic, the corresponding particle orbit is characterised
as spherical or nonspherical, respectively. In this subsection we will focus on spherical non-equatorial orbits.

\subsubsection{Frame-dragging for timelike spherical orbits}
Assuming $\Lambda=0$ we derive from (\ref{azimueq}) and (\ref{polareq}) the following equation:
\begin{equation}
\frac{{\rm d}\phi}{{\rm d}\theta}=-
\frac{\frac{aP}{\Delta^{KN}}-aE+L/\sin^2\theta}{\sqrt{Q-
\frac{L^2\cos^2\theta}{\sin^2\theta}+a^2\cos^2\theta(E^2-1)}}.
\label{lenseFrameKN}
\end{equation}
In (\ref{lenseFrameKN}) $\Delta^{KN}:=r^2+a^2+e^2-2Mr$ \footnote{When we set $M=1$, $\Delta^{KN}=r^2+a^2+e^2-2r$.}. Also $P=E(r^2+a^2)-L a.$
Using the variable $z=\cos^2\theta$, $-\frac{1}{2}\frac{{\rm d}z}{\sqrt{z}}\frac{1}{\sqrt{1-z}}={\rm sgn}(\pi/2-\theta){\rm d}\theta$ we will determine for the \textit{first time} in closed analytic form the amount of frame-dragging for timelike spherical orbits in the Kerr-Newman spacetime \footnote{We should mention at this point  the extreme black hole solutions   for spherical timelike non-polar geodesics in Kerr-Newman spacetime obtained in \cite{johnston} in terms of formal integrals. Latitudinal motion in the Kerr and KN spacetimes has been studied in \cite{felicede} and complemented in \cite{bicakpnc}.}.

Thus for instance, expressed in terms of the new variable :
\begin{align}
\frac{L}{\sin^2\theta}\frac{{\rm d}\theta}{\sqrt{\Theta}}
&=\frac{L}{1-z}\left(-\frac{1}{2}\right)\frac{{\rm d}z}{\sqrt{z}}
\frac{1}{\sqrt{\alpha z^2-(\alpha+\beta)z+Q}}\nonumber \\
&=\frac{L}{1-z}\left(-\frac{1}{2}\right)\frac{{\rm d}z}{\sqrt{z}}
\frac{1}{|a|\sqrt{1-E^2}}\frac{1}{\sqrt{(z-z_{+})(z-z_{-})}},
\end{align}
where $\alpha=a^2(1-E^2),\beta=L^2+Q$. The range of $z$ for which the motion takes place includes the equatorial value, $z=0$:
\begin{equation}
0\leq z\leq z_-.
\end{equation}
The integral of (\ref{lenseFrameKN}) will be split into two parts: Proposition \ref{protasiEins} and Proposition \ref{protasizwei}.
We  prove first the  exact result of Proposition \ref{protasiEins}:
\begin{proposition}\label{protasiEins}
\begin{align}
&\int_0^{z_-}\frac{L}{1-z}\left(-\frac{1}{2}\right)\frac{{\rm d}z}{\sqrt{z}}
\frac{1}{|a|\sqrt{1-E^2}}\frac{1}{\sqrt{(z-z_{+})(z-z_{-})}}\nonumber\\
&=
-\frac{L}{|a|\sqrt{1-E^2}}\frac{\pi}{2}
(1-z_{-})^{-1}(z_+-z_-)^{-1/2}F_1\left(\frac{1}{2},1,\frac{1}{2},1,
\frac{z_-}{z_{-}-1},\frac{z_-}{z_--z_+}\right)\nonumber\\
&=-\frac{L}{|a|\sqrt{1-E^2}}\frac{\pi}{2}z_+^{-1/2}
F_1\left(\frac{1}{2},\frac{1}{2},1,1,\frac{z_-}{z_+},z_-\right),
\end{align}
where $F_1(\alpha,\beta,\beta^{\prime},\gamma,x,y)$ denotes the first Appell's hypergeometric function of two variables ($x$ and $y$), which admits the integral representation:
\begin{equation}
\int_0^1 u^{\alpha-1}(1-u)^{\gamma-\alpha-1}(1-u x)^{-\beta}(1-u
y)^{-\beta^{\prime}}{\rm
d}u=\frac{\Gamma(\alpha)\Gamma(\gamma-\alpha)}{\Gamma(\gamma)}
F_1(\alpha,\beta,\beta^{\prime},\gamma,x,y).
\label{frenchAppellIR}
\end{equation}
Also $\Gamma(a)$ denotes Euler's gamma function.
\end{proposition}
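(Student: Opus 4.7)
The plan is to reduce the integral to the Euler-type integral representation (\ref{frenchAppellIR}) of Appell's $F_1$ by a change of variables that maps the interval $[0,z_-]$ onto $[0,1]$. First, I would observe that on the integration interval one has $(z-z_+)(z-z_-)=(z_+-z)(z_--z)\ge 0$, since both factors are non-positive (recall $0<z_-<z_+$ by the reality of latitudinal motion), so the square root is real and the integrand has only integrable singularities at the endpoints $z=0$ and $z=z_-$, together with the pole at $z=1$ lying outside the range (which is what requires $z_-<1$, i.e., the orbit avoids the axis degeneration).

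For the second, compact form of the statement, I would substitute
\begin{equation}
u=\frac{z}{z_-},\qquad z=z_-u,\qquad dz=z_-\,du,
\end{equation}
so that $u$ runs from $0$ to $1$. A direct rewriting gives
\begin{equation}
\sqrt{z}=\sqrt{z_-}\,u^{1/2},\quad z_--z=z_-(1-u),\quad z_+-z=z_+\!\left(1-\tfrac{z_-}{z_+}u\right),\quad 1-z=1-z_-u.
\end{equation}
Substituting into the integrand and simplifying the $z_-$'s, the factor $\tfrac{1}{|a|\sqrt{1-E^2}}\cdot(-\tfrac12)$ comes outside and leaves
\begin{equation}
-\frac{L}{2|a|\sqrt{1-E^2}\,\sqrt{z_+}}\int_0^1 u^{-1/2}(1-u)^{-1/2}(1-z_-u)^{-1}\!\left(1-\tfrac{z_-}{z_+}u\right)^{-1/2}du.
\end{equation}
Comparing with (\ref{frenchAppellIR}) and reading off $\alpha=\tfrac12,\ \gamma=1,\ \beta=1,\ \beta'=\tfrac12$, $x=z_-,\ y=z_-/z_+$, and using $\Gamma(1/2)^2/\Gamma(1)=\pi$, produces exactly the claimed expression $-\tfrac{L\pi}{2|a|\sqrt{1-E^2}}z_+^{-1/2}F_1(\tfrac12,\tfrac12,1,1,z_-/z_+,z_-)$ (here I have also used the trivial symmetry of $F_1$ under simultaneous swap of the pairs $(\beta,x)\leftrightarrow(\beta',y)$).

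For the first, equivalent form I would run the alternative substitution that reverses the interval, namely
\begin{equation}
u=1-\frac{z}{z_-},\qquad z=z_-(1-u),\qquad dz=-z_-\,du,
\end{equation}
which again maps $z\in[0,z_-]$ to $u\in[0,1]$ (with the minus sign absorbing the reversal of limits). The same four factors now read $\sqrt{z}=\sqrt{z_-}\sqrt{1-u}$, $z_--z=z_-u$, $z_+-z=(z_+-z_-)(1+\tfrac{z_-}{z_+-z_-}u)$, and $1-z=(1-z_-)(1+\tfrac{z_-}{1-z_-}u)$, which after collecting the prefactors produces the pattern $u^{-1/2}(1-u)^{-1/2}(1-ux)^{-1}(1-uy)^{-1/2}$ of (\ref{frenchAppellIR}) with $x=\tfrac{z_-}{z_--1}$ and $y=\tfrac{z_-}{z_--z_+}$, together with the overall factor $(1-z_-)^{-1}(z_+-z_-)^{-1/2}$. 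This reproduces the first claimed expression. (As a consistency check, the equality of the two forms is a particular instance of the Appell $F_1$ transformation $F_1(\alpha,\beta,\beta',\gamma;x,y)=(1-x)^{-\beta}(1-y)^{-\beta'}F_1(\gamma-\alpha,\beta,\beta',\gamma;\tfrac{x}{x-1},\tfrac{y}{y-1})$ specialised to $\gamma=2\alpha$, so no further work is required.)

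The calculation itself is routine once the substitutions are written down; the one genuine care-point, and the place I would spend the most attention, is the tracking of signs and branches of the square roots when rewriting $(z-z_+)(z-z_-)$ as $(z_+-z)(z_--z)$ and verifying that both substitutions are orientation-consistent, so that the single overall minus sign of the final answer is correctly accounted for and not double-counted against the $-\tfrac12$ already present in the integrand.
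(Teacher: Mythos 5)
Your proof is correct, and it goes by a different route than the paper. You evaluate the integral over $[0,z_-]$ directly: the linear substitution $u=z/z_-$ turns it into the Euler integral representation (\ref{frenchAppellIR}) with $\alpha=\tfrac12,\ \gamma=1$ and factors $(1-z_-u)^{-1}$, $(1-\tfrac{z_-}{z_+}u)^{-1/2}$, giving the compact form $-\tfrac{L\pi}{2|a|\sqrt{1-E^2}}z_+^{-1/2}F_1(\tfrac12,\tfrac12,1,1,z_-/z_+,z_-)$ at once, while the reversed substitution $u=1-z/z_-$ gives the other stated form with $x=\tfrac{z_-}{z_--1}$, $y=\tfrac{z_-}{z_--z_+}$; your prefactors and the use of the $(\beta,x)\leftrightarrow(\beta',y)$ symmetry all check out, as does your remark that the two forms are linked by the Pfaff-type $F_1$ transformation. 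The paper instead computes the more general integral with lower endpoint $z_j$, using the quadratic substitution $z=z_-+\xi^2(z_j-z_-)$, which produces a three-variable Lauricella $F_D(\tfrac12;1,\tfrac12,\tfrac12;\tfrac32;\cdots)$; it then sets $z_j=0$, reduces the $F_D$ with unit argument to an Appell $F_1$, and applies a transformation lemma of $F_1$ to reach the compact form. What your approach buys is economy and transparency: each stated form drops out of a one-line change of variables, with the only care-points (positivity of $(z_+-z)(z_--z)$ on the interval, $z_-<1$, orientation of the substitution) handled explicitly. What the paper's approach buys is reusability: the $z_j$-dependent $F_D$ expression and the quadratic substitution are exactly the ingredients recycled later for the Kerr--Newman--de Sitter frame dragging (where a genuine three-variable $F_D$ survives because of the extra root $z_\Lambda$) and for the proper polar period, so the longer detour is an investment rather than a necessity for this particular proposition.
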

\begin{proof}
We compute first the integral:
\begin{align}
&\int_{z_j}^{z_-}\frac{L}{1-z}\left(-\frac{1}{2}\right)\frac{{\rm d}z}{\sqrt{z}}
\frac{1}{|a|\sqrt{1-E^2}}\frac{1}{\sqrt{(z-z_{+})(z-z_{-})}}\label{intermeintegral}.
\end{align}
Applying the transformation $z=z_-+\xi^2(z_j-z_-)$ in (\ref{intermeintegral}) we obtain
\begin{align}
&\int_{z_j}
^{z_-}\frac{L}{1-z}\left(-\frac{1}{2}\right)\frac{{\rm d}z}{\sqrt{z}}
\frac{1}{|a|\sqrt{1-E^2}}\frac{1}{\sqrt{(z-z_{+})(z-z_{-})}}\nonumber \\
&=\frac{-L}{2|a|\sqrt{1-E^2}}\frac{z_{-}-z_j}{(1-z_-)}\frac{1}{\sqrt{z_-}\sqrt{z_{-}-z_{+}}\sqrt{z_j-z_-}}\int_0^1\frac{{\rm d}x}{[1-x\left(\frac{z_j-z_-}{1-z_-}\right)]}\frac{1}{\sqrt{1-x\frac{z_{-}-z_j}{z_{-}-z_{+}}}}\frac{1}{\sqrt{x}}\frac{1}{\sqrt{1-x\frac{z_{-}-z_j}{z_{-}}}}\nonumber \\
&=\frac{-L}{2|a|\sqrt{1-E^2}}\frac{z_{-}-z_j}{(1-z_-)}\frac{1}{\sqrt{z_-}\sqrt{z_{-}-z_{+}}\sqrt{z_j-z_-}}\nonumber\\
&\times \frac{\Gamma\left(\frac{1}{2}\right)\Gamma(1)}{\Gamma\left(\frac{3}{2}\right)}
F_D\left(\frac{1}{2},1,\frac{1}{2},\frac{1}{2},\frac{3}{2},\frac{z_j-z_-}{1-z_-},\frac{z_{-}-z_j}{z_{-}},\frac{z_{-}-z_j}{z_{-}-z_+
}\right)\label{GuiseppeLauriFd},
\end{align}
where $x\equiv\xi^2$. In eqn.(\ref{GuiseppeLauriFd}) $F_D$ denotes Lauricella's fourth multivariable hypergeometric function (here is a function of three variables). The general multivariable Lauricella's function $F_D$  is defined as follows:
\begin{equation}
\fbox{$\displaystyle
F_D(\alpha,\mbox{\boldmath${\beta}$},\gamma,{\bf z})=
\sum_{n_1,n_2,\dots,n_m=0}^{\infty}\frac{(\alpha)_{n_1+\cdots
n_m}(\beta_1)_{n_1} \cdots (\beta_m)_{n_m}}
{(\gamma)_{n_1+\cdots+n_m}(1)_{n_1}\cdots (1)_{n_m}}
z_1^{n_1}\cdots z_m^{n_m}$} \label{GLauri}
\end{equation}%

where
\begin{eqnarray}
\mathbf{z} &=&(z_{1},\ldots ,z_{m}),  \notag \\
\mbox{\boldmath${\beta}$} &=&(\beta _{1},\ldots ,\beta _{m}).
\label{TUPLES}
\end{eqnarray}

The Pochhammer symbol
\fbox{$\displaystyle (\alpha)_m=(\alpha,m)$}
is defined by
\begin{equation}
(\alpha )_{m}=\frac{\Gamma (\alpha +m)}{\Gamma (\alpha )}=\left\{
\begin{array}{ccc}
1, &
{\rm if}
& m=0 \\
\alpha (\alpha +1)\cdots (\alpha +m-1) & \text{%
{\rm if}%
} & m=1,2,3%
\end{array}%
\right.
\end{equation}
The series (\ref{GLauri}) admits the following integral representation:

\begin{equation}
\fbox{$\displaystyle
F_D(\alpha,\mbox{\boldmath${\beta}$},\gamma,{\bf z})=
\frac{\Gamma(\gamma)}{\Gamma(\alpha)\Gamma(\gamma-\alpha)}
\int_0^1 t^{\alpha-1}(1-t)^{\gamma-\alpha-1}(1-z_1
t)^{-\beta_1}\cdots (1-z_m t)^{-\beta_m} {\rm d}t $}
\label{OloklAnapa}
\end{equation}
which is valid for
\fbox{$\displaystyle {\rm Re}(\alpha)>0,\;{\rm Re}(\gamma-\alpha)>0. $}%
. It
{\em converges\;absolutely}
inside the m-dimensional cuboid:%
\begin{equation}
|z_{j}|<1,(j=1,\ldots ,m).
\end{equation}
For $m=2$ $F_D$ in the notation of Appell becomes the two variable
hypergeometric function
$F_1(\alpha,\beta,\beta^{\prime},\gamma,x,y)$ with integral representation given by eqn.(\ref{frenchAppellIR}).
Setting $z_j=0$ in (\ref{GuiseppeLauriFd}) yields:
\begin{align}
&\int_0^{z_-}\frac{L}{1-z}\left(-\frac{1}{2}\right)\frac{{\rm d}z}{\sqrt{z}}
\frac{1}{|a|\sqrt{1-E^2}}\frac{1}{\sqrt{(z-z_{+})(z-z_{-})}}\nonumber\\
&=\frac{-L}{2|a|\sqrt{1-E^2}}(1-z_-)^{-1}(z_{+}-z_{-})^{-1/2}\frac{\Gamma\left(\frac{1}{2}\right)\Gamma(1)}{\Gamma\left(\frac{3}{2}\right)}
F_D\left(\frac{1}{2},1,\frac{1}{2},\frac{1}{2},\frac{3}{2},\frac{-z_-}{1-z_-},1,\frac{z_{-}}{z_{-}-z_+
}\right)\nonumber \\
&=\frac{-L}{2|a|\sqrt{1-E^2}}(1-z_-)^{-1}(z_{+}-z_{-})^{-1/2}\frac{\Gamma(1/2)^2\Gamma\left(\frac{3}{2}\right)}{\Gamma\left(\frac{3}{2}\right)}
F_1\left(\frac{1}{2},1,\frac{1}{2},1,\frac{-z_-}{1-z_-},\frac{z_{-}}{z_{-}-z_+
}\right)\nonumber \\
&=\frac{-L}{2|a|\sqrt{1-E^2}}z_{+}^{-1/2}\pi F_1\left(\frac{1}{2},\frac{1}{2},1,1,\frac{z_-}{z_+},z_-\right).\label{endiamfdrag}
\end{align}
\end{proof}

For producing the result in the last line of equation (\ref{endiamfdrag}), we used the following transformation property of Appell's hypergeometric function $F_1$:
\begin{lemma}
\begin{align}
&y^{1+\beta-\gamma}(1-y)^{\gamma-\alpha-1}(x-y)^{-\beta}F_1\left(1-\beta^{\prime},\beta,1+\alpha-\gamma,2+\beta-\gamma,
\frac{y}{y-x},\frac{y}{y-1}\right)\nonumber\\
&=y^{1+\beta-\gamma}x^{-\beta}F_1\left(1+\beta+\beta^{\prime}-\gamma,\beta,1+\alpha-\gamma,2+\beta-\gamma,
\frac{y}{x},y\right).
\end{align}
\end{lemma}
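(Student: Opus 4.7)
The plan is to recognize this identity as a disguised version of one of the classical Euler-type transformation formulae for Appell's $F_1$, namely
\[
F_1(a,b,b',c;X,Y) = (1-X)^{-b}(1-Y)^{-b'}\, F_1\!\left(c-a,b,b',c;\tfrac{X}{X-1},\tfrac{Y}{Y-1}\right),
\]
and to derive the stated form by matching parameters. Setting $a=1+\beta+\beta'-\gamma$, $b=\beta$, $b'=1+\alpha-\gamma$, $c=2+\beta-\gamma$, $X=y/x$ and $Y=y$, one reads off $c-a=1-\beta'$, $X/(X-1)=y/(y-x)$, $Y/(Y-1)=y/(y-1)$, $(1-X)^{-b}=x^{\beta}(x-y)^{-\beta}$ and $(1-Y)^{-b'}=(1-y)^{\gamma-\alpha-1}$. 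After these substitutions and multiplication of both sides of the classical identity by $x^{-\beta}y^{1+\beta-\gamma}$, the identity becomes precisely the lemma.

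To establish the classical transformation itself I would start from the integral representation~(\ref{frenchAppellIR}) applied to the right-hand member,
\[
F_1\!\left(c-a,b,b',c;\tfrac{X}{X-1},\tfrac{Y}{Y-1}\right) = \frac{\Gamma(c)}{\Gamma(a)\Gamma(c-a)}\int_0^1 t^{c-a-1}(1-t)^{a-1}\!\left(1-\tfrac{Xt}{X-1}\right)^{-b}\!\left(1-\tfrac{Yt}{Y-1}\right)^{-b'}\,\mathrm{d}t,
\]
and perform the involutive change of variable $t\mapsto 1-s$. This flips the weight $t^{c-a-1}(1-t)^{a-1}$ into $s^{a-1}(1-s)^{c-a-1}$, and the key algebraic identity
\[
1-\frac{X(1-s)}{X-1}=\frac{1-Xs}{1-X}
\]
(with the analogous one for $Y$) allows me to extract the prefactors $(1-X)^{b}(1-Y)^{b'}$ outside the integral. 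What is left inside is exactly the integrand of $F_1(a,b,b',c;X,Y)$, yielding the desired transformation after cancellation of the gamma quotients.

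The hard part is bookkeeping rather than anything conceptual: I need to keep track of a fairly dense web of parameter identifications and powers of $(1-X)$, $(1-Y)$, $(x-y)$, $x$ and $y$, and to check that the integral representation is applicable, i.e.\ $\mathrm{Re}(a)>0$ and $\mathrm{Re}(c-a)>0$. In the parameterisation above these become $\mathrm{Re}(1+\beta+\beta'-\gamma)>0$ and $\mathrm{Re}(1-\beta')>0$. Once the identity is established on this open parameter set, it extends by analytic continuation in $\alpha,\beta,\beta',\gamma$ to the full domain in which both sides of the lemma are defined, and in particular to the values relevant for the frame-dragging integral in equation~(\ref{endiamfdrag}).
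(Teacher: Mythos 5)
Your proposal is correct. The parameter matching works out exactly as you say: with $a=1+\beta+\beta^{\prime}-\gamma$, $b=\beta$, $b^{\prime}=1+\alpha-\gamma$, $c=2+\beta-\gamma$, $X=y/x$, $Y=y$ one has $c-a=1-\beta^{\prime}$, $X/(X-1)=y/(y-x)$, $Y/(Y-1)=y/(y-1)$, $(1-X)^{-b}=x^{\beta}(x-y)^{-\beta}$ and $(1-Y)^{-b^{\prime}}=(1-y)^{\gamma-\alpha-1}$, so multiplying the Euler-type transformation by $x^{-\beta}y^{1+\beta-\gamma}$ reproduces the lemma verbatim. The paper itself states this lemma without proof, treating it as a known transformation property, and in fact invokes precisely the classical identity you reduce to, $F_1(\alpha,\beta,\beta^{\prime},\gamma,x,y)=(1-x)^{-\beta}(1-y)^{-\beta^{\prime}}F_1\bigl(\gamma-\alpha,\beta,\beta^{\prime},\gamma,\tfrac{x}{x-1},\tfrac{y}{y-1}\bigr)$, as a separate lemma later on (in the proof of Theorem \ref{LambdaUniverseDomin}); so your route supplies the justification the paper leaves implicit rather than diverging from it. Your proof of the classical transformation via the substitution $t\mapsto 1-s$ in the integral representation (\ref{frenchAppellIR}), using $1-\tfrac{X(1-s)}{X-1}=\tfrac{1-Xs}{1-X}$, is the standard argument and is sound, and the convergence conditions ${\rm Re}(1+\beta+\beta^{\prime}-\gamma)>0$, ${\rm Re}(1-\beta^{\prime})>0$ together with analytic continuation (note they are comfortably satisfied by the values $\beta=1$, $\beta^{\prime}=1/2$, $\gamma=2$ actually used in equation (\ref{endiamfdrag})) complete the proof.
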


On the other hand we compute analytically the second integral that contributes to frame-dragging and we obtain:
\begin{proposition}\label{protasizwei}
\begin{align}
&\int_0^{z_-}\frac{\frac{aP}{\Delta^{KN}}-aE}{\sqrt{\Theta^{\prime}}}{\rm d}\theta\nonumber \\
&=\left(\frac{aP}{\Delta^{KN}}-aE\right)\frac{1}{|a|\sqrt{1-E^2}}\left(
\frac{-1}{2}\right)\frac{\Gamma\left(\frac{1}{2}\right)\Gamma\left(\frac{1}{2}\right)}
{\sqrt{z_+-z_{-}}}F\left(\frac{1}{2},\frac{1}{2},1,-\frac{z_-}{z_+-z_-}\right)\nonumber \\
&=\left(\frac{aP}{\Delta^{KN}}-aE\right)\frac{1}{|a|\sqrt{1-E^2}}\left(-\frac{\pi}{2}
\right)\frac{1}{\sqrt{z_+}}F\left(\frac{1}{2},\frac{1}{2},1,\frac{z_-}{z_+}\right).
\label{Gaussinclude}
\end{align}
\end{proposition}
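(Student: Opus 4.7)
The plan is to reduce the integral to the Euler integral representation of the Gauss hypergeometric function ${}_2F_1$. Since the orbit is spherical, $r$ is constant and the prefactor $\frac{aP}{\Delta^{KN}}-aE$ is $\theta$-independent; it pulls outside the integral, so only $\int_0^{z_-} d\theta/\sqrt{\Theta^{\prime}}$ needs to be evaluated in closed form. First I would pass to the latitudinal variable $z=\cos^2\theta$ using the substitution recorded above Proposition \ref{protasiEins}, and invoke the quadratic factorization $\Theta^{\prime}(1-z)=\alpha(z-z_+)(z-z_-)$ with $\alpha=a^2(1-E^2)$ and $\beta=L^2+Q$. The factor $\sin\theta=\sqrt{1-z}$ cancels the corresponding factor sitting inside $\sqrt{\Theta^{\prime}}$, leaving
\[
\frac{d\theta}{\sqrt{\Theta^{\prime}}} = -\frac{1}{2|a|\sqrt{1-E^2}}\,\frac{dz}{\sqrt{z}\sqrt{(z-z_+)(z-z_-)}},
\]
with $(z-z_+)(z-z_-)>0$ throughout $0<z<z_-<z_+$.

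Next I would apply the substitution $z=z_-(1-\xi^2)$ followed by $x=\xi^2$, the same device used in Proposition \ref{protasiEins} but without the extra $L/(1-z)$ factor. The orientation reversal produced by $z=z_-(1-\xi^2)$ absorbs the minus sign on $dz$, and elementary rearrangement brings the $z$-integral to the form
\[
\frac{1}{\sqrt{z_+-z_-}}\int_0^1 x^{-1/2}(1-x)^{-1/2}\Bigl(1-\tfrac{-z_-}{z_+-z_-}x\Bigr)^{-1/2}dx.
\]
This is exactly Euler's integral representation (\ref{frenchAppellIR}) with $\beta^{\prime}=0$ (so Appell's $F_1$ collapses to Gauss' ${}_2F_1$), parameters $\alpha=\gamma-\alpha=1/2$, $\beta=1/2$, and argument $-z_-/(z_+-z_-)$. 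It evaluates to $\pi\,F(1/2,1/2,1,-z_-/(z_+-z_-))/\sqrt{z_+-z_-}$ after using $\Gamma(1/2)^2=\pi$, and reinstating the prefactor $\left(\frac{aP}{\Delta^{KN}}-aE\right)\frac{-1}{2|a|\sqrt{1-E^2}}$ yields the first equality in (\ref{Gaussinclude}).

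To derive the second form I would apply Pfaff's transformation $F(a,b,c,w)=(1-w)^{-a}F(a,c-b,c,w/(w-1))$ at $w=z_-/z_+$; with $a=b=1/2$ and $c=1$ this gives
\[
\frac{1}{\sqrt{z_+-z_-}}F\!\left(\tfrac{1}{2},\tfrac{1}{2},1,-\tfrac{z_-}{z_+-z_-}\right)=\frac{1}{\sqrt{z_+}}F\!\left(\tfrac{1}{2},\tfrac{1}{2},1,\tfrac{z_-}{z_+}\right),
\]
which establishes the second equality. The only subtle step is the sign bookkeeping: the $-1/2$ from the initial change of variable combines with the direction reversal in $z=z_-(1-\xi^2)$ to yield the overall factor $-\pi/2$ asserted in the statement, and one must verify that the branch of $\sqrt{(z-z_+)(z-z_-)}$ taken on $(0,z_-)$ is the positive real root (both linear factors being negative there).
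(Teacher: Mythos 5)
Your proposal is correct and follows essentially the same route the paper uses: the paper states Proposition \ref{protasizwei} without writing out its proof, but the intended argument is exactly the analogue of the proof given for Proposition \ref{protasiEins} — the substitution $z=\cos^{2}\theta$, the factorization $\alpha(z-z_{+})(z-z_{-})$, the change of variable $z=z_{-}(1-\xi^{2})$, $x=\xi^{2}$ leading to the Euler integral with argument $-z_{-}/(z_{+}-z_{-})$, and a Pfaff-type transformation to reach the $z_{-}/z_{+}$ form. Your sign bookkeeping (the limit reversal cancelling $dz=-z_{-}\,dx$, leaving only the $-1/2$ from $d\theta$) and the branch check on $\sqrt{(z-z_{+})(z-z_{-})}$ are both handled correctly, so nothing is missing.
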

In eqn.(\ref{Gaussinclude}) $F(a,b,c,z)$ with $a,b,c\in \mathbb{R}$ and $c\not \in \mathbb{Z}_{\leq0}$ denotes the Gau\ss' hypergeometric function which is defined by:
\begin{equation}
F(a,b,c,z)=\sum_{n=0}^{\infty}\frac{(a)_n(b)_n}{(c)_n n!}z^n.
\end{equation}
We thus obtain the following fundamental result, in closed analytic form, for the amount of frame-dragging that a timelike spherical orbit in Kerr-Newman spacetime undergoes.
 \begin{theorem}\label{LenseThirFDRAG}
 As $\theta$ goes through a quarter of a complete oscillation we obtain the change in azimuth $\phi$, $\Delta\phi^{\rm GTR}$:
\begin{align}
\Delta\phi^{\rm GTR}&=\frac{L}{|a|\sqrt{1-E^2}}\frac{\pi}{2}z_+^{-1/2}
F_1\left(\frac{1}{2},\frac{1}{2},1,1,\frac{z_-}{z_+},z_-\right)\nonumber \\
&+\left(\frac{aP}{\Delta^{KN}}-aE\right)\frac{1}{|a|\sqrt{1-E^2}}\left(\frac{\pi}{2}
\right)\frac{1}{\sqrt{z_+}}F\left(\frac{1}{2},\frac{1}{2},1,\frac{z_-}{z_+}\right).
\label{GVKframePrecessionKN}
\end{align}
\end{theorem}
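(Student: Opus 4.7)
The plan is to recognise that Theorem \ref{LenseThirFDRAG} is essentially a bookkeeping assembly of the two integrals already evaluated in Propositions \ref{protasiEins} and \ref{protasizwei}. Starting from equation (\ref{lenseFrameKN}), I would split the integrand into the two natural pieces suggested by the numerator, namely the ``Killing'' part $aP/\Delta^{KN}-aE$ (which is constant along a spherical orbit since $r$ is fixed) and the ``centrifugal'' part $L/\sin^2\theta$. Writing
\begin{equation}
\mathrm{d}\phi = -\left(\frac{aP}{\Delta^{KN}}-aE\right)\frac{\mathrm{d}\theta}{\sqrt{\Theta'}} - \frac{L}{\sin^2\theta}\frac{\mathrm{d}\theta}{\sqrt{\Theta'}}
\end{equation}
gives two integrals of manifestly different analytic character: the first is a complete elliptic-type integral in $\cos^2\theta$, the second carries the extra pole at $\sin^2\theta=0$ and is therefore of hyperelliptic/Appell type.

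Next I would perform the substitution $z=\cos^2\theta$ recorded in the text, using $-\tfrac{1}{2}\mathrm{d}z/(\sqrt{z}\sqrt{1-z})=\mathrm{sgn}(\pi/2-\theta)\mathrm{d}\theta$, and rewrite the polar polynomial in factorised form $\Theta' = a^2(1-E^2)(z_+-z)(z_--z)$ with the two roots $z_\pm$ of $\alpha z^2-(\alpha+\beta)z+Q=0$, using that on the allowed range $0\le z\le z_-$ only $z_-$ is reached. A quarter oscillation of $\theta$ corresponds exactly to $z$ sweeping from $0$ to $z_-$, which are the limits already used in both propositions, so no separate turning-point analysis is needed.

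I would then apply Proposition \ref{protasiEins} verbatim to the $L/\sin^2\theta$ piece, obtaining the Appell-$F_1$ term with prefactor $-L/(|a|\sqrt{1-E^2})\cdot(\pi/2)z_+^{-1/2}$, and Proposition \ref{protasizwei} to the constant piece, obtaining the Gauss-$F$ term with prefactor $-(aP/\Delta^{KN}-aE)/(|a|\sqrt{1-E^2})\cdot(\pi/2)z_+^{-1/2}$. The overall minus sign in (\ref{lenseFrameKN}) then flips both contributions, which is precisely how the signs displayed in (\ref{GVKframePrecessionKN}) are obtained. The only nontrivial check is the coefficient of the $L$ term: Proposition \ref{protasiEins} already absorbs the extra factor $-1/2$ from the change of variable, so nothing further is required.

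The main obstacle I anticipate is purely a matter of sign hygiene rather than analytic content. One has to be careful that (i) the $\mathrm{sgn}(\pi/2-\theta)$ arising from $\mathrm{d}z$ exactly cancels the overall minus sign of (\ref{lenseFrameKN}) over the quarter oscillation in which $\theta$ decreases from $\pi/2$ to its minimum (equivalently $z$ increases from $0$ to $z_-$), and (ii) the reduction from the Lauricella $F_D$ appearing in Proposition \ref{protasiEins} to the Appell $F_1$ quoted in the theorem uses the transformation lemma stated immediately after (\ref{endiamfdrag}), so one should verify that the parameter specialisation $(1-\beta',\beta,1+\alpha-\gamma,2+\beta-\gamma)\mapsto(\tfrac12,1,\tfrac12,1)$ produces exactly the arguments $(z_-/z_+,z_-)$ claimed in (\ref{GVKframePrecessionKN}). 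Once these two checks are in place the theorem follows by addition.
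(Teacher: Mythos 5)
Your proposal is correct and follows essentially the same route as the paper: the paper's own proof of Theorem \ref{LenseThirFDRAG} consists precisely of splitting the right-hand side of (\ref{lenseFrameKN}) into the constant piece $\frac{aP}{\Delta^{KN}}-aE$ and the $L/\sin^2\theta$ piece, passing to $z=\cos^2\theta$, and adding the already-established results of Propositions \ref{protasiEins} and \ref{protasizwei}, with the overall minus sign of (\ref{lenseFrameKN}) converting their negative values into the two positive terms of (\ref{GVKframePrecessionKN}). The extra checks you list (the sign from ${\rm d}z$ and the $F_D\to F_1$ reduction) are already contained in the proof of Proposition \ref{protasiEins}, so nothing beyond the assembly you describe is needed.
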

\begin{proof}
Using propositions \ref{protasiEins} and \ref{protasizwei}, we prove eqn (\ref{GVKframePrecessionKN}) and Theorem \ref{LenseThirFDRAG}.
\end{proof}

In eqn.(\ref{GVKframePrecessionKN}) the quantities $z_{\pm}$ are given by:
\begin{equation}
z_{\mp}=\frac{a^2(1-E^2)+L^2+Q\mp\sqrt{(a^2(E^2-1)-L^2-Q)^2-4a^2(1-E^2)Q}}{2a^2(1-E^2)}.
\end{equation}
\subsubsection{Conditions for spherical Kerr-Newman orbits}
In order for a non-equatorial spherical (NES) orbit in Kerr-Newman spacetime to exist at radius $r$, the conditions $R(r)=\frac{{\rm d}R}{{\rm d}r}=0$ must hold at this radius. As in section \ref{FiIntegralsKNdSequatocirc} where we investigated equatorial circular geodesics, we solve these two equations simultaneously, and the solutions now take an elegant compact form when parametrised in terms of $r$ and Carter's constant $Q$. There are four classes of solutions $(E_i,L_i)$ which we label by $i=\alpha,\beta,\gamma,\delta$.  We obtain the following novel relations for the first two:
\begin{align}
E_{\alpha,\beta}(r,Q;a,e,M)&=\frac{e^2 r^2+r^3(r-2M)-a(aQ\mp\sqrt{\Upsilon})}{r^2\sqrt{2e^2r^2+r^3(r-3M)-2a(aQ\mp\sqrt{\Upsilon})}}\label{NESPenergyp}\\
L_{\alpha,\beta}(r,Q;a,e,M)&=-\frac{(r^2+a^2)(\mp\sqrt{\Upsilon}+a Q)+2aMr^3-ae^2r^2}{r^2\sqrt{2e^2r^2+r^3(r-3M)-2a(aQ\mp\sqrt{\Upsilon})}},
\label{AngulMomenNESPher}
\end{align}
where
\begin{equation}
\Upsilon\equiv a^2 Q^2+r^2[-2e^2Q+3QrM-(e^2+Q)r^2+r^3M].
\end{equation}
The third and fourth classes of solutions are related to the first two by:
\begin{equation}
(E_{\gamma,\delta},L_{\gamma,\delta})=-(E_{\alpha,\beta},L_{\alpha,\beta}).
\end{equation}

\tikzstyle{mybox} = [draw=red, fill=white!20, very thick,
    rectangle, rounded corners, inner sep=10pt, inner ysep=20pt]
\tikzstyle{fancytitle} =[fill=red, text=white]
\begin{center}
\begin{tikzpicture}
\node [mybox] (box){%
    \begin{minipage}{1.0\textwidth}
    \begin{theorem}\label{EndiamesoORO}
    Using dimensionless parameters or equivalently setting $M=1$, the constants of motion that solve the conditions for spherical Kerr-Newman orbits are given by the equations:
\begin{align}
E_{\alpha,\beta}(r,Q;a,e)&=\frac{e^2 r^2+r^3(r-2)-a(aQ\mp\sqrt{\Upsilon})}{r^2\sqrt{2e^2r^2+r^3(r-3)-2a(aQ\mp\sqrt{\Upsilon})}}\label{NESPenergyp}\\
L_{\alpha,\beta}(r,Q;a,e)&=-\frac{(r^2+a^2)(\mp\sqrt{\Upsilon}+a Q)+2ar^3-ae^2r^2}{r^2\sqrt{2e^2r^2+r^3(r-3)-2a(aQ\mp\sqrt{\Upsilon})}},
\label{AngulMomenNESPher}
\end{align}
where
\begin{equation}
\Upsilon\equiv a^2 Q^2+r^2[-2e^2Q+3Qr-(e^2+Q)r^2+r^3].
\end{equation}
\end{theorem}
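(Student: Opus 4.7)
The plan is to solve the system $R'(r)=0$ and $\mathrm{d}R'/\mathrm{d}r = 0$ simultaneously as two polynomial equations in the unknowns $E, L$ at fixed $r, Q, a, e$, with $\Lambda=0$, $M=1$, $\mu = 1$ (so $\Xi = 1$). The first move is the natural substitution $P := (r^2+a^2)E - aL$ and $W := L - aE$, in terms of which the sextic (\ref{sextic}) reduces to
\[
R'(r) = P^2 - \Delta^{KN}(r)\bigl[r^2 + Q + W^2\bigr].
\]
Because $W$ does not depend on $r$ while $\mathrm{d}P/\mathrm{d}r = 2rE$, differentiation yields
\[
\tfrac{\mathrm{d}R'}{\mathrm{d}r} = 4rEP - (\Delta^{KN})'\bigl[r^2 + Q + W^2\bigr] - 2r\Delta^{KN}.
\]

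First I would use $R' = 0$ to substitute $r^2 + Q + W^2 = P^2/\Delta^{KN}$ into the derivative condition, producing the clean quadratic in $P$
\[
(\Delta^{KN})'\, P^2 - 4rE\, \Delta^{KN}\, P + 2r(\Delta^{KN})^2 = 0,
\]
which determines $P$ as an algebraic function of $E, r, a, e$ with a $\pm$ branch. Next, using the identity $aW = r^2 E - P$ that follows directly from the definitions of $P$ and $W$, I would recast $R' = 0$ purely in terms of $(E, P)$:
\[
a^2 P^2 - \Delta^{KN}(r^2 E - P)^2 = a^2 \Delta^{KN}(r^2 + Q).
\]

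Eliminating either $E$ or $P$ between these two relations leaves a single polynomial equation whose solutions must be rearranged into the stated form. The key simplifying identity (with $M=1$) is $a^2 - \Delta^{KN} + \tfrac{r}{2}(\Delta^{KN})' = r - e^2$, which collapses what would otherwise be a bulky coefficient. Matching the resulting discriminant with the compact expression $\Upsilon \equiv a^2 Q^2 + r^2\bigl[-2e^2 Q + 3Qr - (e^2+Q)r^2 + r^3\bigr]$ then reproduces (\ref{NESPenergyp})--(\ref{AngulMomenNESPher}). The third and fourth families $(E_{\gamma,\delta}, L_{\gamma,\delta}) = -(E_{\alpha,\beta}, L_{\alpha,\beta})$ arise from the $\mathbb{Z}_2$ symmetry $(E, L) \mapsto (-E, -L)$ of $R'$, corresponding to time-reversal.

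The main obstacle will be the concluding algebraic consolidation: correctly correlating the $\pm$ branch of the auxiliary quadratic for $P$ with the $\mp$ sign in front of $\sqrt{\Upsilon}$, and recognising the specific polynomial combination in $r, Q, a, e$ that telescopes into $\Upsilon$ and into the denominator $r^2\sqrt{2e^2 r^2 + r^3(r-3) - 2a(aQ \mp \sqrt{\Upsilon})}$. No conceptually new ingredient is required beyond careful undetermined-coefficient matching and scrupulous sign tracking along the two branches.
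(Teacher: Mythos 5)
Your proposal takes essentially the same route as the paper: the paper itself merely imposes $R^{\prime}(r)=0$ and $\mathrm{d}R^{\prime}/\mathrm{d}r=0$ and "solves these two equations simultaneously" for $(E,L)$ parametrised by $(r,Q)$, which is exactly your plan, and your intermediate reductions (the quadratic $(\Delta^{KN})^{\prime}P^2-4rE\,\Delta^{KN}P+2r(\Delta^{KN})^2=0$, the identity $aW=r^2E-P$, the identity $a^2-\Delta^{KN}+\tfrac{r}{2}(\Delta^{KN})^{\prime}=r-e^2$, and the $\mathbb{Z}_2$ symmetry $(E,L)\mapsto(-E,-L)$ accounting for the $\gamma,\delta$ families) are all correct. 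The concluding elimination you leave as routine sign-tracking is no less explicit than what the paper provides, and the target formulas are consistent (they reduce correctly to the equatorial case for $Q=0$ and to the Reissner--Nordstr\"om circular-orbit integrals for $a\to0$), so nothing essential is missing.
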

    \end{minipage}

};
\end{tikzpicture}%
\end{center}

Equations (\ref{NESPenergyp}),(\ref{AngulMomenNESPher}), of Theorem \ref{EndiamesoORO}, for zero electric charge ($e=0$) reduce correctly to the corresponding ones in Kerr spacetime \cite{Rana},\cite{TeoB}.

In order that the smaller square root appearing in the solutions (\ref{NESPenergyp}),(\ref{AngulMomenNESPher}) is real, we need to impose the condition
\begin{align}
&\Upsilon \geq 0\\
\Leftrightarrow\; & a^2 Q^2+Q (-2e^2 r^2+3 r^3M-r^4)-e^2r^4+r^5 M\geq 0
\label{trionimoQ}
\end{align}
Since $\Upsilon$ is quadratic in Carter's constant $Q$ its two roots are:
\begin{equation}
Q_{1,2}=\frac{r^2}{2a^2}(r(r-3M)+2e^2\pm\sqrt{\mathfrak{S}}),
\end{equation}
where
\begin{equation}
\mathfrak{S}\equiv r^4-6 M r^3+9M^2 r^2+4e^4-4e^2r(3M-r)+4a^2(e^2-r M).
\end{equation}
Now $\mathfrak{S}$ is a quartic equation in $r$ that is related to the study of circular photon orbits in the equatorial plane around a Kerr-Newman black hole \cite{ChechZS}. The two largest roots of $\mathfrak{S}$ $r_1,r_2$ are the radii of the prograde abd retrograde photon orbits, respectively. They lie in the ranges $M\leq r_1<3M<r_2\leq 4M$. The locations of these two photon orbits will be significant in what follows, as they will demarcate the allowed radii of the timelike spherical orbits. It is useful to note that $\mathfrak{S}$ is negative in the range $r_1<r<r_2$. This means that real solutions for $Q_{1,2}$ only exist outside this range. We have checked that $Q_2\leq Q_1<0$ when $r\leq r_1$, and $0<Q_2\leq Q_1$ when $r\geq r_2$.

Since $\Upsilon$ is quadratic in $Q$  with positive leading coefficient ($a^2>0$), inequality (\ref{trionimoQ}) is satisfied when $Q\leq Q_2$ or $Q\geq Q_1$. Thus when $r<r_1$ or $r>r_2$, the allowed ranges for Carter's constant are $Q\leq Q_2$ and $Q\geq Q_1$. On the other hand, when $r_1\leq r\leq r_2$, there is no restriction on the range of $Q$.

For the solutions (\ref{NESPenergyp}),(\ref{AngulMomenNESPher}) to be valid, the larger square root appearing in them also has to be real. Thus we must impose the condition:
\begin{equation}
\Gamma_{\alpha,\beta}\equiv 2 e^2 r^2+r^3(r-3M)-2a (a Q\mp \sqrt{\Upsilon})\geq 0.
\label{bigsquareroot}
\end{equation}

We first find the values of $r$ and $Q$ for which $\Gamma_{\alpha,\beta}=0$. Initially we obtain:
\begin{equation}
\Gamma_{\alpha}\Gamma_{\beta}=r^4\mathfrak{S},
\end{equation}
thus we see that $\Gamma_{\alpha}$ may vanish only if $r=r_1$ or $r=r_2$, and similarly for $\Gamma_{\beta}$.
For either value of $r$, we have checked that $\Gamma_{\alpha}=0$ if $Q\geq Q_2$ and $\Gamma_{\alpha}(r_{1,2})>0$ if $Q<Q_2(r_{1,2})$. On the other hand,
$\Gamma_{\beta}=0$ if $Q\leq Q_2$, and $\Gamma_{\beta}<0$ if $Q>Q_2$. Since we want to impose $\Gamma_{\alpha,\beta}>0$, we deduce that the first class of solutions ($\alpha$) is allowed only if $Q<Q_2$. The second class of solutions ($\beta$) is not allowed at all.

When $r<r_1$ or $r>r_2$ recall that $Q$ is allowed to take the ranges $Q\leq Q_2$ and $Q\geq Q_1$. Note that $\Gamma_{\alpha,\beta}=\sqrt{r^4\mathfrak{S}}$ when $Q=Q_2$, and $\Gamma_{\alpha,\beta}=-\sqrt{r^4\mathfrak{S}}$ when $Q=Q_1$. Thus we deduce that $\Gamma_{\alpha,\beta}>0$ if $Q\leq Q_2$ and  $\Gamma_{\alpha,\beta}<0$ if $Q\geq Q_1$. Consequently, the range $Q\geq Q_1$ is ruled out for these cases.

When $r_1<r<r_2$, recall that there is no restriction on the range of $Q$. We observe that $\Gamma_{\alpha,\beta}=\pm \sqrt{-r^4\mathfrak{S}}$ when $Q=\frac{r^2}{2a^2}(2e^2+r(r-3M))$. Thus we conclude that $\Gamma_{\alpha}>0$ and $\Gamma_{\beta}<0$ for any value of $Q$. As a result, the second class of solutions ($\beta$) is not allowed in this case.

We also remark that the condition $\Gamma_{\alpha,\beta}>0$ will imply that the numerator of (\ref{NESPenergyp}) is positive. Indeed, from (\ref{bigsquareroot}) we deduce:
\begin{align}
&2 e^2 r^2+r^3(r-3M)-2a (a Q\mp \sqrt{\Upsilon})>0\nonumber \\
\Leftrightarrow\;& -a (a Q\mp \sqrt{\Upsilon})>-\frac{1}{2}r^3(r-3M)-e^2 r^2
\end{align}
so that
\begin{equation}
e^2r^2+r^3(r-2M)-a(aQ\mp\sqrt{\Upsilon})>\frac{1}{2}r^3(r-M)>0.
\end{equation}

The reality conditions from the square roots allow for the Carter constant $Q$ to be negative. In the uncharged Kerr black hole it was argued that negative $Q$ for spherical orbits were not allowed \cite{TeoB}. In fact a necessary condition for negative $Q$ with $E^2>1$  was presented in \cite{TeoB}:
\begin{equation}
a^2(1-E^2)+Q+L^2<0.
\label{qkerrNpos}
\end{equation}
For the Kerr black hole it was argued that this condition cannot be satisfied thus ruling out the case of negative $Q$ for spherical orbits \cite{TeoB}.
Following a similar analysis with \cite{TeoB}, we will discuss now the non-negativity of $Q$ for the more general Kerr-Newman black hole.
Substituting (\ref{NESPenergyp}),(\ref{AngulMomenNESPher}) into the left-hand side of
(\ref{qkerrNpos}) we obtain:
\begin{equation}
a^2(1-E^2_{\alpha,\beta})+Q+L^2_{\alpha,\beta}=\frac{\Psi^{KN}_{\alpha,\beta}}{r^2\Gamma_{\alpha,\beta}},
\label{signKarter}
\end{equation}
where
\begin{align}
\Psi^{KN}_{\alpha,\beta}\equiv & r^3(r-M)(Mr^3+a^2Q)-a^2\Upsilon+(Mr^3+a^2Q\mp 2a\sqrt{\Upsilon})^2 \nonumber \\
&+e^2 r^2(-r^4+r^2a^2\pm 2a\sqrt{\Upsilon}).
\end{align}
The ranges of Carter's constant $Q$ we have found so far ensure that the quantities $\Gamma_{\alpha,\beta}$ are positive. It remains to show that $\Psi^{KN}_{\alpha,\beta}$ are also positive for these ranges.

We have checked by  explicit calculation that $\Psi^{KN}_{\alpha}\Psi^{KN}_{\beta}$ is quadratic in the invariant $Q$. For fixed r, $\Psi^{KN}_{\alpha}$ may vanish only at the roots of this quadratic, and likewise for $\Psi^{KN}_{\beta}$. We have checked that the discriminant of this quadratic is negative, thus both roots are complex numbers in this case. We infer that $\Psi^{KN}_{\alpha,\beta}$ do not vanish for any real value of $Q$. Since $\Psi^{KN}_{\alpha,\beta}$ are continuous functions of $Q$ for the ranges we are interested in, it follows that their signs do not change as $Q$ is varied. Thus we still have to check that $\Psi^{KN}_{\alpha}$ and/or $\Psi^{KN}_{\beta}$ are positive for a specific value of $Q$ in each of the allowed ranges we have found.

When $r<r_1$ or $r>r_2$, recall that $Q$ is allowed to take the range $Q\leq Q_2$. We compute:
\begin{align}
\Psi^{KN}_{\alpha,\beta}(Q_2)&=\Psi^{KN}_{\alpha,\beta}(\frac{r^2}{2a^2}(r(r-3M)+2e^2-\sqrt{\mathfrak{S}}))\nonumber \\
&=\frac{1}{2}r^4(r(r-M)+e^2-\sqrt{\mathfrak{S}})^2+Mr^5\Delta^{KN}\nonumber \\
&+\frac{2e^2r^4(2rM-e^2-2r^2)}{4},
\end{align}
which is positive. Thus $\Psi^{KN}_{\alpha,\beta}>0$ if $Q\leq Q_2$

When $r=r_1$ or $r=r_2$, recall that Carter's constant $Q$  is allowed to take the range $Q<Q_2$, although only the first class of solutions $(\alpha)$ is allowed. The above argument is still valid by continuity, and we deduce that $\Psi^{KN}_{\alpha}>0$ if $Q<Q_2$.

When $r_1<r<r_2$ we found that there is no restriction on the range of $Q$, although only the first class of solutions ($\alpha$) is allowed. We compute for $Q=\frac{r^2}{2a^2}(r(r-3M)+2e^2)$:
\begin{align}
\Psi^{KN}_{\alpha}(\frac{r^2}{2a^2}(r(r-3M)+2e^2))&=\frac{1}{2}r^4(r(r-M)+e^2-\sqrt{\mathfrak{-S}})^2\nonumber \\
&+Mr^5\Delta^{KN}+\frac{2e^2r^4(2rM-e^2-2r^2)}{4},
\end{align}
which is positive. Thus $\Psi^{KN}_{\alpha}>0$ for any value of $Q$.

Thus we have shown that (\ref{signKarter}) is positive for all the ranges of Carter's constant of motion $Q$. The condition (\ref{qkerrNpos}) does not hold, in particular, when $Q$ is negative. This rule out all values of $Q$ which lie in the negative range.

Having obtained exact analytic solutions of the geodesic equations and solved the conditions for spherical orbits, we present examples of particular timelike spherical orbits in Kerr-Newman spacetime in Tables \ref{ProgradeKNnp}, \ref{retrogradeKNnpOr}. In Table \ref{ProgradeKNnp} we display examples of prograde orbits: $\Delta\phi^{GTR}>0$ when $L>0$
while in Table \ref{retrogradeKNnpOr} retrograde orbits: $\Delta\phi^{GTR}<0$ when $L<0$, are exhibited. For a choice of values for the black hole parameters $a,e$ and for Carter's constant $Q$, the constants of motion associated with the Killing vector symmetries of Kerr-Newman spacetime are computed from formulae (\ref{NESPenergyp}),(\ref{AngulMomenNESPher}). In the last column of Tables \ref{ProgradeKNnp}, \ref{retrogradeKNnpOr} we present the change $\Delta\phi^{GTR}$ over one complete oscillation in latitude of the orbit, which means we multiply Eqn.((\ref{GVKframePrecessionKN})) by four. In section \ref{asteraspole} we will study spherical polar orbits ( a special case with zero angular momentum $L=0$ ) and we will determine the value of Carter's constant $Q$ separating prograde and retrograde orbits.

\begin{table}[tbp] \centering
\begin{tabular}
[c]{cccccccc}\hline
Orbit & $r/M$ & $Q/M^2$ & $E$ & $L/M$ & $e/M$  & $a/M$ & $\Delta\phi^{GTR}$ \\ \hline
(a)   & $7$ & $2$ & $0.9329746347090826$ &$2.6203403438437856$ & $0$ & $0.999999$ & $6.85204$\\
(b) & $7$ & $2$ & $0.9329917521582036$ & $2.6193683198162256$ & $0.11$& $0.9939$ & $6.84936$ \\
(c) & $7$ & $2$ & $0.9330249803496844$ & $2.6231975525737035$ &$0$& $0.9939$
& $6.84944$ \\
(d) & $7$ & $2$ & $0.9347100491303167$ & $2.71403958181262$ & $0.11$ & $0.8$
& $6.76005$\\
(e)  & $7$ & $2$ & $0.9346178572238036$ & $2.7048064351873267$ & $0.2$ & $0.8$ & $6.75993$
\\
(f) & $4$ & $8$ & $0.9187670265667813$ & $0.9264577651658418$ & $0$ & $0.9939$& $7.77146$\\
(g) & $4$ & $8$ & $0.9185597216627694$ & $0.9130252832603007$ &$0.11$ & $0.9939$ & $7.77188$\\
(h) & $4$ & $8$ & $0.9263820446220102$ & $1.1924993953748733$ & $0$ &$0.8$ &
$7.49676$\\
(i) & $4$ & $8$ & $ 0.9253520786762164$ & $1.1444194140656208$ & $0.2$ &$0.8$
& $7.49741$\\
\end{tabular}
\caption{Properties including frame-dragging (Lense-Thirring precession) of spherical timelike Kerr-Newman (prograde) orbits, applying the exact analytic formula (\ref{GVKframePrecessionKN}) and the first branch of solutions for the  constants of motion in Eqns. (\ref{NESPenergyp}),(\ref{AngulMomenNESPher}).}\label{ProgradeKNnp}
\end{table}

\begin{table}[tbp] \centering
\begin{tabular}
[c]{cccccccc}\hline
Orbit & $r/M$ & $Q/M^2$ & $E$ & $L/M$ & $e/M$  & $a/M$ & $\Delta\phi^{GTR}$ \\ \hline
(a)   & $7$ & $12$ & $0.9500313121192591$ &$-1.3504453996037784$ & $0$ & $0.999999$ & $-5.58661$\\
(b)  & $7$ & $12$ & $0.9499790519730756$ & $-1.3443270259214553$ & $0$ & $0.9939$ & $-5.59099$\\
(c) & $7$ & $12$ & $0.9497202606484783$ & $-1.313994954227598$ & $0.11$ &$0.9939$ & $-5.59138$ \\
(d)  & $7$ & $12$ & $0.9482593233936484$ & $-1.1250483727157161$ &$0.11$ &$0.8$ & $-5.72968$ \\
(e) &$7$ & $12$ & $0.9477180186254757$ & $-1.046386864788553$ & $0.2$ & $0.8$ & $-5.73048$\\
(f)  & $10$ & $1$ & $0.9626040652152477$ & $-4.1133141793233525$ & $0$ & $0.9939$ & $-5.8465$\\
(g) & $10$ & $1$ & $0.9625680840381455$ & $-4.109699623451054$ & $0.11$ & $0.9939$ & $-5.84647$\\
(h)  & $10$ & $1$ & $0.9611085421961251$ & $-4.008398488785058$ & $0.11$ & $0.8$ & $-5.93674$\\
(i) & $10$ & $1$ & $0.961033868172724$ & $-4.000352717478337$ & $0.2$ & $0.8$
& $-5.93669$\\

\end{tabular}
\caption{Properties including frame-dragging (Lense-Thirring precession) of spherical timelike Kerr-Newman (retrograde) orbits, applying the exact analytic formula (\ref{GVKframePrecessionKN}) and the second branch of solutions for the  constants of motion in Eqns. (\ref{NESPenergyp}),(\ref{AngulMomenNESPher}).}\label{retrogradeKNnpOr}
\end{table}

\subsubsection{Frame-dragging for spherical polar Kerr-Newman geodesics}\label{asteraspole}

In this subsection we shall investigate the physically important class of \textit{polar} orbits of massive particles, i.e. timelike geodesics crossing the symmetry axis of the Kerr-Newman spacetime.
The first integral in Eqn.(\ref{polareq}) for zero cosmological constant reads:
\begin{equation}
\rho^4\dot{\theta}^2=Q+a^2(E^2-1)\cos^2\theta-L^2\cot^2\theta.
\label{expresspolar}
\end{equation}
It follows from (\ref{expresspolar}) that in order for the orbit to reach the polar axis, where $\cos^2\theta=1$, it is necessary that \cite{Tsoubelis}
\begin{equation}
L=0.
\label{sinthikipoliki}
\end{equation}
The polar orbits considered in this subsection, mean that our test particle is supposed not only to cross the symmetry axis but also to sweep the whole range of the angular coordinate $\theta$. Therefore, we demand that $\dot{\theta}$ does not vanish for any $\theta\in [0,\pi]$. According to (\ref{expresspolar}) and (\ref{sinthikipoliki}) , this condition is equivalent to the demand that
\begin{equation}
Q>0\;\;\;\;\text{when}\;\;\;\;E^2\geq 1
\end{equation}
and
\begin{equation}
Q>a^2(1-E^2)\;\;\;\;\text{when}\;\;\;\;E^2<1.
\end{equation}
Spherical polar orbits (i.e polar orbits with constant radius) are given by the conditions $R(r)=0,\frac{{\rm d}R}{{\rm d}r}=0$. Equivalently, polar spherical orbits correspond to local extrema of the following effective potential:
\begin{equation}
V^2_{eff}(r):=\frac{\Delta^{KN}(r^2+K)}{(r^2+a^2)^2},
\label{effectPotPolarKN}
\end{equation}
where $K$ is the hidden constant of motion which for polar geodesics reads $K=Q+a^2E^2$. The local extrema of $V^2_{eff}(r)$ occur at the roots of the equation:
\begin{equation}
e^2r(r^2+a^2)-e^2(2r^3+2rK)+Mr^4-(K-a^2)r^3+3M(K-a^2)r^2-(K-a^2)a^2r-MKa^2=0.
\label{sinthiki}
\end{equation}
The general features of the $r$ motion of a test particle in polar orbit about the source of the Kerr-Newman field can be deduced from graphs of $V^2_{eff}(r)$, such as those in Figs.\ref{EffPotKNpolar1a08e02},\ref{EffPotKNpolar1a052e085}. As can be deduced from the graphs, a test particle follows a bound orbit when its specific energy at infinity $E$ is such that $E^2<1$. It can also be seen that, when $E^2<1$  and a black hole is involved, the particle gets dragged, disappearing behind the event horizon, i.e., the surface $r=r_+$, unless $K$ is such that $V^2_{eff}(r)$ develops a local maximum $E^2_{\max}$ outside the even horizon and $E^2<E^2_{\max}$. In this case, the particle will not be swallowed by the black hole, provided it is initially found in the region $r>r_0$, where $r_0$ is the point on the $r$ axis where $V^2_{eff}(r)=E^2_{\max}$. Moreover, in the case of a rotating charged black hole with vanishing cosmological constant, i.e. when  $e^2\leq M^2-a^2$, $V^2_{eff}(r)$ necessarily vanishes at the radii of the event and Cauchy horizons $r_{\pm}=M\pm\sqrt{M^2-a^2-e^2}$ which are the roots of the equation
$\Delta^{KN}=0$.
Considering the case of spherical orbits, we note that $E^2=V^2_{eff}(r_0)$, where $r_0$ is a root of (\ref{sinthiki}), is a necessary condition for a spherical orbit to obtain. Thus, with the  aid (\ref{sinthiki}) we derive the following relations for the specific energy and Carter's constant:
\begin{align}
E^2&=\frac{r(\Delta^{KN})^2}{(r^2+a^2)Z^{KN}}\label{energiapolars},\\
Q&=\frac{Mr^4+a^2r^3-3Ma^2r^2+a^4r-e^2r^3+e^2ra^2}{Z^{KN}}-a^2E^2\label{consCartpols},
\end{align}
where
\begin{equation}
Z^{KN}:=2e^2r+r^3-3Mr^2+a^2r+Ma^2.
\label{zpolsphekn}
\end{equation}

In addition, using the same techniques as in the proof of Theorem \ref{LenseThirFDRAG}, we derive the following Lense-Thirring precession $\Delta\phi^{GTR}_{Polar}$ per revolution for a polar spherical orbit in Kerr-Newman spacetime:
\begin{equation}
\fbox{$\displaystyle\Delta\phi^{GTR}_{Polar}=4\left(\frac{aP}{\Delta^{KN}}-aE\right)\frac{1}{|a|\sqrt{1-E^2}}\left(\frac{\pi}{2}
\right)\frac{1}{\sqrt{z_+}}F\left(\frac{1}{2},\frac{1}{2},1,\frac{z_-}{z_+}\right),$}
\label{ltfdpolarsphekn}
\end{equation}
where $P=E(r^2+a^2)$ for $L=0$, $z_+=\frac{Q}{a^2(1-E^2)},z_-=1$. Eqn (\ref{ltfdpolarsphekn}) simplifies to:
\begin{equation}
\fbox{$\displaystyle\Delta\phi^{GTR}_{Polar}=\frac{4aE(2Mr-e^2)}{\Delta^{KN}}\frac{\pi}{2}
\frac{1}{\sqrt{Q}}F\left(\frac{1}{2},\frac{1}{2},1,\frac{a^2(1-E^2)}{Q}\right).$}
\label{frameltpreceKNpolas}
\end{equation}
The exact analytic result in Eqn(\ref{frameltpreceKNpolas}) shows that during a revolution of the particle along the polar orbit the lines of nodes advance in a direction which coincides with that of the rotation of the central Kerr-Newman black hole. Thus this is a typical dragging effect, so that if the rotation vanishes, the change in azimuth $\phi$ vanishes too.

Using the convergent series expansion around the origin ($k^2\equiv\frac{a^2(1-E^2)}{Q}<1$ for bound orbits with $E^2<1$) for the Gau\ss's hypergeometric function we obtain:
\begin{equation}
\Delta\phi^{GTR}_{Polar}=\frac{4aE(2Mr-e^2)}{\Delta^{KN}}\frac{\pi}{2}
\frac{1}{\sqrt{Q}}\left(1+\frac{1}{4}\frac{a^2(1-E^2)}{Q}+\frac{9}{64}\left(\frac{a^2(1-E^2)}{Q}\right)^2+\cdots\right)
\end{equation}
Also using Eqns(\ref{energiapolars})-(\ref{zpolsphekn}) we express the variable of the hypergeometric function $k^2$ as follows:
\begin{equation}
k^2=\left(\frac{a}{r}\right)^2\frac{[r^4+2r^2a^2+a^4-4Mr^3-\frac{re^4}{M}+4r^2e^2]}{
r^4+a^4+2a^2r^2-4Mra^2-\frac{2re^2a^2}{M}-\frac{a^4e^2}{rM}-\frac{a^2e^4}{rM}+4a^2e^2-\frac{e^2r^3}{M}}.
\end{equation}

\begin{figure}
[ptbh]

\psfrag{logr}{$\log\left(\frac{r}{M}\right)$} \psfrag{Veff}{$V^2_{eff}(r)$}
\psfrag{a0.8,e0.2,K8}{$a=0.8,e=0.2,K=8$}
\psfrag{a0.8,e0.2,K16}{$a=0.8,e=0.2,K=16$}
\psfrag{a0.8,e0.2,K24}{$a=0.8,e=0.2,K=24$}

\begin{center}
\includegraphics[height=2.4526in, width=3.3797in ]{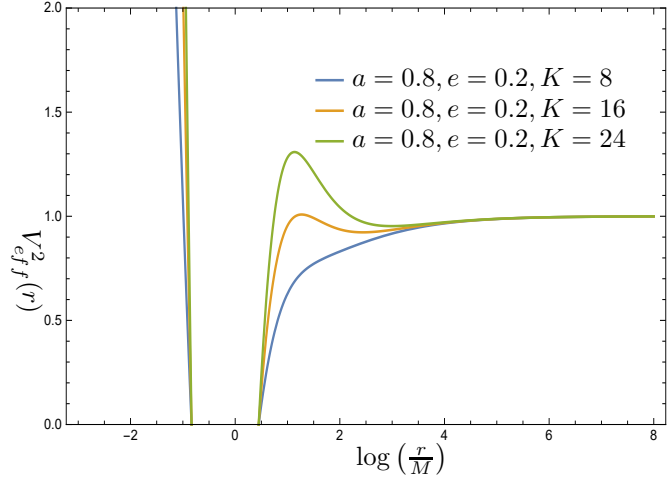}
 \caption{The effective square potential $V^2_{eff}(r)$ for polar spherical orbits in the neighbourhood of a rotating charged black hole with $a=0.8,e=0.2$. The parameter $K$ distinguishing the three curves is Carter's hidden integral of motion in the Kerr-Newman black hole.  }%
\label{EffPotKNpolar1a08e02}%
\end{center}
\end{figure}

\begin{figure}
[ptbh]

\psfrag{logr}{$\log\left(\frac{r}{M}\right)$} \psfrag{Veff}{$V^2_{eff}(r)$}
\psfrag{a0.52,e0.85,K8}{$a=0.52,e=0.85,K=8$}
\psfrag{a0.52,e0.85,K16}{$a=0.52,e=0.85,K=16$}
\psfrag{a0.52,e0.85,K24}{$a=0.52,e=0.85,K=24$}
\psfrag{a0.52,e0,K8}{$a=0.52,e=0,K=8$}
\psfrag{a0.52,e0,K16}{$a=0.52,e=0,K=16$}
\psfrag{a0.52,e0,K24}{$a=0.52,e=0,K=24$}

\begin{center}
\includegraphics[height=2.4526in, width=3.3797in ]{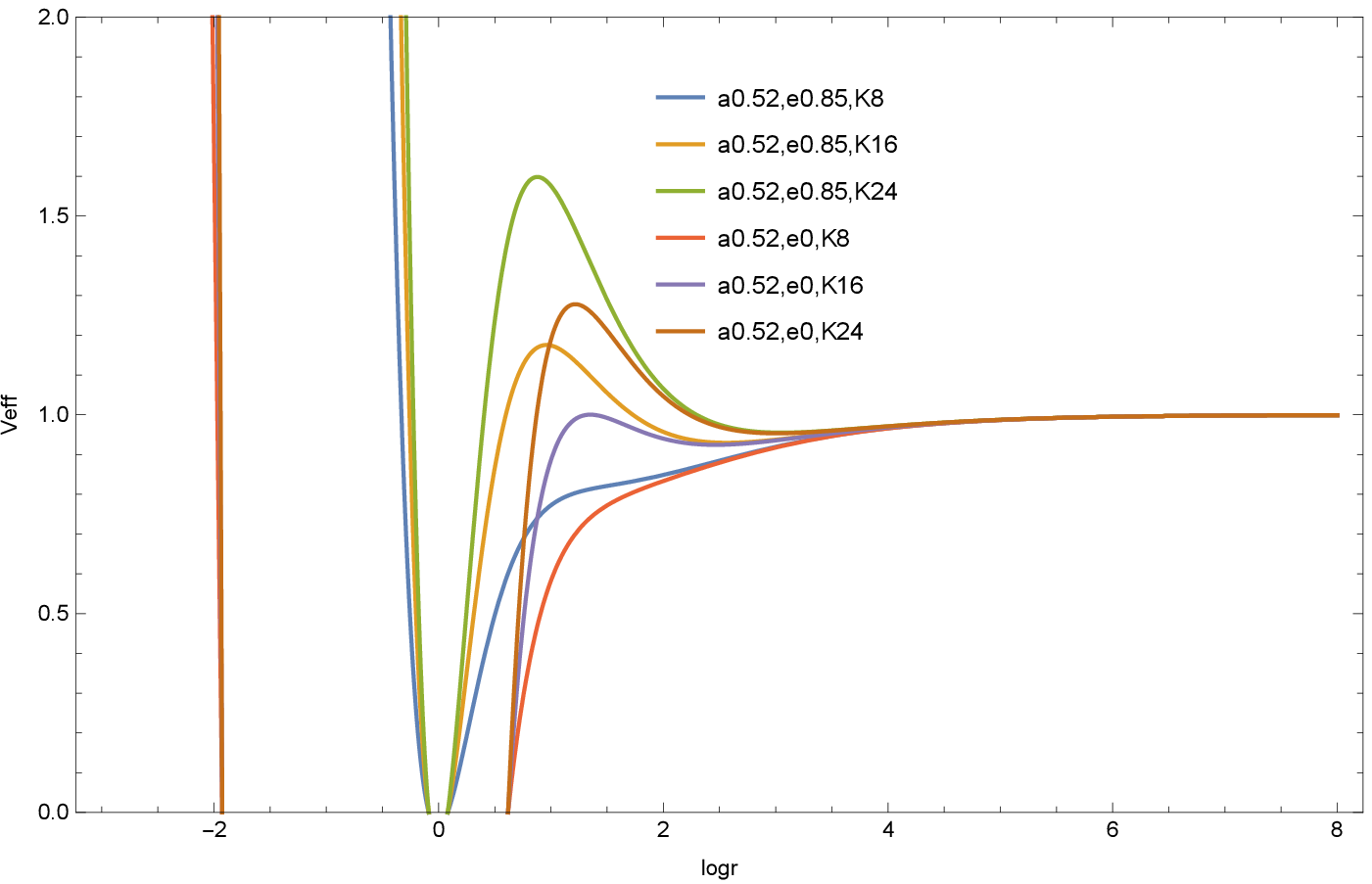}
 \caption{The effective square potential $V^2_{eff}(r)$ for polar spherical orbits in the neighbourhood of a rotating charged black hole with $a=0.52,e=0.85$. For comparison we also plot the case of a Kerr black hole with $a=52,e=0$. The parameter $K$ distinguishing the corresponding curves is Carter's hidden integral of motion in the Kerr-Newman black hole.  }%
\label{EffPotKNpolar1a052e085}%
\end{center}
\end{figure}

\begin{table}[tbp] \centering
\begin{tabular}
[c]{ccccccc}\hline
Orbit & $r/M$ & $Q/M^2$ &$E$ &$e/M$  & $a/M$& $\Delta\phi^{GTR}_{Polar}$ \\ \hline
(a)   & $10$ & $14.198012556704303$ & $0.955967687347955$ &$0$ & $0.8$& $0.316593=18.1394^{\degree}$\\
(b)   & $10$ & $14.175980906902574$ & $0.9559475468735232$ & $0.11$ & $0.8$& $0.316593=18.1395^{\degree}$\\
(c)   & $10$ & $14.125237970658672$ & $0.9559012400826655$ & $0.2$ & $0.8$& $0.316595=18.1396^{\degree}$\\
(d) & $10$ & $14.149643079211955$ & $0.9558495321451093$ & $0$ & $0.999999$ &$0.394826=22.6219^{\degree}$ \\
(e) & $10$ & $14.149642728218016$ & $0.9558495318270006$ & $0.00044$ & $0.999999$
& $0.394826=22.6219^{\degree}$\\
(f) & $10$ & $14.248389653442322$ & $0.9560911507901617$ & $0$ & $0.52$ & $0.206277=11.8188^{\degree}$\\
(g)  & $10$ & $14.175264277554161$ & $0.9560237528884883$ & $0.2$ & $0.52$ &
$0.206278=11.8188^{\degree}$\\
\hline
\end{tabular}
\caption{Properties including frame-dragging (Lense-Thirring precession) of spherical timelike polar Kerr-Newman orbits applying the exact analytic formula (\ref{frameltpreceKNpolas}) and the relations (\ref{energiapolars}), (\ref{consCartpols}), for various values of the black hole's spin and electric charge .}\label{TomiAxonaSymmetrias}
\end{table}

We present examples of spherical polar orbits in Table \ref{TomiAxonaSymmetrias}. The larger the Kerr parameter the larger the frame dragging precession. We note the small contribution of the electric charge on the Lense-Thirring precession for fixed Kerr parameter.

\subsection{Periods}

Squaring the geodesic differential equation for the polar variable (\ref{polareq}) (for $\Lambda=0$), multiplying  by the term $\cos^2\theta\sin^2\theta$, and making the change to the variable $z$, yields  the following differential equation for the proper polar period:
\begin{equation}
{\rm d}\tau_{\theta}=\frac{(r^2+a^2 z){\rm d}z}{2\sqrt{z}\sqrt{a^2(1-E^2)z^2+(-a^2(1-E^2)-L^2-Q)z+Q}}
\label{tautheta}
\end{equation}

Finally, our closed form analytic computation for the proper polar period after integrating (\ref{tautheta}) yields:
\begin{proposition}
\begin{align}
\tau_{\theta}&=4\Biggl[\frac{r^2}{|a|\sqrt{1-E^2}}\frac{\pi}{2}\frac{1}{\sqrt{z_+}}
F\left(\frac{1}{2},\frac{1}{2},1,\frac{z_-}{z_+}\right)\nonumber\\
&-\frac{a^2}{2|a|}\frac{1}{\sqrt{1-E^2}}\sqrt{z_+}\pi F\left(\frac{1}{2},-\frac{1}{2},1,\frac{z_-}{z_+}\right)\nonumber\\
&+\frac{z_+ a^2}{2|a|}\frac{1}{\sqrt{1-E^2}}\frac{\pi}{2}\frac{1}{\sqrt{z_+}}F\left(\frac{1}{2},\frac{1}{2},1,\frac{z_-}{z_+}\right)\Biggr].
\label{polperiodtheta}
\end{align}
\end{proposition}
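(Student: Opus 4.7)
The plan is to compute one quarter-period $\int_0^{z_-}\mathrm{d}\tau_\theta$ in closed form and then multiply by four, since over a complete latitudinal oscillation $\theta$ sweeps between $\theta_-$ and $\pi-\theta_-$ four times and correspondingly the variable $z=\cos^2\theta$ traverses the interval $[0,z_-]$ four times, exactly as in the frame-dragging computation that led to Theorem \ref{LenseThirFDRAG}.

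First I would factorise the quadratic inside the radicand of (\ref{tautheta}) as $a^{2}(1-E^{2})(z_{+}-z)(z_{-}-z)$ on the motion interval $0\le z\le z_{-}$, pulling the constant $|a|\sqrt{1-E^{2}}$ outside the square root. The numerator $r^{2}+a^{2}z$ I would then split via the algebraic identity $r^{2}+a^{2}z=(r^{2}+a^{2}z_{+})-a^{2}(z_{+}-z)$, which isolates a ``constant-numerator'' integral
\begin{equation*}
I_{0}=\int_{0}^{z_{-}}\frac{\mathrm{d}z}{\sqrt{z}\sqrt{(z_{+}-z)(z_{-}-z)}}
\end{equation*}
and an integral in which the factor $\sqrt{z_{+}-z}$ survives in the numerator,
\begin{equation*}
I_{3}=\int_{0}^{z_{-}}\frac{\sqrt{z_{+}-z}\,\mathrm{d}z}{\sqrt{z}\sqrt{z_{-}-z}}.
\end{equation*}

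Next I would apply the substitution $z=z_{-}\xi$, $\xi\in[0,1]$, which yields $\sqrt{z_{+}-z}=\sqrt{z_{+}}\sqrt{1-k\xi}$ with $k\equiv z_{-}/z_{+}$, and casts both $I_{0}$ and $I_{3}$ into the Eulerian form
\begin{equation*}
\int_{0}^{1}\xi^{b-1}(1-\xi)^{c-b-1}(1-k\xi)^{-a}\,\mathrm{d}\xi=\frac{\Gamma(b)\Gamma(c-b)}{\Gamma(c)}\,F(a,b,c,k).
\end{equation*}
Reading off $(a,b,c)=(\tfrac{1}{2},\tfrac{1}{2},1)$ for $I_{0}$ and $(a,b,c)=(-\tfrac{1}{2},\tfrac{1}{2},1)$ for $I_{3}$, and using $\Gamma(\tfrac{1}{2})^{2}/\Gamma(1)=\pi$, one obtains $I_{0}=\pi z_{+}^{-1/2}F(\tfrac{1}{2},\tfrac{1}{2},1,k)$ and $I_{3}=\pi\sqrt{z_{+}}\,F(\tfrac{1}{2},-\tfrac{1}{2},1,k)$.

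Finally, I would assemble the three pieces: the $r^{2}I_{0}$ contribution reproduces the first term of (\ref{polperiodtheta}); the $a^{2}z_{+}I_{0}$ contribution, after rewriting $z_{+}/\sqrt{z_{+}}=\sqrt{z_{+}}$, reproduces the third term; and the $-a^{2}I_{3}$ contribution yields the middle term containing $F(\tfrac{1}{2},-\tfrac{1}{2},1,k)$. The only genuine technical step is the splitting $r^{2}+a^{2}z=(r^{2}+a^{2}z_{+})-a^{2}(z_{+}-z)$, which converts the awkward $\sqrt{z}$-weighted integral into a linear combination of two standard Gauss--Euler integrals with $(1-k\xi)^{\pm 1/2}$; everything else is a bookkeeping of Pochhammer factors and the $\Gamma(\tfrac{1}{2})^{2}=\pi$ identity, with no deeper analytic obstacle.
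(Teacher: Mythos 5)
Your route is sound and is essentially a streamlined variant of the paper's own proof. The paper rests on the same key manoeuvre of trading the $z$ in the numerator against the root $z_+$ (it writes $a^2z=a^2(z-z_+)+a^2z_+$, while you group the constants as $r^2+a^2z_+$), but it evaluates the resulting integrals via the substitution $z=z_-+\xi^2(z_j-z_-)$, which produces an Appell $F_1$ of two variables that must then be reduced through the special value (\ref{summationgauss}) (Gauss summation) and a Kummer transformation after letting $z_j\to0$. Your substitution $z=z_-\xi$ lands directly on the Euler integral for ${}_2F_1$ with argument $z_-/z_+$, so no multivariable hypergeometric functions or transformation identities are needed; the identifications $(a,b,c)=(\tfrac12,\tfrac12,1)$ and $(-\tfrac12,\tfrac12,1)$ and the values $I_0=\pi z_+^{-1/2}F(\tfrac12,\tfrac12,1,z_-/z_+)$, $I_3=\pi\sqrt{z_+}\,F(\tfrac12,-\tfrac12,1,z_-/z_+)$ are all correct, as is the quarter-oscillation bookkeeping and the factorisation of the quadratic in (\ref{tautheta}).

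The one genuine defect is your final assembly claim. With the overall prefactor $1/\bigl(2|a|\sqrt{1-E^2}\bigr)$, your $a^2z_+I_0$ piece equals $\frac{a^2z_+}{2|a|\sqrt{1-E^2}}\,\frac{\pi}{\sqrt{z_+}}\,F(\tfrac12,\tfrac12,1,z_-/z_+)$, which is \emph{twice} the third term as printed in (\ref{polperiodtheta}); it does not reproduce it verbatim, and you assert the match without checking. The mismatch is not in your integrals: for instance with $a^2(1-E^2)=1$, $r^2=1$, $a^2=2$, $z_+=1$, $z_-=1/4$ the quarter-period integral evaluates numerically to $\simeq 2.122$, in agreement with your expression $\frac{1}{2|a|\sqrt{1-E^2}}\bigl[(r^2+a^2z_+)\,\pi z_+^{-1/2}F(\tfrac12,\tfrac12,1,z_-/z_+)-a^2\sqrt{z_+}\,\pi F(\tfrac12,-\tfrac12,1,z_-/z_+)\bigr]$, whereas the bracket of (\ref{polperiodtheta}) as printed gives $\simeq 0.437$; a related factor-of-two slip already occurs in the last line of the paper's evaluation of the first integral, where $\Gamma(\tfrac12)^2=\pi$ has become $\pi/2$ (the proposition's first term then compensates, but the third does not). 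So your derivation yields the correct period, but you should state explicitly that the coefficient of the third term must be $\frac{z_+a^2}{|a|\sqrt{1-E^2}}\frac{\pi}{2}\frac{1}{\sqrt{z_+}}$ (equivalently $\pi$ in place of $\pi/2$ with the printed prefactor), rather than claim agreement with the displayed formula as it stands.
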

\begin{proof}
The integral in (\ref{tautheta}) can be split into two parts:
\begin{equation}
\tau_{\theta}=\int\frac{r^2{\rm d}z}{2|a|\sqrt{1-E^2}}\frac{1}{\sqrt{z}\sqrt{(z-z_-)(z-z_+)}}+
\int\frac{a^2 z {\rm d}z}{2\sqrt{z}|a|\sqrt{1-E^2}\sqrt{(z-z_+)(z-z_-)}}.
\end{equation}
We compute first:
\begin{align}
&\int_0^{z_{-}}\frac{r^2{\rm d}z}{2|a|\sqrt{1-E^2}}\frac{1}{\sqrt{z}\sqrt{(z-z_-)(z-z_+)}}\nonumber\\
&=\frac{r^2}{2|a|\sqrt{1-E^2}}F\left(\frac{1}{2},\frac{1}{2},1,
\frac{-z_-}{z_+-z_-}\right)\Gamma\left(\frac{1}{2}\right)\Gamma\left(\frac{1}{2}\right)
\frac{1}{\sqrt{z_+-z_-}}\nonumber\\
&=\frac{r^2}{2|a|\sqrt{1-E^2}}\frac{\pi}{2}\frac{1}{\sqrt{z_+}}
F\left(\frac{1}{2},\frac{1}{2},1,
\frac{z_-}{z_+}\right).
\end{align}
We then write:
\begin{equation}
\int_{z_j}^{z_-}\frac{a^2z{\rm d}z}{2|a|\sqrt{1-E^2}\sqrt{z}\sqrt{(z-z_+)(z-z_-)}}=\int_{z_j}^{z_-}\frac{a^2(z-z_{+}+z_+){\rm d}z}{2|a|\sqrt{1-E^2}\sqrt{z}\sqrt{(z-z_+)(z-z_-)}}
\end{equation}

Applying the change of variables:$z=z_{-}+x(z_j-z_-)$ we compute the term:
\begin{align}
&\int_{z_j}^{z_-}\frac{a^2(z-z_+){\rm d}z}{2|a|\sqrt{1-E^2}\sqrt{z}\sqrt{(z-z_+)(z-z_-)}}
=\frac{a^2}{2|a|\sqrt{1-E^2}}\frac{(z_+-z_-)(z_j-z_-)}{\sqrt{z_-(z_{-}-z_+)(z_j-z_-)}}
\int_0^1\frac{{\rm d}x\left[1-\frac{x(z_{-}-z_j)}{z_{-}-z_+}\right]^{1/2}}{\sqrt{1-\frac{x (z_{-}-z_j)}{z_-}}\sqrt{x}}\nonumber\\
&=\frac{a^2}{2|a|\sqrt{1-E^2}}\frac{(z_+-z_-)(z_j-z_-)}{\sqrt{z_-(z_{-}-z_+)(z_j-z_-)}}
F_1\left(\frac{1}{2},-\frac{1}{2},\frac{1}{2},\frac{3}{2},\frac{z_{-}-z_j}{z_{-}-z_+},
\frac{z_{-}-z_j}{z_-}\right)\frac{\Gamma\left(\frac{1}{2}\right)\Gamma(1)}{\Gamma\left(\frac{3}{2}\right)}
\end{align}
Setting  $z_j=0$ in the expression which involves Appell's hypergeometric function $F_1$ yields:
\begin{align}
&\int_{0}^{z_-}\frac{a^2(z-z_+){\rm d}z}{2|a|\sqrt{1-E^2}\sqrt{z}\sqrt{(z-z_+)(z-z_-)}}\nonumber\\
&=
\frac{-a^2}{2|a|}\frac{1}{\sqrt{1-E^2}}
\frac{(z_+-z_-)}{\sqrt{z_+-z_-}}\frac{\Gamma\left(\frac{1}{2}\right)}{\Gamma\left(\frac{3}{2}\right)}
\frac{\Gamma\left(\frac{3}{2}\right)\Gamma\left(\frac{3}{2}-\frac{1}{2}-\frac{1}{2}\right)}{\Gamma\left(\frac{3}{2}-\frac{1}{2}\right)\Gamma\left(\frac{3}{2}-\frac{1}{2}\right)}
F\left(\frac{1}{2},-\frac{1}{2},1,\frac{-z_-}{z_+-z_-}\right)\nonumber\\
&=\frac{-a^2}{2|a|}\frac{1}{\sqrt{1-E^2}}\sqrt{z_{+}}\Gamma\left(\frac{1}{2}\right)
\Gamma\left(\frac{1}{2}\right)F\left(\frac{1}{2},-\frac{1}{2},1,\frac{z_-}{z_+}\right)
\label{reihekummer}
\end{align}
This yields the second term in Eqn(\ref{polperiodtheta}). In our calculation in (\ref{reihekummer}), we made use of the following transformation property of Gau$\ss$'s hypergeometric function $F$, discovered by Kummer \cite{Kummer}:
\begin{equation}
z^{1-c}F(a-c+1,b-c+1,2-c,z)=z^{1-c}(1-z)^{c-b-1}F(1-a,b-c+1,2-c,\frac{z}{z-1}).
\end{equation}
We also used the special value for Appell's hypergeometric function $F_1$:
\begin{align}
F_1(\alpha,\beta,\beta^{\prime},\gamma,x,1)&=F(a,\beta^{\prime},\gamma,1)F(\alpha,\beta,\gamma-\beta^{\prime},x)\nonumber\\
&=\frac{\Gamma(\gamma)\Gamma(\gamma-\alpha-\beta^{\prime})}{\Gamma(\gamma-\alpha)\Gamma(\gamma-\beta^{\prime})}F(\alpha,\beta,\gamma-\beta^{\prime},x),\;\;\Re(\gamma-\alpha-\beta^{\prime})>0,
\label{summationgauss}
\end{align}
where in the second equality in (\ref{summationgauss}) we applied the Gau$\ss$ summation theorem:
\begin{equation}
F(a,\beta^{\prime},\gamma,1)=\frac{\Gamma(\gamma)\Gamma(\gamma-\alpha-\beta^{\prime})}{\Gamma(\gamma-\alpha)\Gamma(\gamma-\beta^{\prime})},\;\;\Re(\gamma-\alpha-\beta^{\prime})>0.
\end{equation}
\end{proof}

\subsection{Spherical orbits in Kerr-Newman-(anti) de Sitter spacetime}

From (\ref{azimueq}) and (\ref{polareq}) we derive the equation :

\begin{align}
\frac{{\rm d}\phi}{{\rm d}\theta}&=\frac{a\Xi^2}{\Delta_r^{\rm KN}}\frac{[(r^2+a^2)E-aL]}{\sqrt{\Theta^{\prime}}}
-\frac{\Xi^2}{(1+\frac{a^2\Lambda}{3}\cos^2\theta)(\sin^2\theta)}\frac{aE\sin^2\theta-L}{\sqrt{\Theta^{\prime}}}\nonumber\\
&=\frac{a\Xi^2}{\Delta_r^{\rm KN}}\frac{[(r^2+a^2)E-aL]}{\sqrt{\Theta^{\prime}}}
-\frac{\Xi^2}{(1+\frac{a^2\Lambda}{3}z)(1-z)}\frac{aE(1-z)-L}{\sqrt{\Theta^{\prime}}}.
\end{align}

Using the variable $z$ we obtain the following novel exact result in closed analytic form for the amount of frame-dragging that a timelike spherical orbit in Kerr-Newman-(anti)de Sitter spacetime undergoes. As $\theta$ goes through a quarter of a complete oscillation we obtain the change in azimuth $\phi$, $\Delta\phi^{\rm GTR}$ in terms of Lauricella's $F_D$ and Appell's $F_1$  multivariable generalised hypergeometric functions:

\begin{align}
\Delta\phi^{\rm GTR}_{\Lambda}&=\frac{a\Xi^2}{\Delta_r^{\rm KN}}
\frac{[(r^2+a^2)E-aL]}{\sqrt{z_+-z_-}\sqrt{z_{-}-z_{\Lambda}}}
\frac{\Gamma^2\left(\frac{1}{2}\right)}{-2\sqrt{a^4\frac{\Lambda}{4}}}
F_1\left(\frac{1}{2},\frac{1}{2},\frac{1}{2},1,
\frac{z_-}{z_{-}-z_+},\frac{z_-}{z_{-}-z_{\Lambda}}\right)\nonumber\\
&+\frac{H\Xi^2 a E}{2\sqrt{\frac{a^4\Lambda}{3}}}
\Gamma^2 \left(\frac{1}{2}\right)F_D\left(\frac{1}{2},
\frac{1}{2},\frac{1}{2},1,1,\frac{z_-}{z_{-}-z_+},
\frac{z_-}{z_{-}-z_{\Lambda}},-\eta\right)\nonumber \\
&+\frac{-H L\Xi^2}{2\sqrt{\frac{a^4\Lambda}{3}}}\frac{\Gamma^2 \left(\frac{1}{2}\right)}{1-z_-}
F_D\left(\frac{1}{2},
\frac{1}{2},\frac{1}{2},1,1,1,
\frac{z_-}{z_{-}-z_+},\frac{z_-}{z_{-}-z_{\Lambda}},-\eta,\frac{-z_-}{1-z_-}\right),
\label{LTlambdaKNdS}
\end{align}
where we define:
\begin{align}
H&\equiv\frac{1}{\sqrt{z_+-z_-}\sqrt{z_--z_{\Lambda}}}\frac{1}{(
1+\frac{a^2\Lambda z_-}{3})}\nonumber \\
\eta&\equiv \frac{-a^2\Lambda z_-}{1+\frac{a^2\Lambda}{3}z_-}.
\end{align}
The variables $z_+,z_-,z_{\Lambda}$ appearing in the hypergeometric functions in (\ref{LTlambdaKNdS}) are the roots of polynomial equation (\ref{latitudinalroots}).

In our calculations we used the following property for the values of Lauricella's multivariate function $F_D$:
\begin{align}
F_D^{(n)}(\alpha,\beta_1,&\ldots,\beta_n,\gamma,1,x_2,\ldots,x_n)\nonumber\\
&=\frac{\Gamma(\gamma)\Gamma(\gamma-\alpha-\beta_1)}{\Gamma(\gamma-\alpha)
\Gamma(\gamma-\beta_1)}F_D^{(n-1)}(\alpha,\beta_2,\ldots,\beta_n,\gamma-\beta_1,
x_2,\ldots,x_n),\nonumber \\
&\max\{|x_2|,\ldots,|x_n|\}<1,\;\;\Re(\gamma-\alpha-\beta_1)>0.
\label{Fdspecialvalue1}
\end{align}
\subsubsection{Conditions for spherical orbits in Kerr-Newman spacetime with a cosmological constant}

We now proceed to solve the conditions for timelike spherical orbits in the Kerr-Newman spacetime in the presence of the cosmological constant. As before, for a spherical orbit to exist at radius $r$, the conditions $R^{\prime}(r)={\rm d}R^{\prime}/{\rm d}r=0$ must hold at this radius.  The procedure of solving these conditions will lead us to a generalisation of equations (\ref{NESPenergyp}),(\ref{AngulMomenNESPher}) and (\ref{marvelOne}),(\ref{marvelTwo}). We solved these conditions simultaneously and the novel solutions take the following elegant and compact form when parametrised in terms of $r$ and Carter's constant $Q$:

\tikzstyle{mybox} = [draw=blue, fill=white!20, very thick,
    rectangle, rounded corners, inner sep=10pt, inner ysep=20pt]
\tikzstyle{fancytitle} =[fill=red, text=white]
\begin{center}
\begin{tikzpicture}
\node [mybox] (box){%
    \begin{minipage}{1.0\textwidth}
    \begin{theorem}\label{OroTheoLambda}
\begin{align}
E_{\alpha,\beta}(r,Q;\Lambda^{\prime},a,e)&=
\frac{r^2 e^2+r^3(r-2)-r^2(r^2+a^2)\Lambda^{\prime}r^2-a(aQ\mp\sqrt{\Upsilon_{\Lambda}})}
{r^2\sqrt{2e^2 r^2+r^3(r-3)-a^2 r^4\Lambda^{\prime}-2a(aQ\mp\sqrt{\Upsilon_{\Lambda}})}}\label{GENERALESPHE},\\
L_{\alpha,\beta}(r,Q;\Lambda^{\prime},a,e)&=
-\frac{(r^2+a^2)(\mp\sqrt{\Upsilon_{\Lambda}}+aQ)+2 a r^3+ar^4\Lambda^{\prime}(r^2+a^2)-ae^2r^2}{r^2\sqrt{2e^2 r^2+r^3(r-3)-a^2 r^4\Lambda^{\prime}-2a(aQ\mp\sqrt{\Upsilon_{\Lambda}})}}\label{LGENERALSPHERE},
\end{align}
where
\begin{equation}
\Upsilon_{\Lambda}\equiv a^2Q(Q+r^4\Lambda^{\prime})+r^2(-2e^2Q-(e^2+Q)r^2+3Qr+r^3-r^6 \Lambda^{\prime}).
\label{ipsilonL}
\end{equation}
\end{theorem}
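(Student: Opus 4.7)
My plan is to determine $E$ and $L$ by solving the simultaneous algebraic system $R^{\prime}(r)=0$ and $dR^{\prime}/dr=0$ at fixed $r,Q,a,e,\Lambda^{\prime}$, in direct analogy with the proof of Theorem \ref{EndiamesoORO} but retaining every cosmological contribution. I will introduce the two bilinear combinations
\begin{equation}
P \equiv (r^{2}+a^{2})E-aL, \qquad W \equiv L-aE,
\end{equation}
in terms of which (\ref{sextic}) with $\mu=1$ collapses to the compact form $R^{\prime}=\Xi^{2}P^{2}-\Delta_{r}^{KN}\bigl(r^{2}+Q+\Xi^{2}W^{2}\bigr)$. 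Because $W$ is independent of $r$ when $E,L$ are held fixed while $dP/dr=2rE$, differentiation yields
\begin{equation}
\frac{dR^{\prime}}{dr}=4rE\,\Xi^{2}P-(\Delta_{r}^{KN})^{\prime}\bigl(r^{2}+Q+\Xi^{2}W^{2}\bigr)-2r\Delta_{r}^{KN},
\end{equation}
where $(\Delta_{r}^{KN})^{\prime}=d\Delta_{r}^{KN}/dr$.

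The second step is to eliminate $W$: from $R^{\prime}=0$ one reads off $\Xi^{2}W^{2}=\Xi^{2}P^{2}/\Delta_{r}^{KN}-(r^{2}+Q)$, and substitution into the derivative condition produces an equation purely in the variables $E$ and $P$,
\begin{equation}
(\Delta_{r}^{KN})^{\prime}\Xi^{2}P^{2}-4r\Xi^{2}\Delta_{r}^{KN}\,EP+2r(\Delta_{r}^{KN})^{2}=0,
\end{equation}
which is a quadratic in $P$ whose two roots supply the sign branches that will become the $\mp\sqrt{\Upsilon_{\Lambda}}$ labels $\alpha,\beta$ of the theorem. The linear relation $L=[(r^{2}+a^{2})E-P]/a$, together with its consequence $W=(r^{2}E-P)/a$, is then fed back into $\Xi^{2}W^{2}=\Xi^{2}P^{2}/\Delta_{r}^{KN}-(r^{2}+Q)$, yielding a second quadratic coupling $E$ and $P$; solving the two quadratics jointly isolates $E$, after which $L$ is recovered from the linear formula.

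The main obstacle is purely algebraic bookkeeping: one must verify that after back-substitution the discriminant of the resulting quadratic in $E$ factors as precisely the polynomial $\Upsilon_{\Lambda}$ displayed in (\ref{ipsilonL}), so that the $\Lambda^{\prime}$-dependent cross-terms between $(\Delta_{r}^{KN})^{\prime}$ and the pieces containing $r^{4}\Lambda^{\prime}$ telescope against the $r^{2}Q$ contributions. I would close the argument with two consistency checks. Setting $\Lambda^{\prime}=0$ collapses $\Xi\to 1$ and $\Upsilon_{\Lambda}\to\Upsilon$, reproducing (\ref{NESPenergyp})-(\ref{AngulMomenNESPher}) of Theorem \ref{EndiamesoORO}. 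Setting $Q=0$ reduces $\Upsilon_{\Lambda}$ to $r^{4}[r(1-r^{3}\Lambda^{\prime})-e^{2}]$, and after the factor $r^{2}$ cancels against the overall prefactor one recovers exactly the equatorial expressions (\ref{marvelOne})-(\ref{marvelTwo}), with the $\alpha,\beta$ labels identified as the prograde/retrograde $\pm$ orientations.
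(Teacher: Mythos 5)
Your proposal takes essentially the same route as the paper: the paper's derivation of Theorem \ref{OroTheoLambda} consists precisely of solving $R^{\prime}(r)=0$ and ${\rm d}R^{\prime}/{\rm d}r=0$ simultaneously at fixed $(r,Q;a,e,\Lambda^{\prime})$ and quoting the compact result, supplemented by the same two limiting checks you perform ($\Lambda^{\prime}=0$ recovering Theorem \ref{EndiamesoORO}, and $Q=0$ recovering (\ref{marvelOne})--(\ref{marvelTwo})). Your explicit elimination via $P=(r^{2}+a^{2})E-aL$ and $W=L-aE$ is a correct way of organizing the algebra the paper leaves implicit: the compact form of $R^{\prime}$, the derivative identity, the quadratic $(\Delta_{r}^{KN})^{\prime}\Xi^{2}P^{2}-4r\Xi^{2}\Delta_{r}^{KN}EP+2r(\Delta_{r}^{KN})^{2}=0$, and your stated limits of $\Upsilon_{\Lambda}$ all check out.
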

    \end{minipage}

};
\end{tikzpicture}%
\end{center}

For vanishing cosmological constant Equations (\ref{GENERALESPHE}),(\ref{LGENERALSPHERE}) reduce to the first integrals Eqns. (\ref{NESPenergyp}) and (\ref{AngulMomenNESPher}) for spherical timelike orbits in Kerr-Newman spacetime.  For vanishing Carter's constant $Q$, Eqns (\ref{GENERALESPHE}),(\ref{LGENERALSPHERE}) reduce to the constants of motion, eqns. (\ref{marvelOne}),(\ref{marvelTwo}), that describe  prograde and retrograde  timelike circular orbits in the equatorial plane of the Kerr-Newman-(anti) de Sitter black hole.

Thus, in Theorem \ref{OroTheoLambda} we have derived the most general solutions for the constants of motion $E,L$ that are associated with the two Killing vectors of the Kerr-Newman-(anti) de Sitter black hole spacetime, which satisfy the conditions for timelike spherical orbits.

To ensure that the smaller square root appearing in the solutions (\ref{GENERALESPHE}),(\ref{LGENERALSPHERE}) is real, we need to impose the condition:
\begin{equation}
\Upsilon_{\Lambda}\geq 0.
\end{equation}

Note from (\ref{ipsilonL}) that $\Upsilon_{\Lambda}$ is quadratic in Carter's constant Q, and its two roots are:
\begin{equation}
Q_{1,2}=\frac{r^2}{2a^2}(-a^2r^2\Lambda^{\prime}+r(r-3)+2e^2\pm \sqrt{\mathfrak{G}_{\Lambda}}),
\end{equation}
where
\begin{equation}
\mathfrak{G}_{\Lambda}\equiv a^4r^4\Lambda^{\prime 2}+2a^2\Lambda^{\prime}r^2(r^2+3r-2e^2)
+r^4+9r^2-6r^3+4e^4-4e^2r(3-r)+4a^2e^2-4a^2r.
\end{equation}
We mention at this point, that $\mathfrak{G}_{\Lambda}$ is a polynomial equation in $r$ that is familiar from the study of circular photon orbits in the equatorial plane around a Kerr-Newman-(anti) de Sitter black hole \cite{ZST}.

Our Theorem \ref{OroTheoLambda} also generalises in a non-trivial way our results for the gravitational frequency shifts in section \ref{grredblue}. Using our solutions for the invariant parameters in (\ref{GENERALESPHE}),(\ref{LGENERALSPHERE}) we can investigate the redshift and blueshift of light emitted by timelike geodesic particles in spherical orbits that are not necessarily confined to the equatorial plane, around a Kerr-Newman-(anti) de Sitter black hole. We can work either with eqn (\ref{gravrbshiftdoppler}) or the corresponding of eqn (\ref{kineticredshift}). Indeed in eqn(\ref{gravrbshiftdoppler}) with $U^r=0$:
\begin{align}
1+z&=\frac{(k^tE-k^{\phi}L+g_{\theta\theta}k^{\theta}U^{\theta})|_e}{(Ek^t-Lk^{\phi}+g_{\theta\theta}k^{\theta}U^{\theta})|_d}\nonumber \\
&=\frac{(E_{\gamma}U^t-L_{\gamma}U^{\phi}+g_{\theta\theta}K^{\theta}U^{\theta})|_e}{(
E_{\gamma}U^t-L_{\gamma}U^{\phi}+g_{\theta\theta}K^{\theta}U^{\theta})|_d}
\label{gravrbshiftdopplerSAORO}
\end{align}
we substitute $E,L$ through relations (\ref{GENERALESPHE}) and (\ref{LGENERALSPHERE}) to obtain:
\begin{equation}
1+z=\frac{(k^tE_{\alpha,\beta}(r,Q;\Lambda^{\prime},a,e)-k^{\phi}L_{\alpha,\beta}(r,Q;\Lambda^{\prime},a,e)+g_{\theta\theta}k^{\theta}U^{\theta})|_e}{(E_{\alpha,\beta}(r,Q;\Lambda^{\prime},a,e)k^t-L_{\alpha,\beta}(r,Q;\Lambda^{\prime},a,e)k^{\phi}+g_{\theta\theta}k^{\theta}U^{\theta})|_d}
\end{equation}
Also the velocity component $U^{\theta}$ of the test massive particle is given by:
\begin{align}
(U^{\theta}(r,\theta,Q;\Lambda^{\prime},a,e)^2&=\Biggl[
(Q+(L_{\alpha,\beta}(r,Q;\Lambda^{\prime},a,e)-aE_{\alpha,\beta}(r,Q;\Lambda^{\prime},a,e))^2-a^2\cos^2\theta)\Delta_{\theta}\nonumber \\
&-\frac{(aE_{\alpha,\beta}(r,Q;\Lambda^{\prime},a,e)\sin^2\theta-L_{\alpha,\beta}(r,Q;\Lambda^{\prime},a,e))^2}{\sin^2\theta}\Biggr]\Big/\rho^4
\end{align}
The impact factor for such more general  orbits (non-equatorial) is computed in section \ref{prosptosiparagon} see equation (\ref{apparentimpPKNadS}).

A thorough investigation will appear in a separate publication.

\subsubsection{Frame-dragging for spherical polar geodesics in Kerr-Newman-de Sitter spacetime}\label{morecosmos}

The generalisation of the effective potential eqn (\ref{effectPotPolarKN}) in the presence of the cosmological constant is:
\begin{equation}
V_{eff.\Lambda}^2=\frac{\Delta_r^{KN}(r^2+K)}{(r^2+a^2)^2}.
\label{effpLambdaPolarSphera}
\end{equation}
From the local extrema of the effective potential (\ref{effpLambdaPolarSphera}) we determine the following first integrals of motion for spherical polar geodesics in Kerr-Newman-de Sitter spacetime:
\begin{align}
&K=\nonumber\\
&\frac{-\Lambda^{\prime}r^3(r^2+a^2)^2-\Lambda^{\prime}r^3(2r^2+a^2)(r^2+a^2)+
2\Lambda^{\prime}r^5(r^2+a^2)+Mr^4+(a^2-e^2)r^3-3Ma^2r^2+ra^2(a^2+e^2)}{Z_{\Lambda}},\\
&E=\frac{r(\Delta_r^{KN})^2}{(r^2+a^2)Z_{\Lambda}},
\end{align}
where
\begin{equation}
Z_{\Lambda}\equiv 2e^2r+r^3-3Mr^2+a^2r+Ma^2+\Lambda^{\prime}r(2r^2+a^2)(r^2+a^2)-2\Lambda^{\prime}r^3(r^2+a^2).
\end{equation}

\tikzstyle{mybox} = [draw=blue, fill=white!20, very thick,
    rectangle, rounded corners, inner sep=10pt, inner ysep=20pt]
\tikzstyle{fancytitle} =[fill=red, text=white]
\begin{center}
\begin{tikzpicture}
\node [mybox] (box){%
    \begin{minipage}{1.0\textwidth}
    \begin{theorem}\label{LambdaUniverseDomin}
 The Lense-Thirring precession $\Delta\phi^{GTR}_{SpherPolar\Lambda}$ per revolution for a polar spherical orbit in Kerr-Newman-de Sitter spacetime is expressed in closed analytic form in terms of Appell's $F_1$ and Lauricella's $F_D$ multivariable hypergeometric functions as follows:
        \begin{align}
&\Delta\phi^{GTR}_{SpherPolar\Lambda}=\frac{4a\Xi^2(r^2+a^2)E}{\Delta_r^{KN}\sqrt{a^4\Lambda/3}}
 \frac{\pi}{2}\frac{1}{\sqrt{z_+}}\frac{1}{\sqrt{-z_{\Lambda}}}
 F_1\left(\frac{1}{2},\frac{1}{2},\frac{1}{2},1,\frac{z_-}{z_+},\frac{z_-}{z_{\Lambda}}\right)\nonumber\\
 &-\frac{4\Xi^2 a E\pi}{2}\frac{1}{1+\frac{a^2\Lambda}{3}z_-}\frac{1}{\sqrt{a^4\Lambda/3}}\frac{1}{\sqrt{z_+}}\frac{1}{\sqrt{-z_{\Lambda}}}
 \frac{1}{\sqrt{1+\eta}}
 F_D\left(\frac{1}{2},\frac{1}{2},\frac{1}{2},1,1,\frac{z_-}{z_+},\frac{z_-}{z_{\Lambda}},
 \frac{-\eta}{-\eta-1}\right).\label{CosmoLenseThirringMeta}
 \end{align}
  \end{theorem}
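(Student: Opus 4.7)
The plan is to evaluate the quarter-period integral $\tfrac{1}{4}\Delta\phi^{GTR}_{SpherPolar\Lambda}$ and multiply by four, following the pattern already used to prove Proposition \ref{protasiEins}, Proposition \ref{protasizwei} and the general non-polar formula (\ref{LTlambdaKNdS}). I would start from the $\mathrm{d}\phi/\mathrm{d}\theta$ expression displayed just before Theorem \ref{OroTheoLambda} and enforce $L=0$, the polar condition required by (\ref{sinthikipoliki}): the $L/\sin^2\theta$ piece vanishes identically, leaving
\[
\frac{\mathrm{d}\phi}{\mathrm{d}\theta} = \frac{a\Xi^2 (r^2+a^2)E}{\Delta_r^{KN}\sqrt{\Theta'}} - \frac{\Xi^2 a E}{\Delta_\theta\sqrt{\Theta'}},
\]
which are exactly the polar specialisations of the first two terms of (\ref{LTlambdaKNdS}).

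Next I change variables to $z=\cos^2\theta$ via $\mathrm{d}\theta=-\mathrm{d}z/[2\sqrt{z(1-z)}]$. With $L=0$ the latitudinal polynomial $\Theta'$ collapses to a quadratic in $z$ with leading coefficient $-a^4\Lambda/3$, factorising as $\Theta'=-\tfrac{a^4\Lambda}{3}(z-z_+)(z-z_\Lambda)$ with $z_+>0$ and $z_\Lambda<0$ (since $\Theta'(0)=Q>0$ while the leading coefficient is negative). Because the polar orbit crosses the symmetry axis, $z$ sweeps the full range $[0,1]$ and the pole-crossing condition forces $z_-=1$, exactly as in the $\Lambda=0$ analysis of Section \ref{asteraspole}. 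Each of the two surviving terms in $\mathrm{d}\phi/\mathrm{d}\theta$ thus becomes an Euler-type integral over $z\in[0,1]$.

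For the first term, after pulling out the prefactors $\sqrt{a^4\Lambda/3}$, $\sqrt{z_+}$ and $\sqrt{-z_\Lambda}$, the integral reduces to
\[
\int_0^1 \frac{\mathrm{d}z}{z^{1/2}(1-z)^{1/2}(1-z/z_+)^{1/2}(1-z/z_\Lambda)^{1/2}},
\]
which by (\ref{frenchAppellIR}) equals $\pi\,F_1(\tfrac12,\tfrac12,\tfrac12,1,1/z_+,1/z_\Lambda)=\pi\,F_1(\tfrac12,\tfrac12,\tfrac12,1,z_-/z_+,z_-/z_\Lambda)$ since $z_-=1$; collecting prefactors and multiplying by four reproduces the first line of (\ref{CosmoLenseThirringMeta}). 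The second term is identical except for the extra rational factor $\Delta_\theta^{-1}=(1+\tfrac{a^2\Lambda}{3}z)^{-1}$, which enters as an additional $(1-x_3\,z)^{-1}$ contribution in the integrand; casting the resulting three-factor integral in the form of the Euler representation (\ref{OloklAnapa}) of Lauricella's $F_D$ with parameters $(\alpha,\beta_1,\beta_2,\beta_3,\gamma)=(\tfrac12,\tfrac12,\tfrac12,1,1)$, and applying the same change-of-argument transformation used in the derivation of (\ref{LTlambdaKNdS}) to rewrite the coefficient $-a^2\Lambda/3$ in the form $-\eta/(-\eta-1)$ with $\eta=-a^2\Lambda z_-/(1+\tfrac{a^2\Lambda}{3}z_-)$, yields the $F_D$ contribution in (\ref{CosmoLenseThirringMeta}).

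The main obstacle I anticipate lies not in the hypergeometric identifications, which are mechanical once (\ref{frenchAppellIR}) and (\ref{OloklAnapa}) are invoked, but in the careful bookkeeping of signs and prefactors: assigning which root of the quadratic $\Theta'|_{L=0}$ is $z_+$ and which is $z_\Lambda$, correctly tracking the $\sqrt{-z_\Lambda}$ that arises from the negative root for $\Lambda>0$, matching the degeneracy $z_-=1$ inherited from the pole-crossing condition, and applying the appropriate hypergeometric transformation to convert the raw coefficient of the $\Delta_\theta^{-1}$ factor into the compact $-\eta/(-\eta-1)$ form displayed in the theorem. Once these matches are verified, multiplication by four gives (\ref{CosmoLenseThirringMeta}).
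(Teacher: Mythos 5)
Your strategy coincides with the paper's: split the polar ${\rm d}\phi/{\rm d}\theta$ of eqn.~(\ref{spLambdageo}) into the $\Delta_r^{KN}$--piece and the $\Delta_{\theta}^{-1}$--piece, pass to $z=\cos^2\theta$, identify the two quarter-oscillation integrals with Euler-type integral representations of $F_1$ and $F_D$, and multiply by four. The difference is in execution. The paper never sets $z_-=1$: it keeps $z_+,z_-,z_{\Lambda}$ as the three roots of the cubic (\ref{latitudinalroots}) (your quadratic $\Theta'|_{L=0}$ together with the root supplied by the $(1-z)$ Jacobian factor), uses the substitution $z=z_-+\xi^2(z_j-z_-)$ with $z_j=0$, reduces the resulting four-variable $F_D$ with a unit argument via (\ref{Fdspecialvalue1}), and only then reaches the displayed arguments $z_-/z_+$, $z_-/z_{\Lambda}$, $-\eta/(-\eta-1)$ through the Pfaff-type transformation lemmas for $F_1$ and $F_D$ quoted in the proof. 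Your direct rescaling over $[0,z_-]$ with immediate use of (\ref{frenchAppellIR}) and (\ref{OloklAnapa}) is a legitimate shortcut and reproduces the first line of (\ref{CosmoLenseThirringMeta}), prefactors included.

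The one place where the plan is too quick is the second term. A direct Euler evaluation yields the third $F_D$ argument $-\tfrac{a^2\Lambda}{3}z_-$ with \emph{no} extra prefactor, whereas the theorem displays $\tfrac{1}{1+\frac{a^2\Lambda}{3}z_-}\tfrac{1}{\sqrt{1+\eta}}\,F_D\bigl(\ldots,\tfrac{-\eta}{-\eta-1}\bigr)$; in the paper those extra factors arise from the $(1-x)^{-\beta}(1-y)^{-\beta'}(1-z)^{-\beta''}$ prefactors of the $F_D$ transformation, with $\beta''=1$ attached to the $\Delta_{\theta}^{-1}$ factor, see (\ref{diomerosfd}). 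So ``the same change-of-argument transformation'' is not a mere relabelling of $-a^2\Lambda/3$: to conclude that your expression coincides with (\ref{CosmoLenseThirringMeta}) you must check that $\tfrac{-\eta}{-\eta-1}=-\tfrac{a^2\Lambda}{3}z_-$ with the precise definition of $\eta$, and that the product of the displayed prefactors equals the power of $(1+\eta)$ dictated by $\beta''=1$. That verification is exactly the sign/prefactor bookkeeping you flag at the end; the proposal is sound, but this step has to be carried out explicitly rather than asserted.
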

    \end{minipage}

};
\end{tikzpicture}%
\end{center}

\begin{proof}
The relevant differential equation is:
\begin{equation}
\frac{{\rm d}\phi}{{\rm d}\theta}=\frac{a\Xi^2}{\Delta_r^{KN}}\left[\frac{(r^2+a^2)E}{\sqrt{\Theta^{\prime}}}\right]-
\frac{\Xi^2 a E}{(1+\frac{a^2\Lambda}{3}z)\sqrt{\Theta^{\prime}}}.
\label{spLambdageo}
\end{equation}
The integration of Eqn(\ref{spLambdageo}) splits into two parts. Integrating the first part yields:
\begin{align}
&\frac{a\Xi^2(r^2+a^2)E}{\Delta_r^{KN}}\int\frac{{\rm d}\theta}{\sqrt{\Theta^{\prime}}}\nonumber \\
&=\frac{\textcolor{red}{-}a\Xi^2(r^2+a^2)E}{\Delta_r^{KN}}\frac{1}{\sqrt{a^4\Lambda/3}}\frac{1}{\sqrt{z_+-z_-}}
\frac{1}{\sqrt{z_{-}-z_{\Lambda}}}\frac{\pi}{2}F_1\left(\frac{1}{2},\frac{1}{2},\frac{1}{2},1,
\frac{z_-}{z_{-}-z_+},\frac{z_-}{z_{-}-z_{\Lambda}}\right)\nonumber\\
&=\frac{\textcolor{red}{-}a\Xi^2(r^2+a^2)E}{\Delta_r^{KN}}\frac{1}{\sqrt{a^4\Lambda/3}}\frac{\pi}{2}
\frac{1}{\sqrt{z_+}}\frac{1}{\sqrt{-z_{\Lambda}}}F_1\left(\frac{1}{2},\frac{1}{2},\frac{1}{2},1,
\frac{z_-}{z_+},\frac{z_-}{z_{\Lambda}}\right).
\label{merosproto}
\end{align}

In deriving the last line in Eqn(\ref{merosproto}) we used the following Lemma for Appell's hypergeometric function $F_1$:
\begin{lemma}
\begin{equation}
\fbox{$\displaystyle
F_1(\alpha,\beta,\beta^{\prime},\gamma,x,y)=(1-x)^{-\beta}(1-y)^{-\beta^{\prime}}
F_1(\gamma-\alpha,\beta,\beta^{\prime},\gamma,\frac{x}{x-1},\frac{y}{y-1}).$}
\end{equation}
\end{lemma}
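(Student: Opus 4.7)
The plan is to prove this Appell $F_1$ transformation identity directly from the Euler integral representation already quoted in the paper as equation (\ref{frenchAppellIR}). Starting from
\begin{equation*}
F_1(\alpha,\beta,\beta^{\prime},\gamma,x,y)=\frac{\Gamma(\gamma)}{\Gamma(\alpha)\Gamma(\gamma-\alpha)}\int_0^1 u^{\alpha-1}(1-u)^{\gamma-\alpha-1}(1-ux)^{-\beta}(1-uy)^{-\beta^{\prime}}\,\mathrm{d}u,
\end{equation*}
valid for $\mathrm{Re}(\gamma)>\mathrm{Re}(\alpha)>0$, I would perform the involution $u=1-v$, $\mathrm{d}u=-\mathrm{d}v$, which simply swaps the roles of $u^{\alpha-1}$ and $(1-u)^{\gamma-\alpha-1}$ in the measure and converts the original unit interval to itself.

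The second step is the algebraic identity
\begin{equation*}
1-(1-v)x=(1-x)\Bigl(1-v\,\tfrac{x}{x-1}\Bigr),
\end{equation*}
and the analogous one with $y$ in place of $x$. Pulling the constant factors $(1-x)^{-\beta}$ and $(1-y)^{-\beta^{\prime}}$ outside the integral leaves
\begin{equation*}
(1-x)^{-\beta}(1-y)^{-\beta^{\prime}}\frac{\Gamma(\gamma)}{\Gamma(\alpha)\Gamma(\gamma-\alpha)}\int_0^1 v^{(\gamma-\alpha)-1}(1-v)^{\gamma-(\gamma-\alpha)-1}\Bigl(1-v\tfrac{x}{x-1}\Bigr)^{-\beta}\Bigl(1-v\tfrac{y}{y-1}\Bigr)^{-\beta^{\prime}}\mathrm{d}v.
\end{equation*}
The third step is simply to recognise the integral on the right-hand side, after relabelling $\alpha\mapsto\gamma-\alpha$, as the Euler integral representation of $F_1(\gamma-\alpha,\beta,\beta^{\prime},\gamma,\tfrac{x}{x-1},\tfrac{y}{y-1})$, since the Beta-function prefactor $\Gamma(\gamma)/[\Gamma(\gamma-\alpha)\Gamma(\alpha)]$ is symmetric in $\alpha\leftrightarrow\gamma-\alpha$. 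This produces the claimed identity verbatim.

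The only genuine subtlety, and thus the sole obstacle worth flagging, is the domain of validity. The integral representation converges absolutely for $|x|<1$, $|y|<1$, whereas the transformed arguments $\tfrac{x}{x-1}$, $\tfrac{y}{y-1}$ need not lie inside the unit bidisc for such $(x,y)$. The resolution is standard: the manipulation is legitimate whenever $\mathrm{Re}(1-ux)>0$ and $\mathrm{Re}(1-uy)>0$ for $u\in[0,1]$, which holds when $x,y$ avoid the cut $[1,\infty)$; then both sides are holomorphic in a common region and the identity extends by analytic continuation to the full domain of definition of $F_1$. I would state this continuation step explicitly rather than grind through the convergence bookkeeping.
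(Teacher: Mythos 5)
Your derivation is correct: the substitution $u=1-v$ in the Euler integral, the factorisation $1-(1-v)x=(1-x)\bigl(1-v\tfrac{x}{x-1}\bigr)$, and the symmetry of the prefactor $\Gamma(\gamma)/[\Gamma(\alpha)\Gamma(\gamma-\alpha)]$ under $\alpha\leftrightarrow\gamma-\alpha$ together give the identity exactly as claimed, and your closing remark on analytic continuation correctly handles the fact that $\tfrac{x}{x-1}$, $\tfrac{y}{y-1}$ may leave the unit bidisc (note that for $x\in(0,1)$ the transformed argument is negative, so the integrand factors stay bounded away from zero and the integral representation itself already defines the right-hand side there). The paper offers no proof of this lemma at all --- it is invoked as a known classical Euler-type transformation of Appell's $F_1$ in the derivation of the frame-dragging formula --- so there is nothing to compare against; your argument is the standard and essentially canonical one, and it supplies a proof the paper omits. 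The only condition worth stating explicitly is $\mathrm{Re}(\alpha)>0$ and $\mathrm{Re}(\gamma-\alpha)>0$, which is needed for the Euler representation on both sides and is satisfied in the paper's application ($\alpha=\tfrac12$, $\gamma=\tfrac32$ or $\gamma=1$).
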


Likewise, using initially the transformation as before: $z=z_-+\xi^2(z_j-z_-)$ and then setting $z_j=0$, integration of the second term gives:
\begin{align}
&\int\frac{-aE\Xi^2}{(1+\frac{a^2\Lambda}{3}z}\frac{{\rm d}\theta}{\sqrt{\Theta^{\prime}}}=\frac{H_1}{\sqrt{a^4\Lambda/3}}\frac{\sqrt{\pi}}{\sqrt{\pi}/2}
F_D\left(\frac{1}{2},\frac{1}{2},\frac{1}{2},\frac{1}{2},1,\frac{3}{2},
\frac{z_{-}-z_j}{z_{-}},\frac{z_{-}-z_j}{z_{-}-z_+},\frac{z_{-}-z_j}{z_{-}-z_{\Lambda}},-\eta\right)\nonumber \\
&\overset{z_j=0}{=}\frac{1}{\sqrt{z_{+}-z_{-}}}\frac{1}{\sqrt{z_{-}-z_{\Lambda}}}
\frac{1}{(1+a^2\frac{\Lambda}{3}z_{-})}\frac{1}{\sqrt{a^4\Lambda/3}}
\frac{\Xi^2 aE}{2}\frac{\sqrt{\pi}}{\sqrt{\pi}/2}
F_D\left(\frac{1}{2},\frac{1}{2},\frac{1}{2},\frac{1}{2},1,\frac{3}{2},1,\frac{z_-}
{z_{-}-z_+},\frac{z_{-}}{z_{-}-z_{\Lambda}},-\eta\right)\nonumber \\
&\overset{(\ref{Fdspecialvalue1})}{=}\frac{1}{\sqrt{z_{+}-z_{-}}}\frac{1}{\sqrt{z_{-}-z_{\Lambda}}}
\frac{1}{(1+a^2\frac{\Lambda}{3}z_{-})}\frac{1}{\sqrt{a^4\Lambda/3}}
\frac{\Xi^2a E\pi}{2}F_D\left(\frac{1}{2},\frac{1}{2},\frac{1}{2},1,1,
\frac{z_-}{z_{-}-z_{+}},\frac{z_{-}}{z_{-}-z_{\Lambda}},-\eta\right)\nonumber\\
&=\frac{1}{(1+a^2\frac{\Lambda}{3}z_{-})}\frac{1}{\sqrt{a^4\Lambda/3}}\frac{\Xi^2aE\pi}{2}
\frac{1}{\sqrt{z_+}}\frac{1}{\sqrt{-z_{\Lambda}}}\frac{1}{\sqrt{1+\eta}} F_D\left(\frac{1}{2},\frac{1}{2},\frac{1}{2},1,1,\frac{z_-}{z_+},
\frac{z_-}{z_{\Lambda}},\frac{-\eta}{-\eta-1}\right),
\label{diomerosfd}
\end{align}
where $H_1=\frac{-(z_j-z_-)(\Xi^2a E/2)}{\sqrt{z_-}\sqrt{z_j-z_-}\sqrt{z_{-}-z_+}\sqrt{z_{-}-z_{\Lambda}}(1+a^2\frac{\Lambda}{3}z_{-})}$. In deriving the last line of (\ref{diomerosfd}) we used the following Lemma for Lauricella's multivariate hypergeometric function $F_D$:
\begin{lemma}
\begin{equation}
\fbox{$\displaystyle
F_D(\alpha,\beta,\beta^{\prime},\beta^{\prime\prime},\gamma,x,y,z)=(1-x)^{-\beta}(1-y)^{-\beta^{\prime}}(1-z)^{-\beta^{\prime\prime}}
F_D(\gamma-\alpha,\beta,\beta^{\prime},\beta^{\prime\prime},\gamma,\frac{x}{x-1},\frac{y}{y-1},\frac{z}{z-1}).$}
\end{equation}
\end{lemma}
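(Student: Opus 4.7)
The plan is to prove this Pfaff-type transformation for Lauricella's $F_D$ of three variables directly from the Euler integral representation already recalled in equation~(\ref{OloklAnapa}), using only the involution $t\mapsto 1-t$ on the interval of integration together with one elementary algebraic identity applied to each linear factor in the integrand.

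First I would specialise (\ref{OloklAnapa}) to $n=3$, writing
\[F_D(\alpha,\beta,\beta',\beta'',\gamma,x,y,z)=\frac{\Gamma(\gamma)}{\Gamma(\alpha)\Gamma(\gamma-\alpha)}\int_0^1 t^{\alpha-1}(1-t)^{\gamma-\alpha-1}(1-xt)^{-\beta}(1-yt)^{-\beta'}(1-zt)^{-\beta''}\,dt,\]
valid for $\mathrm{Re}(\gamma)>\mathrm{Re}(\alpha)>0$. The substitution $t=1-s$ converts the Beta weight $t^{\alpha-1}(1-t)^{\gamma-\alpha-1}$ into $(1-s)^{\alpha-1}s^{\gamma-\alpha-1}$, which is precisely the Euler kernel for $F_D$ with the first parameter swapped from $\alpha$ to $\gamma-\alpha$; this is the mechanism that will produce the replacement $\alpha\mapsto\gamma-\alpha$ on the right-hand side.

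Next I would rewrite each of the three remaining factors by the elementary identity
\[1-x(1-s)=(1-x)\left(1-\tfrac{x}{x-1}s\right),\]
and analogously for $y$ and $z$ (this rests only on $\tfrac{1-x}{x-1}=-1$). This extracts the prefactor $(1-x)^{-\beta}(1-y)^{-\beta'}(1-z)^{-\beta''}$ outside the integral and leaves under the integral sign precisely the kernel appearing in the Euler representation of $F_D(\gamma-\alpha,\beta,\beta',\beta'',\gamma;\tfrac{x}{x-1},\tfrac{y}{y-1},\tfrac{z}{z-1})$. Re-applying (\ref{OloklAnapa}) in the reverse direction then packages the integral back into the $F_D$ on the right of the claimed identity, and the proof is complete.

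The sole subtle point, and really the main obstacle, is the domain of validity. The series convergence condition $|x|,|y|,|z|<1$ for the left-hand side does not force $|x/(x-1)|<1$ on the right (for instance $x\in(\tfrac12,1)$ gives $|x/(x-1)|>1$), so the identity cannot simply be read off everywhere from the two absolutely convergent power series. I would therefore first establish the identity on an open neighbourhood of the origin $(x,y,z)=(0,0,0)$, where both sides admit absolutely convergent series and the integral manipulations above are unambiguous, and then extend the equality to the common domain of holomorphy by analytic continuation in $(x,y,z)$, exactly as is done for the classical Pfaff transformation $F(a,b;c;z)=(1-z)^{-a}F(a,c-b;c;\tfrac{z}{z-1})$ and for its Appell-$F_1$ analogue invoked earlier in the excerpt. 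This continuation is routine; the algebraic core of the argument is immediate from the substitution.
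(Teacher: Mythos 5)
Your proof is correct and complete. For the record, the paper itself supplies no proof of this Lemma at all: it is simply quoted as a known transformation property of $F_D$ and immediately applied in the derivation of Eq.~(\ref{diomerosfd}), so there is no argument in the text to compare yours against. Your route — specialising the Euler representation (\ref{OloklAnapa}) to three variables, applying the involution $t\mapsto 1-s$ to swap the Beta weight exponents $\alpha-1\leftrightarrow\gamma-\alpha-1$ (hence $\alpha\mapsto\gamma-\alpha$), and factoring each kernel via $1-x(1-s)=(1-x)\bigl(1-\tfrac{x}{x-1}s\bigr)$ — is the standard derivation, and the normalising constant $\Gamma(\gamma)/\bigl(\Gamma(\alpha)\Gamma(\gamma-\alpha)\bigr)$ is symmetric under $\alpha\mapsto\gamma-\alpha$, so the repackaging into $F_D(\gamma-\alpha,\ldots)$ goes through exactly as you say. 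Your caveat about the domain is also well placed: the map $x\mapsto x/(x-1)$ does leave the unit polydisc for $x\in(\tfrac12,1)$, so the identity must be established near the origin (or on the domain where the integral representation converges, which requires the $\mathrm{Re}(\alpha)>0$, $\mathrm{Re}(\gamma-\alpha)>0$ conditions stated with (\ref{OloklAnapa})) and then extended by analytic continuation; this is precisely the same bookkeeping as for the classical Pfaff transformation and for the Appell $F_1$ analogue used elsewhere in the paper. Nothing is missing.
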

\end{proof}

The quantities $z_+,z_-,z_{\Lambda},$ of the hypergeometric functions in eqn.(\ref{CosmoLenseThirringMeta}) are roots of the cubic equation:
\begin{align}
&z^3 (\frac{a^4\Lambda}{3})-z^2(\frac{a^2\Lambda}{3}(Q+a^2E^2\Xi^2)-a^2\Xi(1-E^2\Xi))\nonumber \\
&-z((Q+a^2 E^2 \Xi^2)\Xi+a^2+2aE\Xi^2(-aE))+Q=0.
\end{align}

\subsection{Frame-dragging effect for polar non-spherical bound orbits in Kerr-Newman spacetime}\label{nonspherepLTprece}
In this section we will derive novel closed-form expressions for frame-dragging (Lense-Thirring precession) for polar \textit{non-spherical} timelike geodesics in Kerr-Newman spacetime.
Thus we  assume in this section that $\Lambda=L=0$. The relevant differential equation for the calculation of frame-dragging is:
\begin{equation}
\frac{{\rm d}\phi}{{\rm d}r}=\frac{(2ar-ae^2)E}{\Delta^{KN}\sqrt{R}}.
\end{equation}
The quartic radial polynomial $R$ is obtained from $R^{\prime}$ in (\ref{sextic}) for $\Lambda=L=0$.
Using the partial fractions technique  we integrate from
the periastron
distance $r_{P}$ to the apoastron distance $r_{A}$:

We apply the transformation:
\begin{equation}
z=\frac{1}{\omega}\frac{r-\alpha_{\mu+1}}{r-\alpha_{\mu+2}%
}=\frac{\alpha-\gamma}{\alpha-\beta}\frac{r-\beta}{r-\gamma}%
\label{fundtransform}
\end{equation}
and denote the real roots of the radial polynomial $R$ by $\alpha,\beta,\gamma,\delta$, $\alpha>\beta>\gamma>\delta$.
We organise all the roots  in the ascending order of magnitude:
\begin{equation}
\alpha_{\rho}>\alpha_{\sigma}>\alpha_{\nu}>\alpha_{i},
\end{equation}
with the correspondence $\alpha_{\rho}=\alpha_{\mu}=\alpha,\alpha_{\sigma
}=\alpha_{\mu+1}=\beta,\alpha_{\nu}=\alpha_{\mu+2}=\gamma,\alpha_{i}%
=\alpha_{\mu-i},i=1,2,3,\alpha_{\mu-1}=a_{\mu-2}=r_{\pm},\alpha
_{\mu-3}=\delta$, where $r_+,r_-$ denote the radii of the event horizon
and the inner or Cauchy horizon respectively.

We thus compute the following new  exact analytic result for Lense-Thirring precession that a test particle in a non-spherical polar orbit undergoes, in terms of Appell's
hypergeometric function $F_{1}$:

\begin{align}
\Delta\phi_{tpKN}^{GTR}  & =2
\Biggl[
-\frac{\omega^{3/2}}{H_{+}}A_{tpKN}^{+}F_{1}\left(  \frac{3}{2},1,\frac{1}%
{2},2,\kappa_{+}^{t2},\kappa^{\prime2}\right)  \frac{\pi}{2}\nonumber\\
& +\frac{\sqrt{\omega}}{H_{+}}A_{tpKN}^{+}F_{1}\left(  \frac{1}{2}%
,1,\frac{1}{2},1,\kappa_{+}^{t2},\kappa^{\prime2}\right)  \pi\nonumber\\
& -\frac{\omega^{3/2}}{H_{-}}A_{tpKN}^{-}F_{1}\left(  \frac{3}{2},1,\frac
{1}{2},2,\kappa_{-}^{t2},\kappa^{\prime2}\right)  \frac{\pi}{2}\nonumber\\
& +\frac{\sqrt{\omega}}{H_{-}}A_{tpKN}^{-}F_{1}\left(  \frac{1}{2}%
,1,\frac{1}{2},1,\kappa_{-}^{t2},\kappa^{\prime2}\right) \pi
\Biggr]
 \label{LTprecessionKNpolarNonsphere}
\end{align}
where the partial fraction expansion parameters are given by:
\begin{equation}
A_{tpKN}^+=\frac{-r_+2aE+ae^2E}{r_--r_+},\;\;A_{tpKN}^-=\frac{+r_{-}2aE-ae^2E}{r_--r_+}.
\end{equation}
The variables of the hypergeometric functions are given in terms
of the roots
of the quartic and the radii of the horizons by the expressions:
\begin{equation}
\kappa_{\pm}^{t2}:=\frac{\alpha-\beta}{\alpha-\gamma}\frac{r_{\pm}-\gamma
}{r_{\pm}-\beta},\text{ \ }\kappa^{\prime2}:=\frac{\alpha-\beta}{\alpha
-\gamma}\frac{\delta-\gamma}{\delta-\beta},
\end{equation}
while
\begin{align}
H_{\pm}&\equiv\sqrt{(1-E^2)}(\alpha_{\mu+1}-\alpha_{\mu-1})\sqrt{\alpha_{\mu}-\alpha_{\mu+1}}
\sqrt{\alpha_{\mu+1}-\alpha_{\mu-3}}\nonumber \\
&=\sqrt{(1-E^2)}(\beta-r_{\pm})\sqrt{\alpha-\beta}\sqrt{\beta-\delta}.
\end{align}

\subsubsection{Exact calculation of the orbital period in non-spherical polar Kerr-Newman geodesics}\label{troxiakiPeriodPol}
In this section we will compute a novel exact formula for the orbital period for a test particle in a non-spherical polar Kerr-Newman geodesic.
The relevant differential equation is:
\begin{equation}
\frac{cdt}{dr}=\frac{r^{2}+a^{2}}{\Delta^{KN} \sqrt{R}}E(r^{2}+a^{2})-\frac{%
a^{2}E\sin ^{2}\theta }{\sqrt{R}},
\label{diaforikiperiod}
\end{equation}
and we integrate from periapsis to apoapsis and back to periapsis. Indeed, our analytic computation yields:

\begin{eqnarray}
ct &\equiv &cP_{KN}=\frac{E\beta ^{2}2\frac{GM}{c^{2}}}{\sqrt[2]{1-E^{2}}\sqrt[2]{%
\alpha -\gamma }\sqrt[2]{\beta -\delta }}
\Biggl[
\pi F_{1}\left( \frac{1}{2},2,\frac{1}{2}%
,1,\omega ,\kappa ^{2}\right)  \notag \\
&&-\frac{2\omega \gamma }{\beta }\frac{\pi}{2}
F_{1}\left( \frac{3}{2},2,\frac{1}{2},2,\omega ,\kappa ^{2}\right)  +\frac{\gamma ^{2}\omega ^{2}}{\beta ^{2}}\frac{3\pi}{8}F_{1}\left( \frac{5}{2},2,\frac{1}{2},3,\omega ,\kappa
^{2}\right)
\Biggr]
\notag \\
&&+2E(a^{2}-e^2)\frac{GM}{c^{2}}\sqrt{\frac{\omega }{1-E^{2}}}\frac{1}{\sqrt[2]{%
(\alpha -\beta )(\beta -\delta )}}F(1/2,1/2,1,\kappa ^{2})\pi  \notag \\
&&+\frac{4EGM}{c^{2}}\sqrt{\frac{\omega }{1-E^{2}}}\frac{\beta }{\sqrt[2]{%
(\alpha -\beta )(\beta -\delta )}}%
\Biggl[
\pi F_{1}\left( \frac{1}{2},1,\frac{1}{2}%
,1,\omega ,\kappa ^{2}\right)  \notag \\
&&-\frac{\omega \gamma }{\beta }F_{1}\left( \frac{3}{2},1,\frac{1}{2}%
,2,\omega ,\kappa ^{2}\right) \frac{\pi}{2}%
\Biggr]
\notag \\
&&+\frac{4EG2M}{c^{2}}\sqrt{\frac{\omega }{1-E^{2}}}\frac{1}{\sqrt[2]{%
(\alpha -\beta )(\beta -\delta )}}F(1/2,1/2,1,\kappa ^{2})\pi  \notag \\
&&-\frac{4EG2M}{c^{2}}%
\Biggl[
-\frac{\omega ^{3/2}A_{+}^{KN}}{H_{+}}F_{1}\left( \frac{3}{2},1,\frac{1}{2}%
,2,\kappa _{+}^{2},\mu ^{2}\right) \frac{\pi}{2}  \notag \\
&&+\frac{\omega ^{1/2}A_{+}^{KN}}{H_{+}}F_{1}\left( \frac{1}{2},1,\frac{1}{2}%
,1,\kappa _{+}^{2},\mu ^{2}\right) \pi
-\frac{\omega ^{3/2}A_{-}^{KN}}{H_{-}}F_{1}\left( \frac{3}{2},1,\frac{1}{2}%
,2,\kappa _{-}^{2},\mu ^{2}\right) \frac{\pi}{2%
}  \notag \\
&&+\frac{\omega ^{1/2}A_{-}^{KN}}{H_{-}}F_{1}\left( \frac{1}{2},1,\frac{1}{2}%
,1,\kappa _{-}^{2},\mu ^{2}\right)\pi
\Biggr]
\notag \\
&&+2E\frac{GM}{c^2}\Biggl[
-\frac{\omega ^{3/2}(4e^2r^{\prime}_{+}-e^4)}{(-2\sqrt{1-a^2-e^2})H_{+}}F_{1}\left( \frac{3}{2},1,\frac{1}{2}%
,2,\kappa _{+}^{2},\mu ^{2}\right) \frac{\pi}{2%
}  \notag \\
&&+\frac{\omega ^{1/2}(4e^2r^{\prime}_{+}-e^4)}{(-2\sqrt{1-a^2-e^2})H_{+}}F_{1}\left( \frac{1}{2},1,\frac{1}{2}%
,1,\kappa _{+}^{2},\mu ^{2}\right) \pi
\notag \\
&&-\frac{\omega ^{3/2}(e^4-4e^2r^{\prime}_-)}{(-2\sqrt{1-a^2-e^2})H_{-}}F_{1}\left( \frac{3}{2},1,\frac{1}{2}%
,2,\kappa _{-}^{2},\mu ^{2}\right) \frac{\pi}{2%
}  \notag \\
&&+\frac{\omega ^{1/2}(e^4-4e^2r^{\prime}_-)}{(-2\sqrt{1-a^2-e^2})H_{-}}F_{1}\left( \frac{1}{2},1,\frac{1}{2}%
,1,\kappa _{-}^{2},\mu ^{2}\right) \pi
\Biggr]\notag \\
&&+\frac{-a^{2}E\frac{GM}{c^{2}}}{\sqrt[2]{Q}}
\Biggl[%
\sin (\varphi )F_{1}\left( \frac{1}{2},\frac{1}{2},\frac{1}{2},\frac{3}{2}%
,\sin ^{2}\varphi ,\kappa ^{2\prime }\sin ^{2}\varphi \right) +\Biggl\{
-\sin (\varphi )F_{1}\left( \frac{1}{2},\frac{1}{2},\frac{1}{2},\frac{3}{2}%
,\sin ^{2}\varphi ,\kappa ^{2\prime }\sin ^{2}\varphi \right) \notag \\
&&+\sin (\varphi )F_{1}\left( \frac{1}{2},\frac{1}{2},-\frac{1}{2},\frac{3}{2%
},\sin ^{2}\varphi ,\kappa ^{2\prime }\sin ^{2}\varphi \right)
\Biggr\}
\times \frac{1}{\kappa ^{2\prime }}
\Biggr]
\label{PolarPeriodOrbitKN}
\end{eqnarray}%
where
\begin{equation}
\varphi =am\left( \sqrt[2]{Q}\frac{4}{\sqrt[2]{1-E^{2}}}\frac{1}{\sqrt[2]{%
\alpha -\gamma }}\frac{1}{\sqrt[2]{\beta -\delta }}\frac{\pi }{2}F\left(
\frac{1}{2},\frac{1}{2},1,\kappa ^{2}\right) ,\frac{a^{2}(1-E^{2})}{Q}\right)
\label{gustavCJACOBI}
\end{equation}

\begin{equation}
A_{+}^{KN}:=-\frac{a^{2}+e^2-2r_{+}^{\prime }}{r_{-}^{\prime }-r_{+}^{\prime }}%
,\qquad A_{-}^{KN}:=-\frac{-a^{2}-e^2+2r_{-}^{\prime }}{r_{-}^{\prime }-r_{+}^{\prime
}},
\end{equation}
and the moduli (variables) of the hypergeometric function of Appell are
given by:
\begin{eqnarray}
\mu ^{2} &=&\kappa ^{2}=\frac{\alpha -\beta }{\alpha -\gamma }\frac{\delta
-\gamma }{\delta -\beta }  \notag \\
\kappa _{\pm }^{2} &=&\frac{\alpha -\beta }{\alpha -\gamma }\frac{r_{\pm
}^{\prime }-\gamma }{r_{\pm }^{\prime }-\beta },\;\kappa^{2\prime}=\frac{a^2(1-E^2)}{Q}.
\end{eqnarray}
Also $\omega=\frac{\alpha-\beta}{\alpha-\gamma},\;$ $H_{\pm}\equiv\sqrt{(1-E^2)}(\beta-r^{\prime}_{\pm})\sqrt{\alpha-\beta}\sqrt{\beta-\delta}$ and $r^{\prime}_{\pm}=\frac{r_{\pm}}{\frac{GM}{c^2}}$ are the dimensionless horizon radii.
In Eqn.(\ref{gustavCJACOBI}) $am(u,k^{2\prime})$ denotes the Jacobi amplitude function \cite{CGJampli}.

The last three terms in (\ref{PolarPeriodOrbitKN}) arise after integrating the second term on the right hand side of (\ref{diaforikiperiod}). Details of this particular angular  integration are given in appendix A.1, pages 1801-1802 in \cite{GVKperiapsisadvLTprecession}.

For zero electric charge, $e=0$, Eqn.(\ref{PolarPeriodOrbitKN}) reduces correctly to Eqn.(33) in \cite{GVKperiapsisadvLTprecession} for the case of a Kerr black hole.
The Lense-Thirring period for a non-spherical polar timelike geodesic in Kerr-Newman BH geometry, is defined in terms of the Lense-Thirring precession Eq. (\ref{LTprecessionKNpolarNonsphere}) and its orbital period  Eq. (\ref{PolarPeriodOrbitKN}) as follows:
\begin{equation}
{\rm LTP}:=\frac{2\pi P_{KN}}{\Delta \phi ^{GTR}_{tpKN}}.
\end{equation}

\begin{table}[tbp] \centering
\begin{tabular}
[c]{cccccccc}\hline
Star &   $Q/M^2$ & $E$  & $e/M$  & $a/M$ & $\Delta\phi_{tpKN}^{GTR}$ & $P_{KN}({\rm yr})$ & $LTP({\rm yr})$\\ \hline
S2   & $5693.30424$ & $0.999979485$ & $0.33$ &$0.52$ & $3.14284\frac
{\operatorname{arcsec}}{revol.}$ &$15.15$ & $6.25\times 10^6$\\
S2  & $5693.30424$ & $0.999979485$ & $0.11$ & $0.52$ & $3.14295\frac
{\operatorname{arcsec}}{revol.}$ & $15.15$ &$6.25\times 10^6$\\
S2  & $5693.30424$ & $0.999979485$ & $0$ & $0.52$ & $3.14297\frac
{\operatorname{arcsec}}{revol.}$ & $15.15$ &$6.25\times 10^6$\\
S2 & $5273.53220$ & $0.999979145$ & $0.11$ & $0.52$& $3.5261\frac
{\operatorname{arcsec}}{revol.}$ & $14.78$ & $5.43\times 10^6$\\
S2 & $5273.53220$ & $0.999979145$ & $0.33$ & $0.52$& $3.52596\frac
{\operatorname{arcsec}}{revol.}$ & $14.78$ & $5.43\times 10^6$\\
\end{tabular}
\caption{Lense-Thirring precession for the star $S2$ in the central arcsecond
of the galactic centre, using the exact formula
$(\ref{LTprecessionKNpolarNonsphere})$
. We assume a central galactic Kerr-Newman black hole with mass
$M_{\rm BH}=4.06\times 10^6M_{\odot}$ and that the orbit of  $S2$  star is a timelike non-spherical polar Kerr-Newman geodesic. The computation of the orbital period of the star S2 was performed using the exact result in eqn.(\ref{PolarPeriodOrbitKN}). The periapsis and apoapsis distances for the star $S2$ in the first three lines are respectively $1.82\times 10^{15}{\rm cm}$  and $2.74\times 10^{16}{\rm cm}$. In the last two lines these distances are: $r_P=1.68\times 10^{15}{\rm cm}$, $r_A=2.71\times 10^{16}{\rm cm}$. }\label{AstroS2FDeapolar1SgraA*}
\end{table}

\begin{table}[tbp] \centering
\begin{tabular}
[c]{cccccccc}\hline
Star &   $Q/M^2$ & $E$  & $e/M$  & $a/M$ & $\Delta\phi_{tpKN}^{GTR}$ & $P_{KN}({\rm yr})$ & $LTP({\rm yr})$\\ \hline
S14   & $5321.06355$ & $0.999988863$ & $0.11$ &$0.9939$ & $6.64977\frac
{\operatorname{arcsec}}{revol.}$ &$37.88$ & $7.38\times 10^6$\\
S14  & $5321.06355$ & $0.999988863$ & $0$ & $0.9939$ & $6.64981\frac
{\operatorname{arcsec}}{revol.}$ & $37.88$ &$7.38\times 10^6$\\
S14  & $4204.76359$ & $0.999987653$ & $0.11$ & $0.9939$ & $9.47145\frac
{\operatorname{arcsec}}{revol.}$ & $32.45$ & $4.44\times 10^6$\\
S14  & $4204.76359$ & $0.999987653$ & $0$ & $0.9939$ & $9.47151\frac
{\operatorname{arcsec}}{revol.}$ & $32.45$ & $4.44\times 10^6$\
\end{tabular}
\caption{Lense-Thirring precession for the star $S14$ in the central arcsecond
of the galactic centre, using the exact formula
$(\ref{LTprecessionKNpolarNonsphere})$
. We assume a central galactic Kerr-Newman black hole with mass
$M_{\rm BH}=4.06\times 10^6M_{\odot}$ and that the orbit of  $S14$  star is a timelike non-spherical polar Kerr-Newman geodesic. The computation of the orbital period of the star S14 was performed using the exact result in eqn.(\ref{PolarPeriodOrbitKN}). The periapsis and apoapsis distances for the star $S14$ in the first two lines are respectively $1.64\times 10^{15}{\rm cm}$  and $5.22\times 10^{16}{\rm cm}$, whereas in the last two lines are predicted to be $1.29\times 10^{15}{\rm cm}$ and $4.73\times 10^{16}{\rm cm}$ respectively.}\label{AstroS14FDLTeapolar1SgraA*}
\end{table}

We now proceed to calculate using our exact analytic solutions and assuming a central galactic Kerr-Newman black hole, the Lense-Thirring effect and the corresponding Lense-Thirring period for the observed stars S2,S14 for various values of the Kerr parameter and the electric charge of the central black hole-see Tables \ref{AstroS2FDeapolar1SgraA*}-\ref{AstroS14FDLTeapolar1SgraA*}. The choice of values for the invariant parameters $Q,E$ is restricted by requiring that the predictions of the theory for the periapsis, apoapsis distances and the period $P_{KN}$ are in agreement with the orbital data from observations in \cite{Eisenhauer}. We note here the orbital data from observations that we use to constrain our theory. For S2 the eccentricity measured is  $\textbf{e}_{S2}=0.8760\pm 0.0072$, its orbital period $P_{S2}(yr)=15.24\pm 0.36$, and the semimajor axis $\textbf{a}_{S2}(\rm arcsec)=0.1226\pm 0.0025$ \cite{Eisenhauer}. The corresponding orbital measurements for S14,  reported in  \cite{Eisenhauer} $\textbf{e}_{S14}=0.9389\pm 0.0078$, $P_{S14}(yr)=38.0\pm 5.7$, $\textbf{a}_{S14}(\rm arcsec)=0.225\pm 0.022$.
The theoretical prediction for the eccentricity of the S2 orbit for the choice  of values for the invariant parameters $Q,E$ in the last two lines in Table \ref{AstroS2FDeapolar1SgraA*} is $\textbf{e}_{S2}^{theory}=0.883$ . It is consistent with the upper allowed value by experiment. On the other hand the prediction of the exact theory, for the choice of values for the constants of motion  $Q,E$ in the first three lines in Table \ref{AstroS2FDeapolar1SgraA*}, for the orbital eccentricity is $\textbf{e}_{S2}^{theory}=0.8755$ in precise agreement with experiment.
We observe that the contribution of the electric charge on the frame-dragging precession is small.

\subsection{Periapsis advance for non-spherical polar timelike Kerr-Newman orbits}\label{periapsisKN}

In this section we shall investigate the pericentre advance for a non-spherical bound Kerr-Newman polar orbit, assuming a vanishing cosmological constant. We shall first obtain the exact solution for the orbit and then derive an exact closed form formula for the periastron precession. The relevant differential equation is:
\begin{equation}
\int^r\frac{{\rm d}r}{\sqrt{R}}=\pm\int^{\theta}\frac{{\rm d}\theta}{\sqrt{\Theta}}
\label{abelastronomy}
\end{equation}
Now applying the transformation (\ref{fundtransform}) on the left hand side of (\ref{abelastronomy}) yields:
\begin{eqnarray}
\int\frac{{\rm d}r}{\sqrt{R}}&=\frac{1}{\frac{GM}{c^2}}\int\frac{{\rm d}r}
{\sqrt{(1-E^2)(-)(r-\alpha)(r-\beta)(r-\gamma)(r-\delta)}}\nonumber\\
&=\frac{1}{\frac{GM}{c^2}}\int\frac{{\rm d}z\sqrt{\omega}}{\sqrt{(1-E^2)}\sqrt{(\alpha-\beta)(\beta-\delta)}\sqrt{z(1-z)(1-k^2z)}},
\end{eqnarray}
where
\begin{equation}
k^2=\frac{\alpha-\beta}{\alpha-\gamma}\frac{\delta-\gamma}{\delta-\beta}\equiv\omega\frac{\delta-\gamma}{\delta-\beta}.
\end{equation}
The roots $\alpha,\beta,\gamma,\delta$ of the quartic polynomial equation:
\begin{equation}
R=((r^2+a^2)E)^2-(r^2+a^2+e^2-2r)(r^2+Q+a^2 E^2)=0,
\end{equation}
are organised as $\alpha_{\mu}>\alpha_{\mu+1}>\alpha_{\mu+2}>\alpha_{\mu-3}$, and we have the correspondence $\alpha=\alpha_{\mu},\beta=\alpha_{\mu+1},\gamma=\alpha_{\mu+2},\delta=\alpha_{\mu-3}$.

By setting, $z=x^2$, we obtain the equation
\begin{equation}
\int\frac{{\rm d}x}{\sqrt{(1-x^2)(1-k^2x^2)}}=\frac{\sqrt{1-E^2}\sqrt{\alpha-\beta}\sqrt{\beta-\delta}}{2\sqrt{\omega}}\int\frac{{\rm d}\theta}{\sqrt{\Theta}}.
\end{equation}
Using the idea of \textit{inversion} for the orbital elliptic integral on the left-hand side we obtain
\begin{equation}
x={\rm sn}\left(\frac{\sqrt{1-E^2}\sqrt{\alpha-\gamma}\sqrt{\beta-\delta}}{2}\int\frac{{\rm d}\theta}{\sqrt{\Theta}},k^2\right).
\end{equation}
In terms of the original variables we derive the equation
\begin{equation}
r=\frac{\beta-\gamma\frac{\alpha-\beta}{\alpha-\gamma}{\rm sn}^2\left(\frac{\sqrt{1-E^2}\sqrt{(\alpha-\gamma)(\beta-\delta)}}{2}\int\frac{{\rm d}\theta}{\sqrt{\Theta}},k^2\right)}{1-\frac{\alpha-\beta}{\alpha-\gamma}{\rm sn}^2\left(\frac{\sqrt{1-E^2}\sqrt{(\alpha-\gamma)(\beta-\delta)}}{2}\int\frac{{\rm d}\theta}{\sqrt{\Theta}},k^2\right)}\frac{GM}{c^2}.
\label{orbitalsolKN}
\end{equation}
Equation (\ref{orbitalsolKN}) represents the first exact solution that describes the motion of a test particle in a polar non-spherical bound orbit in the Kerr-Newman field in terms of Jacobi's sinus amplitudinus elliptic function ${\rm sn}$. The function ${\rm sn}^2(y,k^2)$ has period $2K(k^2)=\pi F\left(\frac{1}{2},\frac{1}{2},1,k^2\right)$ which is also a period of $r$. This means that after one complete revolution the angular integration has to satisfy the equation
\begin{equation}
-\int\frac{{\rm d}\theta}{\sqrt{\Theta}}=\frac{1}{\sqrt{Q}}\int\frac{\partial \Psi}{\sqrt{1-k^{\prime 2}\sin^2\Psi}}=\frac{4}{\sqrt{1-E^2}}\frac{1}{\sqrt{\alpha-\gamma}}\frac{1}{\sqrt{\beta-\delta}}
\frac{\pi}{2}F\left(\frac{1}{2},\frac{1}{2},1,k^2\right),
\label{periaadvapolos}
\end{equation}
where the latitude variable $\Psi:=\pi/2-\theta$ has been introduced. Equation (\ref{periaadvapolos}) can be rewritten as
\begin{equation}
\int\frac{{\rm d}x^{\prime}}{\sqrt{(1-x^{\prime 2})(1-k^{\prime 2}x^{\prime 2})}}=\sqrt{Q}\frac{4}{\sqrt{1-E^2}}\frac{1}{\sqrt{\alpha-\gamma}}\frac{1}{\sqrt{\beta-\delta}}
\frac{\pi}{2}F\left(\frac{1}{2},\frac{1}{2},1,k^2\right)
\label{exactanguastronomy}
\end{equation}
and
\begin{equation}
k^{\prime 2}:=\frac{a^2(1-E^2)}{Q}.
\end{equation}
Also $x^{\prime2}=\sin^2\Psi=\cos^2\theta.$ Equation (\ref{exactanguastronomy}) determines the exact amount that the angular integration satisfies after a complete radial oscillation for a bound non-spherical polar orbit in KN spacetime.
Now using the latitude variable $\Psi$ the change in latitude after a complete radial oscillation leads to the following exact expression for the periastron advance for a test particle in a non-spherical polar Kerr-Newman orbit, assuming a vanishing cosmological constant, in terms of Jacobi's amplitude function and Gau\ss's hypergeometric function

\begin{align}
\Delta\Psi^{\rm GTR}_{{\rm KN}}&=\Delta\Psi-2\pi\nonumber\\
&={\rm am}\left(\sqrt{Q}\frac{4}{\sqrt{1-E^2}}\frac{1}{\sqrt{\alpha-\gamma}}
\frac{1}{\sqrt{\beta-\delta}}\frac{\pi}{2}F\left(\frac{1}{2},\frac{1}{2},1,
\kappa^2\right),\frac{a^2(1-E^2)}{Q}\right)-2\pi.
\label{PolarKNperiapsis}
\end{align}
The Abel-Jacobi's amplitude ${\rm am}(m,k^{\prime 2})$ is the function that inverts the elliptic integral
\begin{equation}
\int_0^{\Psi}\frac{\partial \Psi}{\sqrt{1-k^{\prime 2}\sin^2\Psi}}=m
\end{equation}

We compute with the aid of (\ref{PolarKNperiapsis}) the periapsis advance for the stars S2 and S14 assuming that they orbit in a timelike non-spherical polar Kerr-Newman geodesic. Our results are displayed in Tables \ref{AstroS2eapolar1SgraA*} and \ref{AstroS14eapolar2SgraA*}. We observe that the effect of $e$ is small.

We also compute with the aid of the following exact formula for the periapsis advance for an equatorial non-circular timelike geodesic in the Kerr-Newman spacetime, first derived in \cite{GRGKRANIOTIS}:
\begin{equation}
\delta_P^{teKN}:=\Delta\phi_{teKN}^{GTR}-2\pi,
\label{deltaequatopericentre}
\end{equation}
where
\begin{align}
\Delta\phi_{teKN}^{GTR}  & =2
\Biggl[
-\frac{\omega^{3/2}}{H_{+}}A_{teKN}^{+}F_{1}\left(  \frac{3}{2},1,\frac{1}%
{2},2,\kappa_{+}^{t2},\kappa^{\prime2}\right)  \frac{\pi}{2}\nonumber\\
& +\frac{\sqrt{\omega}}{H_{+}}A_{teKN}^{+}F_{1}\left(  \frac{1}{2}%
,1,\frac{1}{2},1,\kappa_{+}^{t2},\kappa^{\prime2}\right)  \pi\nonumber\\
& -\frac{\omega^{3/2}}{H_{-}}A_{teKN}^{-}F_{1}\left(  \frac{3}{2},1,\frac
{1}{2},2,\kappa_{-}^{t2},\kappa^{\prime2}\right)  \frac{\pi}{2}\nonumber\\
& +\frac{\sqrt{\omega}}{H_{-}}A_{teKN}^{-}F_{1}\left(  \frac{1}{2}%
,1,\frac{1}{2},1,\kappa_{-}^{t2},\kappa^{\prime2}\right)  \pi
\Biggr]
\nonumber\\
& +\frac{2\sqrt{\omega}L}{\sqrt{(1-E^{2})(\alpha-\beta)(\beta-\delta)}%
}F\left(  \frac{1}{2},\frac{1}{2},1,\kappa^{\prime2}\right)  \pi\nonumber\\
& \label{RelativiPeriastronPrec}%
\end{align}
the pericentre-shift for the stars S2 and S14  for various values for the spin and charge of the central black hole. The derivation of the formula (\ref{RelativiPeriastronPrec}) as well as the definition of the coefficients $A_{teKN}^{\pm},H_{\pm}$ can be found in \cite{GRGKRANIOTIS},pp 24-26.

By performing this calculation, we gain a more complete appreciation of the effect of the electric charge of the rotating
galactic black hole (we assume that the KN solution describes the
curved spacetime geometry around SgrA*) on this observable. We
also assume that the angular momentum axis of the orbit is
co-aligned with the spin axis of the black hole and that the
$S-$stars can be treated as neutral test particles i.e. their orbits are  timelike non-circular equatorial Kerr-Newman geodesics. Our results are displayed in Tables \ref{AstroS2equato1aSgraA*}-\ref{AstroS14equa4SgraA*}.
The choice of values for the constants of motion $L,E$ is restricted by requiring that the predictions of the theory for the periapsis apoapsis distances and the orbital period are in agreement with published orbital data from observations in \cite{Eisenhauer}.
It is evident in this case that the value of electric charge plays a significant role in the value of the pericentre-shift as opposed to the results in Tables  \ref{AstroS2eapolar1SgraA*} and \ref{AstroS14eapolar2SgraA*} especially for moderate values of the spin of the black hole.

We note at this point that a more precise analysis would involve the calculation of relativistic periapsis advance for more general timelike non-spherical orbits, inclined non-equatorial and non-polar in the KN(a)dS spacetime, which is beyond the scope of the current publication. Such a generalisation is important because the observed high eccentricity orbits of S-stars are such that:  neither of the S stars orbits is  equatorial nor polar. Such a general analysis will be a subject of a future publication \footnote{For non-spherical non-polar orbits with orbital inclination $0^{\degree}<i<90^{\degree}$ one expects that the resulting periapsis advance has a value between those calculated using formula (\ref{PolarKNperiapsis}) and (\ref{RelativiPeriastronPrec})(assuming the three orbital configurations, polar,non-polar and equatorial have the same eccentricity and semimajor axis and fixed values of the black hole parameters)}.

\begin{table}[tbp] \centering
\begin{tabular}
[c]{cccccccc}\hline
Star &   $Q/M^2$ & $E$  & $e/M$  & $a/M$ & $\Delta\Psi^{{\rm GTR}}_{{\rm KN}}$ & Periapsis $r_P$ & Apoapsis $r_A$\\ \hline
S2   & $5693.30424$ & $0.999979485$ & $0.11$ &$0.9939$ & $682.512\frac
{\operatorname{arcsec}}{revol.}$ & $1.82\times 10^{15}{\rm cm}$ & $2.74\times 10^{16}{\rm cm}$\\
S2  & $5693.30424$ & $0.999979485$ & $0.11$ & $0.52$ & $682.533\frac
{\operatorname{arcsec}}{revol.}$ & $1.82\times 10^{15}{\rm cm}$ & $2.74\times 10^{16}{\rm cm}$\\
\end{tabular}
\caption{Periastron precession for the star $S2$ in the central arcsecond
of the galactic centre, using the exact formula
$(\ref{PolarKNperiapsis})$
. We assume a central galactic Kerr-Newman black hole with mass
$M_{\rm BH}=4.06\times 10^6M_{\odot}$ and that the orbit of  $S2$  star is a timelike non-spherical polar Kerr-Newman geodesic.}\label{AstroS2eapolar1SgraA*}
\end{table}

\begin{table}[tbp] \centering
\begin{tabular}
[c]{cccccccc}\hline
Star &   $Q/M^2$ & $E$  & $e/M$  & $a/M$ & $\Delta\Psi^{{\rm GTR}}_{{\rm KN}}$ & Periapsis $r_P$ & Apoapsis $r_A$\\ \hline
S14   & $5321.06355$ & $0.999988863$ & $0.11$ &$0.9939$ & $730.351\frac
{\operatorname{arcsec}}{revol.}$ & $1.64\times 10^{15}{\rm cm}$ & $5.22\times 10^{16}{\rm cm}$\\
S14  & $5321.06355$ & $0.999988863$ & $0.11$ & $0.52$ & $730.376\frac
{\operatorname{arcsec}}{revol.}$ & $1.64\times 10^{15}{\rm cm}$ & $5.22\times 10^{16}{\rm cm}$\\
\end{tabular}
\caption{Periastron precession for the star $S14$ in the central arcsecond
of the galactic centre, using the exact formula
$(\ref{PolarKNperiapsis})$
. We assume a central galactic Kerr-Newman black hole with mass
$M_{\rm BH}=4.06\times 10^6M_{\odot}$ and that the orbit of  $S14$  star is a timelike non-spherical polar Kerr-Newman geodesic.}\label{AstroS14eapolar2SgraA*}
\end{table}

A few further comments are in order. The values of the hypothetical electric charge of the central Kerr-Newman black hole have been chosen so that the surrounding spacetime represents a black hole, i.e.
the singularity surrounded by the horizon, the electric charge and
angular
momentum $J$ must be restricted by the relation:
\begin{equation}
\fbox{$\dfrac{GM}{c^{2}}\geq\left[  \left(  \dfrac{J}{Mc}\right)  ^{2}%
+\dfrac{Ge^{2}}{c^{4}}\right]  ^{1/2}$}\Leftrightarrow
\end{equation}%
\begin{align}
\dfrac{GM}{c^{2}}  & \geq\left[  a^{2}+\dfrac{Ge^{2}}{c^{4}}\right]
^{1/2}\Rightarrow\\
e^{2}  & \leq GM^{2}(1-a^{\prime2})\label{FundEcGMa}%
\end{align}
where in the last inequality $a^{\prime}=\frac{a}{GM/c^2}$ denotes a dimensionless Kerr
parameter.

The values of the electric charge used for instance in Tables \ref{AstroS14eapolar2SgraA*},\ref{AstroS2equato1aSgraA*} , for the SgrA*
galactic black hole correspond to magnitudes:
\begin{align}
e  &
=0.85\sqrt{6.6743\times10^{-8}}4.06\times10^{6}\times1.9884\times
10^{33}\mathrm{esu=}1.77\times10^{36}\mathrm{esu}\Leftrightarrow
5.94\times10^{26}%
\operatorname{C}%
,\nonumber\\
e  & =0.11\sqrt{6.6743\times10^{-8}}4.06\times10^{6}\times1.9884\times
10^{33}\mathrm{esu=}2.29\times10^{35}\mathrm{esu}\Leftrightarrow
7.65\times10^{25}%
\operatorname{C}%
.\nonumber\\
&
\end{align}
Concerning these tentative values for the electric charge $e$ we
used in applying our exact solutions for the case of SgrA* black
hole we note that their likelihood is debatable: There is an expectation that the
electric charge trapped in the galactic nucleous will not likely
reach so high values as the ones close to the extremal values
predicted in (\ref{FundEcGMa}) that allow the avoidance of a naked
singularity. However, more precise statements on the electric
charge's magnitude of the galactic black hole or its upper bound
will only be reached once the relativistic effects predicted in
this work are measured and a comparison of the theory we developed
with experimental data will take place \footnote{In this regard,
we also mention that the author in \cite{LIORIO}, under the
assumption that the curved geometry surrounding the massive object
in the Galactic Centre is a Reissner-Nordstr\"{o}m (RN) spacetime,
obtained an upper bound of $e\lesssim 3.6\times10^{27}$C. This
upper bound does not distinguish yet between a RN black hole
scenario and a RN naked singularity scenario.}.
\begin{table}[tbp] \centering
\begin{tabular}
[c]{cccccccc}\hline
Star &   $L/M$ & $E$  & $e/M$  & $a/M$ & $\delta^{teKN}_{\rm p}$ & Periapsis $r_P$ & Apoapsis $r_A$\\ \hline
S2   & $75.4539876$ & $0.999979485$ & $0.11$ &$0.9939$ & $670.565\frac
{\operatorname{arcsec}}{revol.}$ & $1.82\times 10^{15}{\rm cm}$ & $2.74\times 10^{16}{\rm cm}$\\
S2  & $75.4539876$ & $0.999979485$ & $0.025$ & $0.9939$ & $671.876\frac
{\operatorname{arcsec}}{revol.}$ & $1.82\times 10^{15}{\rm cm}$ & $2.74\times 10^{16}{\rm cm}$\\
S2  & $75.4539876$ & $0.999979485$ & $0.025$ & $0.52$ & $677.571\frac
{\operatorname{arcsec}}{revol.}$ & $1.82\times 10^{15}{\rm cm}$ & $2.74\times 10^{16}{\rm cm}$\\
S2  & $75.4539876$ & $0.999979485$ & $0.1$ & $0.52$ & $676.5\frac
{\operatorname{arcsec}}{revol.}$  & $1.82\times 10^{15}{\rm cm}$ & $2.74\times 10^{16}{\rm cm}$\\
S2  & $75.4539876$ & $0.999979485$ & $0.85$ & $0.52$ & $595.1\frac
{\operatorname{arcsec}}{revol.}$& $1.82\times 10^{15}{\rm cm}$ & $2.74\times 10^{16}{\rm cm}$\\
S2  & $72.6190898$ & $0.999979145$ & $0$ & $0.9939$ & $725.018\frac
{\operatorname{arcsec}}{revol.}$& $1.68\times 10^{15}{\rm cm}$ & $2.71\times 10^{16}{\rm cm}$\\
S2  & $72.6190898$ & $0.999979145$ & $0.11$ & $0.9939$ & $723.525\frac
{\operatorname{arcsec}}{revol.}$& $1.68\times 10^{15}{\rm cm}$ & $2.71\times 10^{16}{\rm cm}$\\
S2  & $72.6190898$ & $0.999979145$ & $0$ & $0.52$ & $731.407\frac
{\operatorname{arcsec}}{revol.}$& $1.68\times 10^{15}{\rm cm}$ & $2.71\times 10^{16}{\rm cm}$\\
S2  & $72.6190898$ & $0.999979145$ & $0.11$ & $0.52$ & $728.176\frac
{\operatorname{arcsec}}{revol.}$& $1.68\times 10^{15}{\rm cm}$ & $2.71\times 10^{16}{\rm cm}$\\
\end{tabular}
\caption{Periastron precession for the star $S2$ in the central arcsecond
of the galactic centre, using the exact formula
Eqn. (\ref{RelativiPeriastronPrec}). We assume a central galactic Kerr-Newman black hole with mass
$M_{\rm BH}=4.06\times 10^6M_{\odot}$ and that the orbit of the  star $S2$ is a timelike non-circular equatorial Kerr-Newman geodesic. }\label{AstroS2equato1aSgraA*}
\end{table}

\begin{table}[tbp] \centering
\begin{tabular}
[c]{cccccccc}\hline
Star &   $L/M$ & $E$  & $e/M$  & $a/M$ & $\delta^{teKN}_{\rm p}$ & Periapsis $r_P$ & Apoapsis $r_A$\\ \hline
S14   & $72.9456205$ & $0.999988863$ & $0.11$ &$0.9939$ & $717.128\frac
{\operatorname{arcsec}}{revol.}$ & $1.64\times 10^{15}{\rm cm}$ & $5.22\times 10^{16}{\rm cm}$\\
S14  & $72.9456205$ & $0.999988863$ & $0.025$ & $0.9939$ & $718.531\frac
{\operatorname{arcsec}}{revol.}$ & $1.64\times 10^{15}{\rm cm}$ & $5.22\times 10^{16}{\rm cm}$\\
S14  & $72.9456205$ & $0.999988863$ & $0.11$ & $0.52$ & $723.432\frac
{\operatorname{arcsec}}{revol.}$ & $1.64\times 10^{15}{\rm cm}$ & $5.22\times 10^{16}{\rm cm}$\\
S14  & $72.9456205$ & $0.999988863$ & $0.33$ & $0.52$ & $711.595\frac
{\operatorname{arcsec}}{revol.}$  & $1.64\times 10^{15}{\rm cm}$ & $5.22\times 10^{16}{\rm cm}$\\
S14  & $72.9456205$ & $0.999988863$ & $0.85$ & $0.52$ & $636.568\frac
{\operatorname{arcsec}}{revol.}$& $1.64\times 10^{15}{\rm cm}$ & $5.22\times 10^{16}{\rm cm}$\\
S14 & $64.8441485$ & $0.999987653$ & $0$ & $0.9939$ & $907.686\frac
{\operatorname{arcsec}}{revol.}$& $1.29\times 10^{15}{\rm cm}$ & $4.73\times 10^{16}{\rm cm}$\\
S14 & $64.8441485$ & $0.999987653$ & $0.11$ & $0.9939$ & $905.812\frac
{\operatorname{arcsec}}{revol.}$& $1.29\times 10^{15}{\rm cm}$ & $4.73\times 10^{16}{\rm cm}$\\
S14 & $64.8441485$ & $0.999987653$ & $0$ & $0.52$ & $916.661\frac
{\operatorname{arcsec}}{revol.}$& $1.29\times 10^{15}{\rm cm}$ & $4.73\times 10^{16}{\rm cm}$\\
S14 & $64.8441485$ & $0.999987653$ & $0.11$ & $0.52$ & $914.786\frac
{\operatorname{arcsec}}{revol.}$& $1.29\times 10^{15}{\rm cm}$ & $4.73\times 10^{16}{\rm cm}$\\
\end{tabular}
\caption{Periastron precession for the star $S14$ in the central arcsecond
of the galactic centre, using the exact analytic formula
(\ref{deltaequatopericentre}),(\ref{RelativiPeriastronPrec}) for  different values of the electric
charge and the spin of the galactic black hole. We assume a central black hole mass
$M_{\rm BH}=4.06\times 10^6M_{\odot}$ and that the orbit of the  star $S14$ is a timelike non-circular equatorial Kerr-Newman geodesic.}\label{AstroS14equa4SgraA*}
\end{table}

\subsection{Periapsis advance for non-spherical polar timelike Kerr-Newman -de Sitter orbits }\label{periapsisLambdaKNpolar}

In this section we are going to derive a new closed form expression for the pericentre-shift of a test particle in a timelike non-spherical polar Kerr-Newman-de Sitter geodesic.

After one complete revolution the angular integration has to satisfy the equation:
\begin{align}
\int\frac{{\rm d}\theta}{\sqrt{\Theta^{\prime}}}&=2\int_{r_A}^{r_P}\frac{{\rm d}r}{\sqrt{R^{\prime}}}=2\frac{\sqrt{\omega}}{\sqrt{\frac{\Lambda}{3}}H}
\Biggl[-F_D\left(\frac{1}{2},\frac{1}{2},\frac{1}{2},\frac{1}{2},1,\kappa^2,\lambda^2,\mu^2\right)
\pi\nonumber\\
&+\omega F_D\left(\frac{3}{2},\frac{1}{2},\frac{1}{2},\frac{1}{2},2,\kappa^2,\lambda^2,\mu^2\right)
\frac{\pi}{2}\Biggr]
\label{balancePolar}
\end{align}
For computing the radial hyperelliptic integral in (\ref{balancePolar})\footnote{The sextic polynomial $R^{\prime}$ is obtained by setting $\mu=1$ and $L=0$ in (\ref{sextic}).} in closed analytic form in terms of Lauricella's multivariable hypergeometric function $F_D$, we apply the transformation \cite{GVKperiapsisadvLTprecession}:
\begin{equation}
z^{\prime}=\frac{\alpha_{\mu-1}-\alpha_{\mu+1}}{\alpha_{\mu}-\alpha_{\mu+1}}\frac{r-\alpha_{\mu}}{r-\alpha_{\mu-1}}
\end{equation}
The roots of the sextic radial polynomial are organised as follows:
\begin{equation}
\alpha_{\nu}>\alpha_{\mu}>\alpha_{\rho}>\alpha_i,
\end{equation}
where $\alpha_{\nu}=\alpha_{\mu-1},\alpha_{\rho}=\alpha_{\mu+1}=r_P,\alpha_{\mu}=r_A,
\alpha_i=\alpha_{\mu+i+1},i=\overline{1,3}$.
Also we define:
\begin{equation}
H\equiv\sqrt{(\alpha_{\mu}-\alpha_{\mu+1})(\alpha_{\mu}-\alpha_{\mu+2})(\alpha_{\mu}-
\alpha_{\mu+3})(\alpha_{\mu}-\alpha_{\mu+4})},
\end{equation}
\begin{equation}
\omega:=\frac{\alpha_{\mu}-\alpha_{\mu+1}}{\alpha_{\mu-1}-\alpha_{\mu+1}}.
\end{equation}

The integral on the left of Eqn.(\ref{balancePolar}) is an elliptic integral of the form:
\begin{equation}
\int\frac{{\rm d}\theta}{\sqrt{\Theta^{\prime}}}=
-\frac{1}{2}\int\frac{{\rm d}z}{\sqrt{z}\sqrt{a^4\frac{\Lambda}{3}(z-z_{\Lambda})(z-z_+)(z-z_-)}},
\label{AbelUmkehr}
\end{equation}

Inverting the elliptic integral for $z$ we obtain:
\begin{align}
z=\frac{-\beta_1}{\omega_1 {\rm sn}^2\left(2\frac{\sqrt{\omega}}{\sqrt{\frac{\Lambda}{3}}H}
\Biggl[-F_D\left(\frac{1}{2},\mbox{\boldmath${\beta}$},1,\mathbf{x}\right)
\pi+\omega F_D\left(\frac{3}{2},\mbox{\boldmath${\beta}$},2,\mathbf{x}\right)
\frac{\pi}{2}\Biggr]
\frac{\sqrt{\omega_1}\sqrt{\delta_1-\beta_1}\sqrt{\alpha_1-\beta_1}2a^2}{2\omega_1}\sqrt{\frac{\Lambda}{3}},\;\varkappa^2\right)-1}.
\end{align}
Equivalently the change in latitude after a complete radial oscillation leads to the following exact novel expression for the periastron advance for a test particle in an non-spherical polar Kerr-Newman-de Sitter orbit:
\begin{align}
&\theta=\arccos{\pm\sqrt{z}}=\nonumber\\
&\cos^{-1}\left(\pm \sqrt{\frac{-\beta_1}{\omega_1 {\rm sn}^2\left(2\frac{\sqrt{\omega}}{\sqrt{\frac{\Lambda}{3}}H}
\Biggl[-F_D\left(\frac{1}{2},\mbox{\boldmath${\beta}$},1,\mathbf{x}\right)
\pi+\omega F_D\left(\frac{3}{2},\mbox{\boldmath${\beta}$},2,\mathbf{x}\right)
\frac{\pi}{2}\Biggr]
\frac{\sqrt{\delta_1-\beta_1}\sqrt{\alpha_1-\beta_1}a^2}{\sqrt{\omega_1}}\sqrt{\frac{\Lambda}{3}},\varkappa^2\right)-1}}\right),
\label{remarkablePeriapsisKNdS}
\end{align}
where:
\begin{equation}
\mbox{\boldmath${\beta}$}\equiv\left(\frac{1}{2},\frac{1}{2},\frac{1}{2}\right),
\;\;\mathbf{x}\equiv\left(\kappa^2,\lambda^2,\mu^2\right).
\end{equation}
Also the Jacobi modulus $\varkappa$ of the Jacobi's sinus amplitudinous elliptic function in formula (\ref{remarkablePeriapsisKNdS}), for the periapsis advance that a non-spherical polar orbit undergoes in the Kerr-Newman-de Sitter spacetime, is given in terms of the roots of the angular elliptic integral by:
\begin{equation}
\varkappa^2=\frac{\omega_1 \delta_1}{\delta_1-\beta_1}=\frac{\alpha_1-\beta_1}{\alpha_1}\frac{\delta_1}{\delta_1-\beta_1}.
\end{equation}
The roots $z_{\Lambda},z_+,z_-$ appearing in (\ref{AbelUmkehr}) are roots of the polynomial equation:
\begin{align}
&z^3(\frac{a^4\Lambda}{3})-z^2\frac{a^2\Lambda}{3}[Q+(L-aE)^2\Xi^2]+a^2\Xi z^2(1-E^2\Xi)\nonumber\\
&-z\{[Q+(L-aE)^2\Xi^2]\Xi+a^2+2aE\Xi^2(L-aE)\}+Q=0,
\label{latitudinalroots}
\end{align}
after setting \footnote{We have the correspondence $\alpha_1=z_+,\beta_1=z_-,\delta_1=z_{\Lambda}$.} $L=0$.
Also the variables of the hypergeometric function $F_D$ are:

\begin{align}
\kappa^2&=\omega\frac{\alpha_{\mu-1}-\alpha_{\mu+2}}{\alpha_{\mu}-\alpha_{\mu+2}}=
\frac{\alpha-\beta}{r_{\Lambda}^1-\beta}\frac{r_{\Lambda}^1-\gamma}{\alpha-\gamma}\nonumber \\
\lambda^2&=\omega\frac{\alpha_{\mu-1}-\alpha_{\mu+3}}{\alpha_{\mu}-\alpha_{\mu+3}}=
\frac{\alpha-\beta}{r_{\Lambda}^1-\beta}\frac{r_{\Lambda}^1-\delta}{\alpha-\delta}\\
\mu^2&=\omega\frac{\alpha_{\mu-1}-\alpha_{\mu+4}}{\alpha_{\mu}-\alpha_{\mu+4}}=
\frac{\alpha-\beta}{r_{\Lambda}^1-\beta}\frac{r_{\Lambda}^1-r_{\Lambda}^2}{\alpha-r_{\Lambda}^2}\nonumber
\end{align}

\subsection{Computation of first integrals for spherical timelike geodesics in Kerr-Newman spacetime}

The equations determining the timelike orbits of constant radius in KN spacetime are:
\begin{align}
R&=r^4+(2Mr-e^2)(\eta_Q+(\xi-a)^2)+r^2(a^2-\xi^2-\eta_Q)-a^2\eta_Q-r^2\frac{\Delta^{KN}}{E^2}=0,
\label{crad1}\\
\frac{\partial R}{\partial r}&=4r^3+2M(\eta_Q+(\xi-a)^2)+2r(a^2-\xi^2-\eta_Q)-2r\frac{\Delta^{KN}}{E^2}-\frac{r^2}{E^2}(2r-2M)=0,
\label{consrad2}
\end{align}
where we define:
\begin{equation}
\xi\equiv L/E,\;\;\eta_Q\equiv \frac{Q}{E^2}.
\end{equation}
Equations (\ref{crad1})-(\ref{consrad2}) can be combined to give:
\begin{align}
&3r^4+a^2(r^2-\frac{re^2}{M})-\eta_Q(r^2-a^2-\frac{re^2}{M})-\frac{r^2}{E^2}(3r^2-4rM+a^2+e^2)
\nonumber\\
&+\frac{re^2}{ME^2}(2r^2-3Mr+a^2+e^2)-\frac{2e^2r^3}{M}=(r^2-\frac{e^2r}{M})\xi^2,\\
&r^4+\eta_Q(a^2-Mr+\frac{e^2r}{M})-\frac{r^2}{E^2}r(r-M)-Mra^2\nonumber\\
&+\frac{re^2}{ME^2}(2r-3Mr+a^2+e^2)-\frac{2e^2r^3}{M}=Mr(\xi^2-2a\xi)-\frac{e^2r}{M}\xi^2.
\end{align}

These equations can be solved for $\xi$ and $\eta_Q$. Thus, eliminating $\eta_Q$ between them, we obtain:
\begin{align}
&a^2(r-M)\xi^2-2aM(r^2-a^2-\frac{e^2 r}{M})\xi\nonumber \\
&-\left\{(r^2+a^2)[r(r^2+a^2)-M(3r^2-a^2)+2e^2 r-\frac{a^2e^2}{M}]+
\frac{2a^2e^2r}{M}(r-\frac{e^2}{M})+\frac{a^2e^4}{M^2}-\frac{(\Delta^{KN})^2}{E^2}\right\}
=0.
\end{align}
We find that the solution of this quadratic equation is given by:
\begin{equation}
\xi=\frac{M(r^2-a^2-\frac{e^2r}{M})\pm r\Delta^{KN}\sqrt{1-\left(1-\frac{M}{r}\right)\frac{1}{E^2}-
\frac{(e^2-2rM)e^2(r-M)a^2}{rM^2(\Delta^{KN})^2}-
\frac{(r-M)(r^2+a^2)a^2e^2}{M (\Delta^{KN})^2r^2}}}{a(r-M)}.
\label{MotionsphericalXI}
\end{equation}
The parameter $\eta_Q$ is then determined from equation:
\begin{align}
&-\eta_Q\left(r^2-a^2-\frac{re^2}{M}\right)=-3r^4-a^2\left(r^2-\frac{re^2}{M}\right)
+\frac{r^2}{E^2}(3r^2-4rM+a^2+e^2)\nonumber\\
&-\frac{re^2}{ME^2}(2r^2-3Mr+a^2+e^2)+\frac{2e^2r^3}{M}+
\left(r^2-\frac{e^2 r}{M}\right)\xi^2.
\label{MotionZweiCarterSpherical}
\end{align}

For zero electric charge $e=0$, Equations (\ref{MotionsphericalXI}) and (\ref{MotionZweiCarterSpherical}) reduce correctly to the corresponding equations for the first integrals of motion in Kerr spacetime \cite{Chandrasekhar}:
\begin{align}
\xi&=\frac{M(r^2-a^2)\pm r\Delta\sqrt{(1-\frac{1}{E^2}(1-\frac{M}{r}))}}{a(r-M)},\\
\eta_Q a^2(r-M)&=\frac{r^3}{M}[4a^2M-r(r-3M)^2]-\frac{2r^3M}{r-M}
\Delta[1\pm\sqrt{1-\frac{1}{E^2}(1-\frac{M}{r})}]\nonumber\\
&+\frac{r^2}{E^2}[r(r-2M)^2-a^2M].
\end{align}

\subsection{The apparent impact factor for more general orbits}\label{prosptosiparagon}
The apparent impact parameter $\Phi$ for the Kerr-Newman-(anti) de Sitter black hole can also be computed in the case in which the considered orbits depart from the equatorial plane and therefore $\theta\not=\pi/2$. Again, we compute this quantity from the $k^{\mu}k_{\mu}=0$ relation just taking into account its maximum character, i.e., that $k^r=0$. Our calculation yields:
\begin{equation}
\Phi_{\gamma}=\frac{-[a\Xi^2(r^2+a^2)-a\Xi^2\Delta_r^{KN}]\pm\sqrt{\Xi^2\Delta_r^{KN}[\Xi^2 r^4+\mathcal{Q}_{\gamma}(a^2-\Delta_r^{KN})]}}{-a^2\Xi^2+\Xi^2\Delta_r^{KN}}.
\label{apparentimpPKNadS}
\end{equation}
Our exact expression  (\ref{apparentimpPKNadS}) for the apparent impact parameter in the KN(a)dS spacetime, for zero cosmological constant ($\Lambda=0$) and zero electric charge ($e=0$), reduces to eqn.(59) (the apparent impact parameter for the Kerr black hole) in \cite{HERRERA}.
Also for zero value for Carter's constant $\mathcal{Q}_{\gamma}$ equation (\ref{apparentimpPKNadS}) reduces to eqn.(\ref{phiparagontas}).

\section{Conclusions}

In this work using the Killing-vector formalism and the associated first integrals we computed the redshift and blueshift of photons that are emitted by geodesic massive particles and travel along null geodesics towards a distant observer-located at a finite distance from the KN(a)dS black hole.
As a concrete example we calculated analytically the redshift and blueshift experienced by photons emitted by massive objects orbiting the Kerr-Newman-(anti) de Sitter black hole in equatorial and circular orbits, and following null geodesics towards a distant observer.

In addition and extending previous results in the literature we calculated in closed analytic form firstly, the frame-dragging that experience test particles in non-equatorial spherical timelike orbits in KN and KNdS spacetimes in terms of generalised hypergeometric functions of Appell and Lauricella.  We also derived new exact results for the frame-dragging, pericentre-shift and orbital period for timelike non-spherical polar geodesics in Kerr-Newman spacetime and applied them for the computation of the corresponding relativistic effects for the orbits of stars S2 and S14 in the central arcsecond of SgrA*, assuming the Galactic centre supermassive black hole is a Kerr-Newman black hole for various values of the Kerr parameter and electric charge. Secondly, we computed in closed analytic the periapsis advance for timelike non-spherical polar orbits in Kerr-Newman and Kerr-Newman de Sitter spacetimes. In the Kerr-Newman case, the pericentre-shift is expressed in terms of Jacobi's amplitude function and Gau$\ss$ hypergeometric function, while in the Kerr-Newman-de Sitter the periapsis-shift is expressed in an elegant way in terms of Jacobi's sinus amplitudinus elliptic function $\rm {sn}$ and Lauricella's hypergeometric function $F_D$ with three-variables.

We also computed the first integrals of motion for non-equatorial Kerr-Newman and Kerr-Newman-de Sitter geodesics of constant radius. We achieved that by solving the conditions for timelike spherical orbits in KNdS and KN spacetimes. We derived new elegant compact forms for the parameters (constants of motion) of these orbits. Our results are culminated in Theorems \ref{EndiamesoORO} and \ref{OroTheoLambda}. These expressions together with the analytic equation for the apparent impact factor we derived in this work-eqn (\ref{apparentimpPKNadS}), can be used to derive closed form expressions for the redshift/blueshift of the emitted photons from test particles in such non-equatorial constant radius orbits in Kerr-Newman and Kerr-Newman (anti) de Sitter spacetimes. A thorough analysis  will be a task for the future \footnote{Another interesting application of the theory of the frequency shift of the photons developed in this paper, is the possibility to focus on the measurements of the special class of the principal null congruences (PNC) photons as was shown for PNC photons emitted from sources on equatorial circular orbits around a Kerr naked singularity in \cite{KerrNakedZS}. The significance of the PNC photon trajectories lies in the fact that they mold themselves to the spacetime curvature in such a way that, if $C_{\alpha\beta\gamma\delta}$ is the Weyl conformal tensor and $^*C_{\alpha\beta\gamma\delta}$ is its dual, then $C_{\alpha\beta\gamma[\delta}k_{\epsilon]}k^{\beta}k^{\gamma}=0$ and $^*C_{\alpha\beta\gamma[\delta}k_{\epsilon]}k^{\beta}k^{\gamma}=0$ \cite{MTWschwarzBuch}.}. Such a future endeavour will also involve the computation of the redshift/blueshift of the emitted photons for realistic values of the first integrals of motion associated with the observed orbits of S-stars (the emitters) such as those of Section \ref{periapsisKN}, especially when the first measurements of the pericentre-shift of S2 will take place.
The ultimate aim of course is to determine in a consistent way the parameters of the supermassive black hole that resides at the Galactic centre region SgrA*.

It will also be interesting to investigate the effect of a massive scalar field on the orbit of $S2$   star and in particular on its redshift and periapsis advance by combining  the results of this work and the exact solutions of the Klein-Gordon-Fock (KGF) equation on the KN(a)dS and KN black hole backgrounds in terms of  Heun functions produced in \cite{Kraniotis1} (see also \cite{KraniotisDirac}) . This research will be the theme of a future publication \footnote{An initial study of such hypothetical scalar effects  has been performed  by Gravity Collaboration for the Kerr background and in solving approximately the KGF equation for the case in which the Compton wavelength of the scalar field is much larger than the gravitational radius of the black hole \cite{scalar}.}.

The fruitful synergy of theory and experiment in this fascinating research field will lead to the identification of the resident of the Milky Way's Galactic centre region and will provide an important test of General Relativity at the strong field regime.

\begin{acknowledgement}
I thank Dr. G. Kakarantzas for discussions. During the last stages of this work, the research was co-financed by Greece and the
European Union (European Social Fund - ESF) through
the Operational Programme "Human Resources Development,
Education and Lifelong Learning 2014-2020"
in the context of the project "Scalar fields in Curved
Spacetimes: Soliton Solutions, Observational Results and
Gravitational Waves" (MIS 5047648). I thank Prof. Z.Stuchl\'{\i}k  and Dr. P. Slan\'{y} for useful correspondence. I am grateful to the referees for their very constructive comments.
\end{acknowledgement}

\appendix

\section{Linking the equatorial circular radii for emitter \& detector}\label{app}

In this Appendix and for the case of a Kerr-Newman black hole we will show how the radii of the emitter and detector can be linked for circular equatorial orbits through the constants of motion $L_{\gamma},E_{\gamma}$.  The procedure was followed in \cite{HERRERA} for the Kerr black hole. Indeed, from the fact that the first integrals of motion $L_{\gamma},E_{\gamma}$ and hence the apparent impact parameter $\Phi$ are preserved along the whole trajectory followed by the photons, the latter quantity is the same when evaluated either at the emitter or detector positions, rendering the following relation $\Phi_e=\Phi_d$. For circular and equatorial orbits the maximised impact parameter for a Kerr-Newman black hole is given by the expression:
\begin{equation}
\Phi_{\gamma}=\frac{-a(2Mr-e^2)\pm r^2\sqrt{r^2+a^2+e^2-2Mr}}{r^2+e^2-2Mr}.
\end{equation}
Its preservation provides the following equation that must be satisfied by the radius of a circular equatorial observer:
\begin{align}
&(r^2-2Mr+e^2)\nonumber \\
&\times(r^4+r^2(a^2-\Phi_e)+2Mr(\Phi_e-a)^2-e^2(\Phi_e-a)^2)=0
\label{ferrarikraniotis}
\end{align}
The roots of the quartic equation in (\ref{ferrarikraniotis}) can be obtained either by the Ferrari algorithm or in a more elegant way by the use of the Weierstra$\ss$ functions making use of the addition theorem for points on the elliptic curve \cite{GRGKRANIOTIS}:
\begin{align}
\alpha & =\frac{1}{2}\frac{\wp^{\prime}(-x_{1}/2+\omega)-\wp^{\prime}(x_{1}%
)}{\wp(-x_{1}/2+\omega)-\wp(x_{1})},\label{maxweierstrass}\\
\beta & =\frac{1}{2}\frac{\wp^{\prime}(-x_{1}/2+\omega+\omega^{\prime}%
)-\wp^{\prime}(x_{1})}{\wp(-x_{1}/2+\omega+\omega^{\prime})-\wp(x_{1})},\label{weirg3}\\
\gamma & =\frac{1}{2}\frac{\wp^{\prime}(-x_{1}/2+\omega^{\prime})-\wp^{\prime
}(x_{1})}{\wp(-x_{1}/2+\omega^{\prime})-\wp(x_{1})}\label{weirstathec},\\
\delta & =\frac{1}{2}\frac{\wp^{\prime}(-x_{1}/2)-\wp^{\prime}(x_{1})}%
{\wp(-x_{1}/2)-\wp(x_{1})}\label{fourth},
\end{align}
where the point $x_{1}$ is defined by the equation:
\begin{equation}
a^{2}-\Phi^{2}=-6\wp(x_{1}),
\end{equation}
and $\omega,\omega^{\prime}$ denotes the half-periods of the
elliptic function $\wp$. The equations
\begin{equation}
2(a-\Phi)^{2}=4\wp^{\prime}(x_1),-3\wp^2(x_1)+g_2=-e^2(\Phi-a)^2
\end{equation}
determine the Weierstra$\ss$  invariants $(g_2,g_3)$ with the
result:
\begin{align}
g_{2}  & =\frac{1}{12}(a^{2}-\Phi^{2})^{2}-e^{2}(\Phi-a)^{2},\\
g_{3}  & =-\frac{1}{216}(a^{2}-\Phi^{2})^{3}-\frac{1}{4}(a-\Phi)^{4}%
-e^{2}(\Phi-a)^{2}\left(  \frac{a^{2}-\Phi^{2}}{6}\right)  .
\end{align}
The maximum root provides the circular radius of the detector.
For zero electric charge, $e=0$ (i.e. Kerr black hole) the detector radius reduces to the result obtained in \cite{HERRERA}:
\begin{align}
r_d&=\sqrt{\frac{\Phi_e-a}{3}}\Biggl[\Biggl(-\sqrt{27M^2(\Phi_e-a)}\nonumber\\
&+\sqrt{27M^2(\Phi_e-a)+(\Phi_e+a)^3}\Biggr)^{\frac{1}{3}}\nonumber\\
&+(\Phi_e+a)\Biggl(-\sqrt{27M^2(\Phi_e-a)}\nonumber\\
&+\sqrt{27M^2(\Phi_e-a)+(\Phi_e+a)^3}\Biggr)^{-\frac{1}{3}}\Biggr].
\end{align}
%
%




\end{document}